\newtheorem{thm}{Theorem}[section]
\newtheorem{co}[thm]{Corollary}
\newtheorem{lem}[thm]{Lemma}
\newdefinition{rmk}{Remark}[section]
\newdefinition{defi}{Definition}[section]
\newproof{pf}{Proof}
\newproof{pot}{Proof of Theorem \ref{thm2}}
\begin{document}

\begin{frontmatter}



\dochead{}

\title{Hamiltonian Paths in $C-$shaped Grid Graphs}


\author{Fatemeh Keshavarz-Kohjerdi and Alireza Bagheri
 }

\address{Department of Computer Engineering \& IT, Amirkabir University of Technology, Tehran, Iran}
\address
{fatemeh.keshavarz@aut.ac.ir
}
\address
{Corresponding author: ar\_bagheri@aut.ac.ir}

\begin{abstract}
We study the Hamiltonian path problem in $C-$shaped grid graphs, and
present the necessary and sufficient conditions for the existence of
a Hamiltonian path between two given vertices in these graphs. We
also give a linear-time algorithm for finding a Hamiltonian path
between two given vertices of a $C-$shaped grid graph, if it exists.
\end{abstract}

\begin{keyword}
Grid graph \sep Hamiltonian path \sep $C-$shaped grid graph\sep NP-complete.


\end{keyword}

\end{frontmatter}


\section{Introduction}\label{IntroSect}
One of the well-known NP-complete problems in graph theory is the Hamiltonian path problem; i.e., finding a simple path in the graph such that every vertex visits exactly once \cite{GJ:CAI}.
The two-dimensional integer grid $G^\infty$ is an infinite
undirected graph in which vertices are all points of the plane with
integer coordinates and two vertices are connected by an edge if and
only if the Euclidean distance between them is equal to $1$. A grid
graph $G_{g}$ is a finite vertex-induced subgraph of the
two-dimensional integer grid $G^\infty$. A solid grid graph is a
grid graph without holes. A rectangular grid graph $R(m,n)$ is the
subgraph of $G^\infty$ (the infinite grid graph) induced by $V(R)=
\{v \ |\ 1 \leq v_{x}\leq m, \ 1\leq v_{y}\leq n\}$, where $v_{x}$
and $v_{y}$ are $x$ and $y$ coordinates of $v$, respectively. A
$C-$shaped grid graph $C(m,n,k,l)$ is a rectangular grid graph
$R(m,n)$ such that a rectangular subgraph $R(k,l)$ is removed from
it while $R(m,n)$ and $R(k,l)$ have exactly one border side in
common, where $k,l\geq 1$ and $m,n>1$ (see Fig. \ref{fig:non0}(c)).
In this paper, we only focus on the results on grid graphs. There
are some results on Hamiltonian path for other classes of graphs
which we do not mention here, see \cite{8, 1} for more details. \par In
\cite{IPS:HPIGG}, Itai \textit{et al.} proved that the Hamiltonian
path problem for general grid graphs, with or without specified
endpoints, is NP-complete. They showed that the problem for
rectangular grid graphs can be solved in linear time. Chen
\textit{et al.} \cite{CST:AFAFCHPIM} gave a parallel algorithm for
the problem in mesh architecture. Lenhart and Umans \cite{LU:HCISGG}
gave a polynomial-time algorithm for finding Hamiltonian cycles in
solid grid graphs. Their algorithm runs in $O(n^{4})$ time. Also,
Salman \cite{s1} introduced a family of grid graphs, that is,
alphabet grid graphs, and determined classes of alphabet grid graphs
that contain Hamiltonian cycles. In \cite{kb:hpiscogg}, the authors
proposed a linear-time algorithm for the Hamiltonian path problem
for some small classes of grid graphs, namely $L-$alphabet,
$C-$alphabet, $E-$alphabet, and $F-$alphabet grid graphs. In
\cite{991}, necessary and sufficient conditions for the existence of
a Hamiltonian path in $L-$shaped grid graphs have been studied.
$L-$alphabet and $C-$alphabet grid graphs considered in
\cite{kb:hpiscogg} are special cases of $L-$shaped and $C-$shaped
grid graphs, respectively. Some other
results about grid graphs are investigated in \cite{3w, vyf:hpotgg, Imnrx:hcihgg, 98,99,E,CT:HPOGG,wqz}.\par In this paper, we obtain necessary
and sufficient conditions for the existence of a Hamiltonian path
between two given vertices in $C-$shaped grid graphs, which are a
special type of solid grid graphs. Also, we show that a Hamiltonian
path in this graph can be found in linear time. Since the
Hamiltonian path problem for solid grid graphs is open, thus solving
the problem for special cases can be considered as the first
attempts to solve the problem in solid grid graphs.  Moreover, this
problem has many applications such as 
\begin{enumerate}
\item In the problem of embedding a graph in a given
grid \cite{SGS:SDORIG}, the first step is to recognize if there are
enough rooms in the host grid for the guest graph. If the guest
graph is a path, then the problem makes relation to the well-known
longest path and Hamiltonian path problems. If we would like to see
if a given solid grid graph has a Hamiltonian path we may reach to
the problem of finding a Hamiltonian path between two given
vertices.
\item In the offline exploration problem \cite{IKKL}, a
mobile robot with limited sensor should visit every cell in a
known cellular room without obstacles in order to explore it and
return to start point such that the number of multiple cell visits is
small. In this problem, let the vertices correspond to the center of
each cell and edges connect adjacent cells, then we have a grid
graph with a given start and end points. Finding a Hamiltonian cycle
in the grid graph corresponds to visiting each cell exactly once
(i.e., a cycle containing all the vertices of the grid graph).
 \item In the
picturesque maze generation problem \cite{kh}, we are given a
rectangular black-and-white raster image and want to randomly
generate a maze in which the solution path fills up the black
pixels. The solution path is a Hamiltonian path of a subgraph
induced by the vertices that correspond to the black cells.
\end{enumerate}
\par The rest of the paper is organized as follows.
 Section 2
gives the preliminaries. Necessary conditions for the existence of a
Hamiltonian path in $C-$shaped grid graphs are given in Section 3.
In Section 4, we show how to obtain a Hamiltonian path for
$C-$shaped grid graphs (sufficient conditions). The conclusion is
given in Section 5.
\section{Preliminaries}
In this section, we quote some definitions and results which we need in the following sections. Some of the definitions are given here are previously defined in \cite{CST:AFAFCHPIM, IPS:HPIGG, 98,99,991}.
\par The \textit{two-dimensional integer grid} is an undirected graph in which vertices are all points of the plane with integer coordinates and two vertices are connected by an edge if and only if the Euclidean distance between them is equal to $1$. For a vertex
$v$ of this graph, let $v_{x}$ and $v_{y}$ denote $x$ and $y$
coordinates of its corresponding point, respectively (sometimes we use $(v_x,v_y)$
instead of $v$). We color the vertices of the two-dimensional
integer grid as black and white. A vertex $\upsilon$ is colored
\textit{white} if $v_{x}+v_{y}$ is even, otherwise it is
colored \textit{black}. \par A \textit{grid graph} $G_{g}$ is a
finite vertex-induced subgraph of the two-dimensional integer grid $G^\infty$.
In a grid graph $G_{g}$, each vertex has degree at most four.
Clearly, there is no edge between any two vertices of the same
color. Therefore, $G_{g}$ is a bipartite graph. Note that any cycle
or path in a bipartite graph alternates between black and white
vertices. Assume $G=(V(G),E(G))$ is a graph with vertex set $V(G)$ and edge set $E(G)$. Assume $v\in V(G)$. The number of edges incident at $v$ in $G$ is called degree of the vertex $v$ in $G$ and is denoted by $degree(v)$.
\par A \textit{rectangular grid graph}, denoted by $R(m,n)$ (or $R$ for
short), is a grid graph whose vertex set is $V(R)= \{v \ |\ 1
\leq v_{x}\leq m, \ 1\leq v_{y}\leq n\}$. The graph $R(9,5)$ is illustrated in Fig.
\ref{fig:non0}(a). 
The size of $R(m,n)$ is defined to be $m\times n$.
$R(m,n)$ is called \textit{odd-sized} if $m\times n$ is odd, otherwise  it is
called \textit{even-sized}. $R(m,n)$ is called an $k-$rectangle if $k=m$ or $n$.
\par A \textit{$L-$shaped grid graph} (resp. \textit{$C-$shaped grid graph}),
denoted by $L(m,n,k,l)$ (resp. $C(m,n,k,l)$) (or $L$ (resp. $C$) for
short), is a rectangular grid graph $R(m,n)$ such that a rectangular
subgraph $R(k,l)$ is removed from it while $R(m,n)$ and $R(k,l)$
have exactly two (resp. one) border side in common, where $k,l\geq
1$ and $m,n>1$. Fig. \ref{fig:non0}(b) and \ref{fig:non0}(c) show
a $L-$shaped grid graph with $m=9$, $n=7$, $k=5$, and $l=2$, and a
$C-$shaped grid graph, with $m=11$, $n=8$, $k=3$, and $l=4$,
respectively. In this paper, we consider $C-$shaped grid graph
$C(m,n,k,l)$ shown in Fig. \ref{fig:non0}(c) with any values of $d$,
$c$, $k$, $l$, $m$, and $n$. Let $G(m,n,k,l)$ be a $L-$shaped or
$C-$shaped grid graph. The size of $G(m,n,k,l)$ is $m\times n-
k\times l$. $G(m,n,k,l)$ is called \textit{even-sized} if $m\times
n- k\times l$ is even, otherwise it is called \textit{odd-sized}.
\par We will refer to a grid graph $G(m,n)$ with two specified
distinct vertices $s$ and $t$ as $(G(m,n,s,t)$. We say that
$G(m,n,s,t)$ is Hamiltonian if there is a Hamiltonian path between
$s$ and $t$ in $G$.  In the following by Hamiltonian $(s,t)-$path we
mean a Hamiltonian path between $s$ and $t$. Throughout this paper
in the figures, $(1,1)$ is the coordinates of the vertex in the
upper left corner, except we explicitly change this assumption.
Without loss of generality, we assume that $s_{x} \leq t_{x}$.
\begin{defi}\label{Definition:d1} Suppose that $G (V_1\cup V_2, E)$ is
a bipartite graph such that $|V_1|\geq |V_2|$ and the vertices of
$G$ colored by two colors, black and white. All the vertices of
$V_1$ will be colored by one color, the majority color, and the
vertices of $V_2$ by the minority color. The Hamilton path problem
$(G, s, t)$ is \textit{color-compatible} if
\begin{enumerate}
\item $s$ and $t$ have different colors and $G$ is even-sized ($|V_1|=|V_2|$), or
\item $s$ and $t$ have the majority color and $G$ is odd-sized ($|V_1|=|V_2|+1$)
\end{enumerate}
\end{defi}
\begin{figure}[tb]
  \centering
  \includegraphics[scale=1]{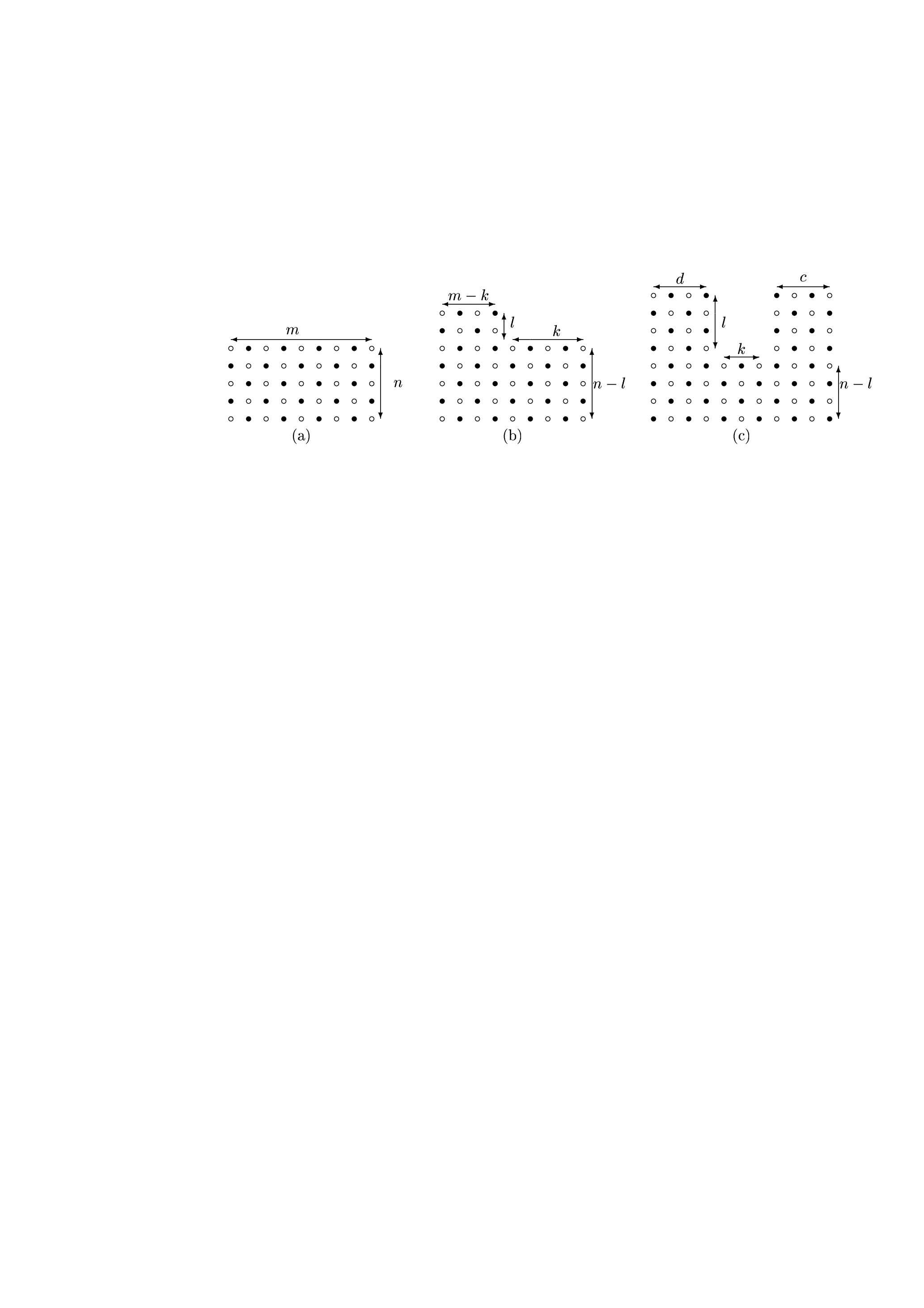}
  \caption[]%
  {\small (a) $R(9,5)$, (b) $L(9,7,5,2)$, and (c) $C(11,8,3,4)$}.
\label{fig:non0}
\end{figure}
\begin{defi}
Let $G$ be a connected graph and $V_1$ be a subset of the vertex set
$V(G)$. $V_1$ is a \textit{vertex cut} of $G$ if $G-V_1$ is
disconnected. A vertex $v$ of $G$ is a \textit{cut vertex} of $G$ if
$\{v\}$ is a vertex cut of $G$. For an example, in Fig.
\ref{RecFig}(a) $t$ is a cut vertex and in Fig. \ref{RecFig}(b)
$\{s,t\}$ is a vertex cut.
 \end{defi}
\par In an odd-sized grid graph
the number of vertices with the minority color is one less than the number of
vertices with the majority color. Therefore, the two end-vertices of any Hamiltonian path in
such a graph must have the majority color. Similarly, in an even-sized grid graph
the number of black vertices is equal to the number of white vertices. Thus, the two
end-vertices of any Hamiltonian path in the graph must have
different colors. Hence, we conclude that the color-compatibility of $s$
and $t$ is a necessary condition for a grid graph to be
Hamiltonian. \\
 Additionally, Itai \textit{et al.} \cite{IPS:HPIGG}
showed that if one of the following conditions holds, then
$(R(m,n),s,t)$ is not Hamiltonian:
\begin{itemize}
\item [(F1)] $s$ or $t$ is a cut vertex or $\{s,t\}$ is a vertex cut
(Fig. \ref{RecFig}(a) and \ref{RecFig}(b)). Notice that, here,
$s$ or $t$ is a cut vertex if $R(m,n)$ is a 1-rectangle and
either $s$ or $t$ is not a corner vertex, and $\{s,t\}$ is a vertex
cut if $R(m,n)$ is a 2-rectangle and $[(2\leq s_x=t_x\leq m-1$ and
$n=2)$ or $(2\leq s_y=t_y\leq n-1$ and $m=2)]$.
\item [(F2)] All the cases that are 
isomorphic to the following cases:
\begin{enumerate}
\item $m$ is even, $n=3$, 
\item $s$ is black, $t$ is white,
\item  $s_{y}=2$ and $s_{x}<t_{x}$ (Fig. \ref{RecFig}(c)) or
$s_{y}\neq 2$ and $s_{x}< t_{x}-1$ (Fig. \ref{RecFig}(d)).
\end{enumerate}
\end{itemize}
\begin{figure}[tb]
  \centering
  \includegraphics[scale=1]{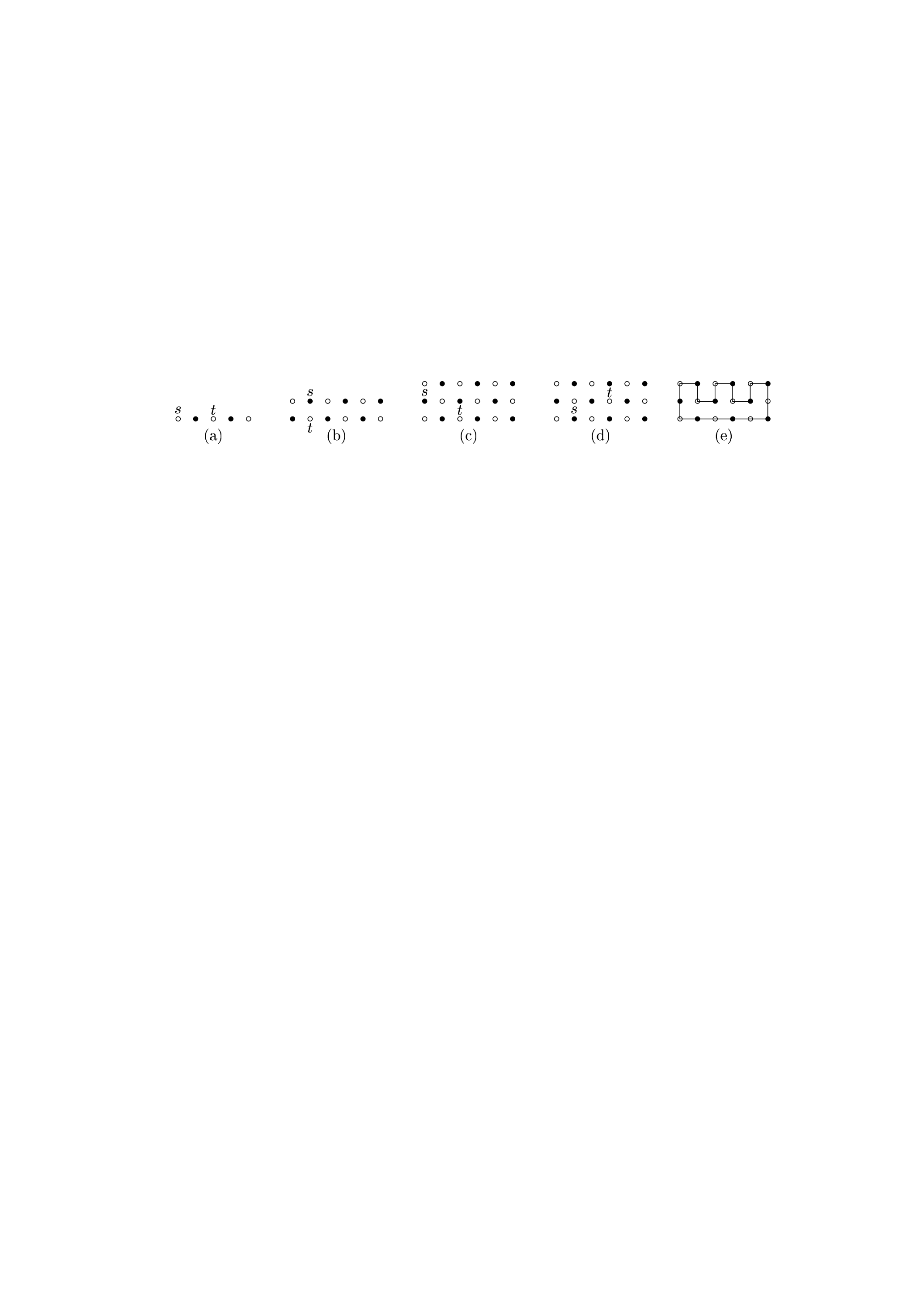}
  \caption[]%
  {\small The rectangular grid graphs in which there is no Hamiltonian $(s,t)-$path, and a Hamiltonian cycle in $R(6, 3)$.}
  \label{RecFig}
\end{figure}
\begin{defi}
 \cite{IPS:HPIGG} A rectangular Hamiltonian path problem $(R(m,n), s, t)$ is \textit{acceptable} if it is color-compatible and $(R(m,n),s,t)$
does not satisfy any of conditions (F1) and (F2).
\end{defi}
\begin{thm} \label{Theorem:1a}\cite{IPS:HPIGG}
There exists a Hamiltonian path between $s$ and $t$ in $R(m,n)$ if and only if $(R(m,n), s, t)$ is acceptable.
\end{thm}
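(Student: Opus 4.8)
The plan is to prove the two implications separately, treating necessity as the routine direction and sufficiency as the substantive one.

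For necessity, I would assume a Hamiltonian $(s,t)$-path $P$ exists and show the instance is acceptable, i.e.\ color-compatible and free of (F1) and (F2). Color-compatibility is already forced by the bipartite alternation recorded above: in the odd-sized case both endpoints must carry the majority color, and in the even-sized case they must carry opposite colors. To exclude (F1) I would use the elementary observation that deleting one or both endpoints of a path leaves a (still connected) subpath; hence if $s$ were a cut vertex, or $\{s,t\}$ a vertex cut, then $P\setminus\{s\}$ or $P\setminus\{s,t\}$ would have to connect a disconnected graph, a contradiction. For (F2) I would give a direct parity/counting argument confined to the three-row rectangles with $m$ even: tracking how $P$ must enter and leave the three cells of each column, the stated color-and-position configuration forces either a missed vertex or a revisited one. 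Taken together, the contrapositives give that a Hamiltonian instance cannot satisfy (F1) or (F2).

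For sufficiency I would argue by strong induction on the area $m\times n$. First I would settle the base cases — the $1$-, $2$-, and $3$-rectangles — by explicit constructions: in a $1$-rectangle the path is forced; in a $2$-rectangle one routes a boustrophedon (snake) path and checks against the cut condition of (F1); and in a $3$-rectangle one disposes of the finitely many acceptable color/position patterns by hand, which is precisely where (F2) is seen to be sharp. For the inductive step, given an acceptable instance with $m,n\ge 4$, I would split the rectangle by a vertical or horizontal line into two smaller rectangles $R_1$ and $R_2$. When $s_x<t_x$ I would place a vertical cut that separates $s$ into $R_1$ and $t$ into $R_2$, then find a Hamiltonian path in $R_1$ from $s$ to a vertex $u$ adjacent across the cut to a vertex $v$ of $R_2$, and a Hamiltonian path in $R_2$ from $v$ to $t$; concatenating along the edge $uv$ yields the desired $(s,t)$-path. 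When $s$ and $t$ are too close to be separated (e.g.\ $s_x=t_x$), I would instead peel off a boundary strip avoiding both $s$ and $t$, cover it with a snake attached through a single crossing edge, and reduce to a strictly smaller acceptable instance containing $s$ and $t$.

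The main obstacle is the case analysis inside the inductive step. The cut line must be positioned so that \emph{both} pieces inherit acceptable subproblems — in particular so that neither degenerates into a forbidden (F1) or (F2) configuration — and so that the colors match up: the parity of each piece is pinned down by its dimensions, which constrains where $s$, $t$, and the two boundary-crossing vertices $u,v$ are allowed to sit. Organizing these constraints into a manageable number of cases, according to whether $s$ and $t$ land in the same piece or in different pieces and according to the parities of $m$ and $n$, is the crux of the argument; the snake covering of the piece free of $s,t$ and the careful selection of the crossing edge are the recurring gadgets that push each case through.
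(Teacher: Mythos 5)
First, a point of reference: the paper does not prove this theorem at all --- it is quoted as a known result from Itai, Papadimitriou, and Szwarcfiter \cite{IPS:HPIGG} and used as a black box (e.g.\ inside the proofs of Theorem \ref{Theorem:1h} and Lemma \ref{Lemma:c11}). So the only meaningful comparison is with the proof in that reference, and your outline does reproduce its architecture: necessity via bipartite alternation (color-compatibility), the ``a path minus an endpoint is still a spanning connected subgraph'' argument for (F1), and a column-counting argument for (F2); sufficiency by induction on area, with the thin rectangles ($1$-, $2$-, $3$-rectangles) as explicitly constructed base cases and, in the inductive step, either a split into two acceptable subproblems joined across the cut, or the peeling of a boundary strip. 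That is exactly the split/strip scheme of the original proof, and the difficulty you identify (positioning the cut so both pieces inherit acceptable, color-compatible subproblems) is indeed where all the work lies.

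There is one concrete flaw in the gadget you describe: a boundary strip that avoids both $s$ and $t$ cannot be ``covered with a snake attached through a single crossing edge.'' Since $s$ and $t$ lie in the retained piece, the strip's cover must be spliced into the \emph{interior} of the $(s,t)$-path, and inserting a detour into the middle of a path forces you to delete one path edge $(u,v)$ and add \emph{two} crossing edges $(u,a)$ and $(b,v)$ to the two ends $a,b$ of the strip's covering path; with a single new edge the resulting subgraph has a degree-three vertex or a dead end. Equivalently --- and this is what both \cite{IPS:HPIGG} and the present paper actually do --- one takes the strip even-sized and of width at least $2$, covers it with a Hamiltonian cycle (Lemma \ref{Lemma:1m}), and merges cycle and path along two parallel edges $e_1,e_2$, as in the proof of Lemma \ref{Lemma:c11}; a width-$1$ strip needs its two covering-path ends adjacent to two \emph{consecutive} vertices of the main path, which an end-to-end snake does not provide. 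With the two-parallel-edges merge substituted for your single-edge attachment, the plan is sound; what remains is the (large but routine) case analysis you already acknowledge.
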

\begin{lem}
\label{Lemma:1m} \cite{CST:AFAFCHPIM} $R(m,n)$ has a Hamiltonian
cycle if and only if it is even-sized and $m,n>1$.
\end{lem}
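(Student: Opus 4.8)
The plan is to prove both implications, disposing of the easy necessity first and reserving the constructive sufficiency for the bulk of the argument.

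For necessity I would argue directly. First, if $m=1$ or $n=1$ then $R(m,n)$ is a simple path and hence acyclic, so it admits no cycle at all, let alone a Hamiltonian one; thus $m,n>1$ is forced. Second, recall from the preliminaries that $R(m,n)$ is bipartite under the black/white $2$-coloring, and that every cycle in a bipartite graph alternates colors and so has even length, visiting equally many vertices of each color. A Hamiltonian cycle visits all $mn$ vertices, so $mn$ must be even; that is, $R(m,n)$ is even-sized. This half requires no new machinery.

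For sufficiency I would give an explicit serpentine (``comb'') construction. Since $mn$ is even, at least one of $m,n$ is even; by the obvious symmetry (reflecting across the main diagonal is a graph isomorphism that swaps the roles of $m$ and $n$ and preserves Hamiltonicity) I may assume the vertical dimension $n$ is even. Reserve the leftmost column $\{(1,j) : 1\le j\le n\}$ as a vertical \emph{spine}, traversed in a single straight run $(1,1)\to(1,2)\to\cdots\to(1,n)$. I then cover the remaining block (columns $2,\dots,m$, all rows) by a boustrophedon snake that enters from $(1,n)$ via the edge to $(2,n)$, sweeps row $n$ rightward, drops one row, sweeps row $n-1$ leftward, and continues alternating directions row by row down to row $1$. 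Finally I close the cycle with the edge from the snake's terminal vertex back to $(1,1)$.

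The crux, and the step I expect to be the main obstacle, is verifying that this snake terminates exactly at $(2,1)$, the unique vertex of the block adjacent to the spine's bottom endpoint $(1,1)$, so that the closing edge exists and the whole walk is a single cycle covering every vertex exactly once. Here the parity hypothesis does the work: row $n$ is swept rightward, row $n-1$ leftward, and in general row $k$ is swept rightward iff $k\equiv n \pmod 2$; since $n$ is even, row $1$ is swept leftward and hence ends at its leftmost vertex $(2,1)$, as required. A short bookkeeping check then confirms that the spine covers column $1$ and the snake covers all of columns $2,\dots,m$ with no repetitions, yielding a Hamiltonian cycle. The symmetric case $m$ even, $n$ odd is handled identically after the diagonal reflection.
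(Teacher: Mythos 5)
Your proof is correct, and its sufficiency construction is exactly the pattern this paper relies on: the lemma is cited from Chen et al.\ and illustrated by Fig.~2(e), whose Hamiltonian cycle (a spine along one side plus a boustrophedon sweep, containing all boundary edges on three sides) coincides with your comb/serpentine cycle, while your necessity argument (acyclicity of $1$-rectangles plus the bipartite parity count) is the standard one implicit in the paper's color-compatibility discussion. Nothing is missing; the parity bookkeeping you flag as the crux is handled correctly, since $n$ even forces row $1$ to be swept leftward and end at $(2,1)$, adjacent to $(1,1)$.
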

\par Fig. \ref{RecFig}(e) shows a Hamiltonian cycle for an even-sized rectangular grid
graph, according to Lemma \ref{Lemma:1m}. Every Hamiltonian cycle according to this pattern
contains all the boundary edges on the three sides of the rectangular grid
graph. This means that for an even-sized rectangular grid graph $R$, we can
always find a Hamiltonian cycle, such that it contains all the boundary edges,
except of exactly one side of $R$ which contains an even number of vertices. We need this result in the following.
\begin{defi}\label{Definition:d2}\cite{991}
 A \textit{separation} of a $L-$shaped grid graph $L(m,n,k,l)$ is a partition of $L(m,n,k,l)$
into two disjoint grid subgraphs $G_1$ and $G_2$, i.e., $V
(L(m,n,k,l)) = V (G_1) \cup V (G_2)$, and $V (G_1) \cap V (G_2)=
\emptyset$. $G_1$ and $G_2$ may be rectangular or $L-$shaped grid
graphs. Three types of separations, vertical, horizontal and
$L-$shaped separations are shown in Fig. \ref{fig:non2}(c)-(f).
\end{defi}
\par
In \cite{991}, we show that in addition to condition (F1) (as shown
in Fig. \ref{fig:non1}(a) and \ref{fig:non1}(b)) whenever one of the
following conditions is satisfied then $(L(m,n,k,l),s,t)$ has no
Hamiltonian $(s,t)-$path.
\begin{figure}[tb]
  \centering
  \includegraphics[scale=.95]{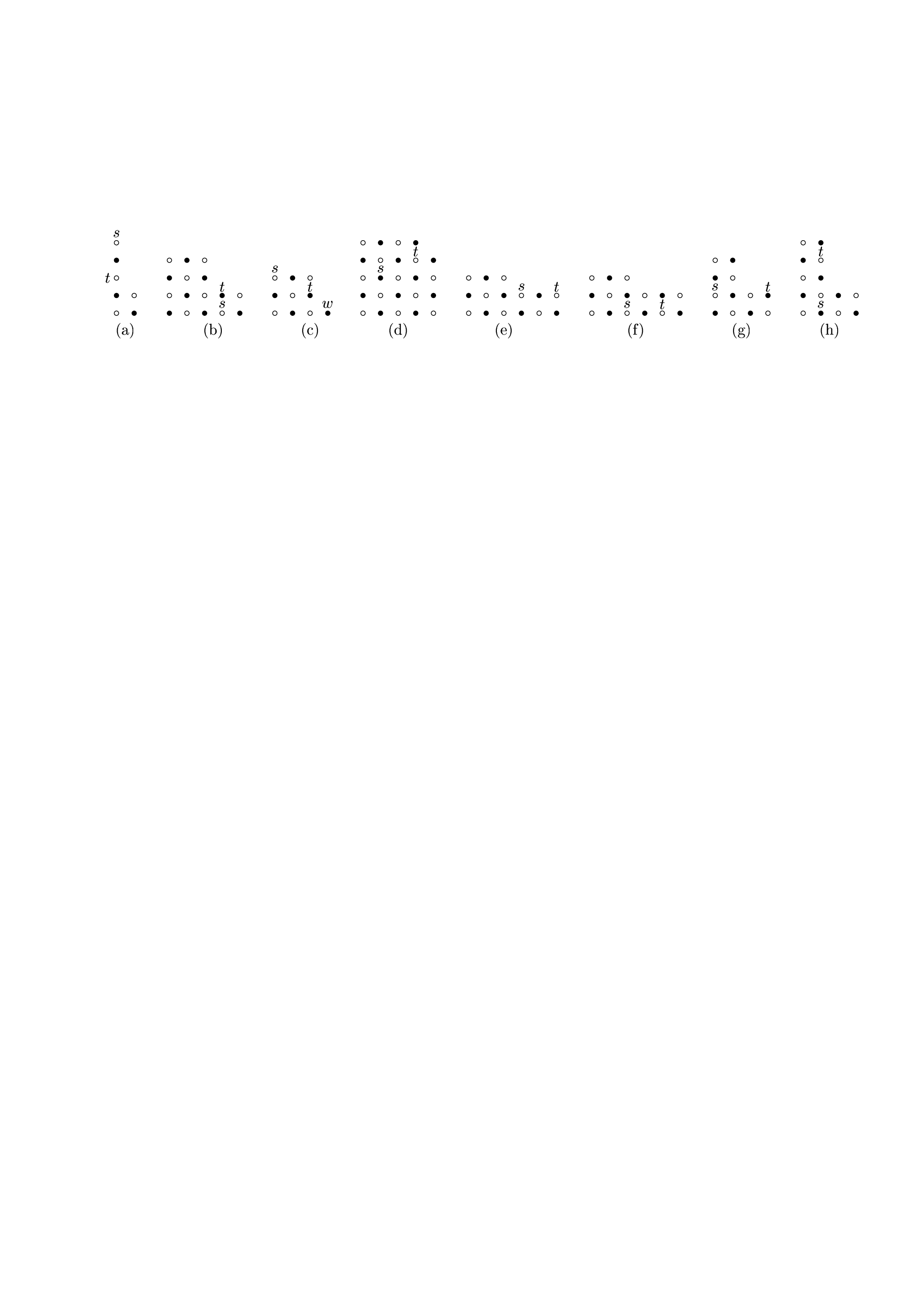}
  \caption[]%
  {\small Some $L-$shaped grid graphs in which there is no Hamiltonian $(s,t)-$path.}
\label{fig:non1}
\end{figure}
\begin{itemize}
\item [(F3)] $w\in V(L(m,n,k,l))$, $degree(w)=1$, $t\neq w$, and $s\neq w$ (Fig. \ref{fig:non1}(c)).
\item [(F4)] $L(m,n,1,1)$ is even-sized, $m-1=even> 2$, $n-1=even>2$, $s=(m-1,2)$, and $t\neq (m-1,1)$
or $t\neq (m,2)$ $($here the role of $s$ and $t$ can be swapped;
i.e., $t=(m-1,2)$ and $s\neq (m-1,1))$ (Fig. \ref{fig:non1}(d)).
\item [(F5)] $L(m,n,k,l)$ is odd-sized, $n-l=2$, $m-k=odd\geq 3$, and
 \\ \indent \indent \indent (i) $s_x,t_x>m-k$ (Fig. \ref{fig:non1}(e)); or
 \\ \indent \indent \indent (ii) $s=(m-k,n)$ and $t_x>m-k$ (Fig. \ref{fig:non1}(f)).
 \item [(F6)] $L(m,n,k,l)$ is even-sized, $n-l=2$, $m-k=2$, and
 \par (i) $s=(1,n-l)$ and $t_x>2$ (Fig. \ref{fig:non1}(g)); or
\par (ii) $s=(2,n)$ and $t_y<l$ $($here the role of $s$ and $t$ can be swapped; i.e., $t=(2,n)$ and $s_y\leq l)$ (Fig. \ref{fig:non1}(h)).
\item [(F7)] $L(m,n,k,l)$ is even-sized and
\par (i)  $n=3$, $l=1$, $m-k=even>2$, $s=(m-k-1,1)$, and $t=(m-k,3)$ (Fig. \ref{fig:non2}(a)); or
\par (ii) $m=3$, $k=1$, and $n-l=even>2$, $s=(1,l+1)$, and $t=(m,l+2)$ (Fig. \ref{fig:non2}(b)).
 \item [(F8)] $L(m,n,k,l)$ is even-sized and $[(m-k=2$ and $n-l> 2)$ or $(n-l=2$ and $m-k> 2)]$.
 Let $\{G_1, G_2\}$ be a vertical $($or horizontal$)$ separation of $L(m,n,k,l)$ such that  $G_1$
 is a 3-rectangle grid graph, $G_2$ is a 2-rectangle grid graph, and exactly two vertices $u$ and
$v$ are in $G_1$ that are connected to $G_2$. Let $s^{'}=s$ and
$t^{'}=t$, if $s^{'}$ (or $t^{'})\notin G_1$ then $s^{'}=u$ (or
$t^{'}=u)$. And $(G_1,s^{'},t^{'})$ satisfies condition (F2) (Fig.
\ref{fig:non2}(c) and \ref{fig:non2}(d)).
\item [(F9)] $L(m,n,k,l)$ is even-sized and $[(m-k=3$ and $n-l\geq 3)$ or $(m-k>3$ and $n-l=3)]$.
Let $\{G_1,G_2\}$ be a vertical $($ horizontal or $L-$shaped$)$
separation of $L(m,n,k,l)$ such that $G_1$ and $G_2$ are even-sized,
$G_1$ is a $3-$rectangle grid graph, and $G_2$ is
\par (1) a rectangular grid graph $($see Fig. \ref{fig:non2}(e), or
\par (2) a $L-$shaped grid graph, where $m\times n=$even$\times$odd, $k\times l=$odd$\times$even,
$n-l=3$, and $m-k\geq 5$. Here, $V(G_1)=\{ m-k\leq x\leq m \ and\  l+1\leq y\leq n  \}$ and
$G_2= L(m,n,k,l)\backslash G_1$ $($see Fig. \ref{fig:non2}(f)).\\
Let exactly three vertices $v$, $w$ and $u$ be in $G_1$ that are connected
to $G_2$. Let $s^{'}=s$ and $t^{'}=t$, if $s^{'}$ (or $t^{'})\notin
G_1$ then $s^{'}=w$ (or $t^{'}=w)$. And $(G_1,s^{'},t^{'})$
satisfies condition (F2).
\end{itemize}
\begin{figure}[tb]
  \centering
  \includegraphics[scale=1]{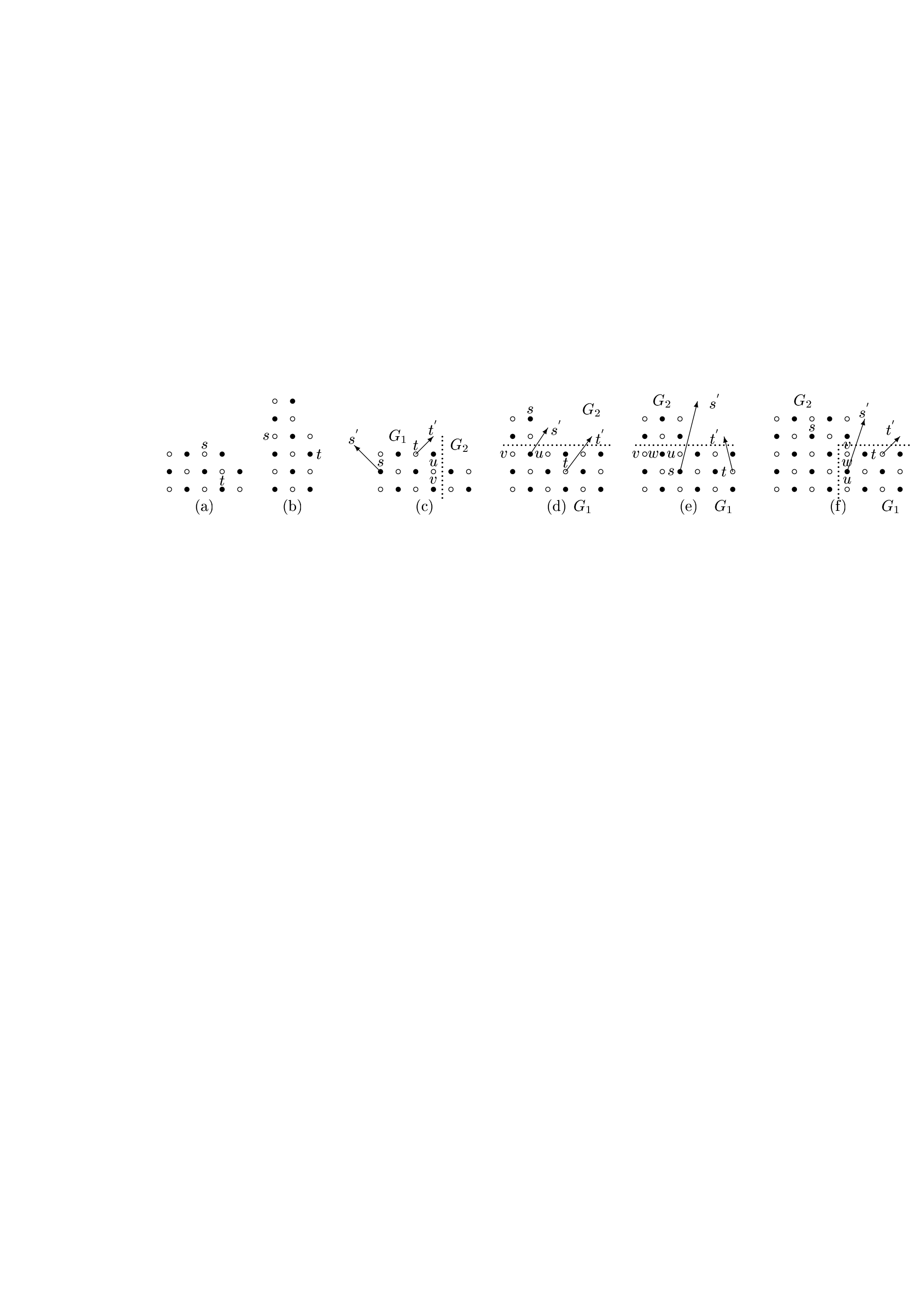}
  \caption[]%
 {\small Some $L-$shaped grid graphs in which there is no Hamiltonian $(s,t)-$path, where dotted lines indicate the separations.}
\label{fig:non2}
\end{figure}
\begin{defi}
A $L-$shaped Hamiltonian path problem $(L(m,n,k,l), s, t)$ is \textit{acceptable} if it is color compatible and $(L(m,n,k,l),s,t)$
does not satisfy any of conditions (F1) and (F3)-(F9).
\end{defi}
\begin{thm}\label{Theorem:6t}\cite{991}
$L(m,n,k,l)$ has a Hamiltonian $(s,t)-$path if and only if $(L(m,n,k,l),s,t)$ is acceptable.
\end{thm}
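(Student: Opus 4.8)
The necessity direction is already essentially in hand from the discussion preceding the statement: color-compatibility is forced for any grid graph by the bipartite parity count (the two endpoints of a Hamiltonian path in an odd-sized graph must both carry the majority color, and in an even-sized graph must carry different colors), and each of the configurations (F1), (F3)--(F9) has been exhibited as a genuine obstruction. Hence if a Hamiltonian $(s,t)$-path exists, the instance $(L(m,n,k,l),s,t)$ must be acceptable. The substance of the theorem is therefore the converse, and the plan is to prove sufficiency by a constructive strong induction on the size $m\times n-k\times l$. At each step I would choose a \emph{separation} $\{G_1,G_2\}$ in the sense of Definition~\ref{Definition:d2}, chosen so that each part is either rectangular or a strictly smaller $L$-shaped graph, solve the two parts using Theorem~\ref{Theorem:1a} (for rectangular parts), the induction hypothesis (for $L$-shaped parts), and Lemma~\ref{Lemma:1m} (for cycles), and then stitch the sub-solutions together across an edge joining $G_1$ to $G_2$.

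First I would isolate a small set of base configurations and verify them by explicit construction. These are the $L$-shaped graphs with a short arm, namely $m-k\le 2$ or $n-l\le 2$, together with the genuinely small instances that cannot be cut into two legitimate parts. This is exactly where the boundary obstructions (F4)--(F8) live, so the direct verification must reproduce the failure list precisely: whenever such a narrow instance is acceptable I would exhibit a Hamiltonian $(s,t)$-path, and whenever it fails, the reason must be one of the listed conditions.

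For the inductive step I would split according to how $s$ and $t$ lie relative to the chosen cut. In the first regime both $s$ and $t$ lie in one part, say $G_1$, and I would cut so that $G_2$ is an even-sized rectangle. By Lemma~\ref{Lemma:1m} the part $G_2$ has a Hamiltonian cycle $C$, and by the observation following that lemma I may take $C$ to use every boundary edge on the side of $G_2$ facing $G_1$ (with the sole possible exception of one even side). Applying the induction hypothesis (or Theorem~\ref{Theorem:1a} when $G_1$ is rectangular) gives a Hamiltonian $(s,t)$-path $P$ in $G_1$ using some shared-boundary edge $(a,b)$; deleting $(a,b)$ from $P$ and the parallel edge $(a',b')$ from $C$, then adding the two rungs $(a,a')$ and $(b,b')$, merges $P$ and $C$ into a single Hamiltonian $(s,t)$-path of the whole graph. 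In the second regime $s\in G_1$ and $t\in G_2$, and I would instead route a Hamiltonian path in each part from its own endpoint to a chosen boundary vertex $u$ (respectively $u'$) with $uu'$ a crossing edge of the cut, and concatenate the two paths through $uu'$; here the auxiliary endpoints $u,u'$ must be selected so that both induced subproblems are again acceptable.

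The crux, and the step I expect to be the main obstacle, is the bookkeeping that guarantees a valid separation always exists when the instance is acceptable. For every placement of $(s,t)$ and every parity pattern of $(m,n,k,l)$ one must produce a cut whose two parts are simultaneously correctly sized (so that the needed cycle or path exists), color-compatible, and free of conditions (F1)--(F2) for any rectangular part and (F1), (F3)--(F9) for any $L$-shaped part. Conditions (F4)--(F9) are precisely the situations in which no such cut can be arranged, so the heart of the argument is a careful case analysis showing that \emph{outside} of (F1), (F3)--(F9) a good separation is always available. The delicate cases are those in which one arm has width $2$ or $3$ (matching (F6)--(F9)): the narrow strip rigidly constrains both the parity of the parts and the location of the crossing edges, so the explicit base constructions and the appeal to condition (F2) \emph{inside} a rectangular part are where the real work is concentrated.
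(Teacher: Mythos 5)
Two things need saying at once. First, the paper you were given never proves this statement: Theorem~\ref{Theorem:6t} carries the citation \cite{991} and is imported verbatim from the authors' earlier paper on $L$-shaped grid graphs, so there is no in-paper proof to compare against. The closest available proxy is the paper's own treatment of the $C$-shaped analogue (Theorem~\ref{Theorem:1h} for necessity, Lemmas~\ref{Lemma:c11}--\ref{Lemma:c13} plus Theorem~\ref{Theorem:6x} for sufficiency), which mirrors the method of \cite{991}. Measured against that, your skeleton is the right one and is essentially the same approach: necessity from color-compatibility and the obstruction list; sufficiency by cutting into a rectangle plus a smaller $L$-shaped or rectangular piece, solving the piece containing $s$ and $t$ by Theorem~\ref{Theorem:1a} or recursion, closing a Hamiltonian cycle in the even-sized remainder by Lemma~\ref{Lemma:1m}, and merging cycle and path across two parallel edges (or, when $s$ and $t$ are separated by the cut, routing two paths to the endpoints of a crossing edge, with Lemma~\ref{Lemma:c0} guaranteeing color-compatibility of the parts).

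The genuine gap is that your proposal stops exactly where the proof begins. In this family of results the ``bookkeeping'' you defer---exhibiting, for every acceptable instance, a concrete cut together with auxiliary endpoints $p,q$ and verifying that neither part falls into any forbidden condition---is not scaffolding around the argument; it \emph{is} the argument, and it is why Lemmas~\ref{Lemma:c11}--\ref{Lemma:c13} here (and the corresponding lemmas of \cite{991}) run to dozens of interdependent subcases, followed by a separate exhaustiveness theorem (Theorem~\ref{Theorem:6x}) certifying that the cases cover everything. Your plan also glosses over a technical point the paper must repeatedly argue: after constructing the path in $G_1$ and the cycle in $G_2$, one needs a path edge on the shared boundary that actually has a parallel partner in the cycle; the paper justifies this case by case (e.g.\ ``since $n-l>1$ \ldots there is at least one edge for combining''), and in degenerate situations (a remainder that is a $1$-rectangle with $|G_2|=2$) the merge is performed by a different, ad hoc construction. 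Without executing the case analysis and these existence checks, the proposal establishes the strategy of \cite{991} but not the theorem.
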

\begin{thm}\label{Theorem:6t2}\cite{991}
In an acceptable $P(L(m,n,k,l),s,t)$, a Hamiltonian $(s,t)-$path can be found in linear time.
\end{thm}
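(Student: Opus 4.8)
The plan is to extract an algorithm directly from the constructive proof of Theorem~\ref{Theorem:6t} and then bound its running time. Since Theorem~\ref{Theorem:6t} is established by explicitly building a Hamiltonian $(s,t)$-path whenever the problem is acceptable, the task reduces to verifying that each step of that construction can be carried out in time linear in the number of vertices $N = m\times n - k\times l$.

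First I would recall the structure of that construction: it proceeds by a case analysis on the positions of $s$ and $t$ relative to the removed rectangle $R(k,l)$, and in each case it fixes a separation (vertical, horizontal, or $L$-shaped, as in Definition~\ref{Definition:d2}) of $L(m,n,k,l)$ into a bounded number of subgraphs. Each subgraph is either a rectangular grid graph or a strictly smaller $L$-shaped grid graph. On rectangular pieces I would invoke the linear-time rectangular-grid routine guaranteed by Theorem~\ref{Theorem:1a} (Itai \textit{et al.}), choosing the endpoints inside each piece so that the induced subproblems are acceptable. I would then glue the resulting subpaths along shared boundary edges; Lemma~\ref{Lemma:1m} guarantees that an even-sized rectangular piece admits a Hamiltonian cycle containing all but one side of boundary edges, so the specific edge needed to stitch two pieces together is always available.

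The linear-time bound then follows from three observations. (a) Selecting the applicable case and computing the separation requires only testing the fixed parameters $m,n,k,l$ and the four coordinates of $s$ and $t$, which is $O(1)$ work. (b) Each rectangular subproblem of size $N_i$ is solved in $O(N_i)$ time, and the pieces partition the vertex set, so $\sum_i N_i \le N$. (c) The number of separation boundaries is constant, so gluing the subpaths costs only $O(1)$ per junction. Summing gives $O(N)$ overall.

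The main obstacle I anticipate is controlling the recursion. Because a separation may itself yield an $L$-shaped subgraph (Definition~\ref{Definition:d2}), one must verify that the construction does not recurse on $L$-shaped pieces more than a constant number of times before reaching only rectangular subproblems; otherwise the $O(1)$-per-junction bookkeeping could blow up. I would handle this by checking, case by case, that after at most one or two separations every remaining piece is rectangular, so the recursion depth is $O(1)$ and the total number of pieces stays bounded. A secondary point to verify is that the endpoints chosen inside each piece (the substitutes $s'$ and $t'$ forced when $s$ or $t$ lies outside a piece) always yield acceptable rectangular subproblems, so that Theorem~\ref{Theorem:1a} applies and the stitched path is genuinely Hamiltonian and simple.
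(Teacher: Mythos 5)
Your proposal is correct and takes essentially the same approach as the source: this theorem is quoted here from \cite{991}, and both that proof and this paper's own analogous linear-time result for the $C$-shaped case (Theorem \ref{Lemma:6tr}) proceed exactly as you describe --- an $O(1)$ case analysis selecting a separation from the constructive existence proof (Theorem \ref{Theorem:6t}), linear-time solution of the resulting rectangular and $L$-shaped pieces via the algorithms of \cite{CST:AFAFCHPIM} and \cite{991}, and $O(1)$ stitching of the Hamiltonian paths and cycles along parallel boundary edges. The recursion-depth concern you raise is resolved there just as you propose: each case chains to at most a constant number of further cases before every remaining piece is directly solvable, so the total number of pieces and junctions stays bounded.
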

\begin{lem}\label{Lemma:6t1}\cite{991}
$L(m,n,k,l)$ has a Hamiltonian cycle if and only if it is even-sized, $m-k>1$, and $n-l>1$.
\end{lem}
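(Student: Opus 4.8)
The plan is to prove both implications, handling color balance and connectivity obstructions for necessity, and reducing to the Hamiltonian path result of Theorem~\ref{Theorem:6t} for sufficiency.

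For necessity, I would first observe that a Hamiltonian cycle in a bipartite graph alternates colors and therefore uses equally many black and white vertices; hence $L(m,n,k,l)$ must have an even number of vertices, i.e.\ be even-sized. Next, to force $m-k>1$ and $n-l>1$, I would exhibit a vertex of degree $1$ whenever either fails. Placing the removed rectangle in the top-right corner (without loss of generality, by symmetry), if $m-k=1$ then $(1,1)$ has its only grid neighbor at $(1,2)$, because $(2,1)$ lies in the deleted rectangle; similarly if $n-l=1$ then a corner of the thin horizontal arm has degree $1$. A degree-$1$ vertex cannot lie on any cycle, so no Hamiltonian cycle exists, which establishes the necessity of all three conditions.

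For sufficiency, assume $L$ is even-sized with $m-k>1$ and $n-l>1$. I would first note that even-sized forces color balance (the only delicate case, where $mn$ and $kl$ are both odd, still yields equal color classes because the deleted corner rectangle and $R(m,n)$ share the same majority color). Then I would pick two adjacent boundary vertices $s,t$; being adjacent they receive different colors, so $(L,s,t)$ is color-compatible. Since the graph has at least three vertices, any Hamiltonian $(s,t)$-path avoids the edge $st$, so adjoining $st$ turns it into a Hamiltonian cycle. Thus it suffices to choose $s,t$ adjacent so that $(L(m,n,k,l),s,t)$ is acceptable, and then invoke Theorem~\ref{Theorem:6t}.

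The heart of the argument, and the main obstacle, is verifying that such an adjacent pair can always be found, i.e.\ that it avoids conditions (F1) and (F3)--(F9). Two of these are immediate: (F3) never applies, because $m-k>1$ and $n-l>1$ preclude degree-$1$ vertices, and (F5) never applies, because it requires an odd-sized graph. The remaining obstructions (F1), (F4), (F6)--(F9) each pin down very specific endpoint pairs together with small values of $m-k$, $n-l$, $k$, $l$, $m$, or $n$; since the L-shape is ``fat'' in both arms, I would place $s,t$ on a wide portion of the outer boundary (for instance two adjacent vertices of the long bottom side) and carry out a finite case analysis over the small-parameter regimes ($m-k\in\{2,3\}$, $n-l\in\{2,3\}$, $n=3$, $k=1$, etc.) to exhibit a valid pair in each. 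As an alternative construction that directly uses the boundary-edge property highlighted after Lemma~\ref{Lemma:1m}, I would separate $L$ into two even-sized rectangles (choosing a vertical, horizontal, or $L$-shaped separation to fix parity and to avoid creating a $1$-rectangle when $k=1$ or $l=1$), take in each a Hamiltonian cycle containing the boundary edges along the shared side, and merge the two cycles by deleting a parallel pair of shared-boundary edges and inserting the two connecting rungs.
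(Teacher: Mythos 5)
Note first that this paper does not prove Lemma~\ref{Lemma:6t1} at all: it is imported verbatim from \cite{991}, so there is no in-paper proof to compare against, and your attempt has to stand on its own. Your necessity direction does stand: bipartiteness forces even size, and if $m-k=1$ (resp.\ $n-l=1$) the corner vertex next to the notch has degree $1$ and can lie on no cycle. Your color-balance observation (odd $mn$ and odd $kl$ force the deleted corner rectangle to carry the same majority color as $R(m,n)$) and the reduction ``adjacent acceptable pair $\Rightarrow$ Hamiltonian path $+$ edge $st$ $\Rightarrow$ Hamiltonian cycle'' via Theorem~\ref{Theorem:6t} are also correct.

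The gap is that neither of your sufficiency constructions is actually executed, and each hits a concrete obstacle. In the first route, exhibiting an adjacent pair $s,t$ that avoids (F1) and (F4)--(F9) \emph{is} the hard direction of the lemma, and it is not as light as your sketch suggests: (F1) eliminates adjacent pairs lying across any width-$2$ arm (they form a vertex cut), and (F8)/(F9) are designed precisely to kill pairs sitting in the $3$-rectangle end of such an arm through condition (F2) --- e.g.\ two horizontally adjacent vertices in the middle row of that $3$-rectangle satisfy (F2). So the ``finite case analysis over the small-parameter regimes'' you defer is the entire proof, not a routine check. In the second route, the claim that $L$ can be separated into two even-sized rectangles admitting Hamiltonian cycles is simply false in the case $k=l=1$: there even-sizedness forces $m,n$ odd, the only two-rectangle partitions of an L-shape are the straight cuts through the reflex corner, and each of these leaves a $1$-rectangle, which has no Hamiltonian cycle (nor does any ``L-shaped separation'' yield two rectangles). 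That case needs a different mechanism: take the Hamiltonian cycle of $R(m-1,n)$ containing all right-side boundary edges (the pattern after Lemma~\ref{Lemma:1m}), and absorb the width-$1$ strip of $n-1$ (even) vertices two at a time by swapping each cycle edge $(m-1,2j)$--$(m-1,2j+1)$ for the detour through $(m,2j),(m,2j+1)$ --- the cycle-with-strip merging this paper itself uses (e.g.\ Case~2 of Lemma~\ref{Lemma:c11}), which is not the cycle-with-cycle merging you describe. The parity bookkeeping (that $k=1$ forces the horizontal cut, $l=1$ the vertical one, and that at least one cut gives two even pieces when $k,l\geq 2$) is likewise asserted but not checked.
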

\section{Necessary conditions}
In this section, we are going to obtain necessary conditions for the existence of a Hamiltonian
$(s,t)-$path in $C-$shaped grid graph $C(m,n,k,l)$.
\begin{defi}\label{Definition:d3}
 A \textit{separation} of a $C-$shaped grid graph $C(m,n,k,l)$ is a partition of $C(m,n,k,l)$
into at most five disjoint grid subgraphs $G_1$, $G_2$, $G_3$,
$G_4$, and $G_5$ that is, $V (C(m,n,k,l)) = V(G_1) \cup V(G_2)\cup
V(G_3)\cup V(G_4)\cup V(G_5)$, and $V(G_1) \cap V(G_2)\cap
V(G_3)\cap V(G_4)\cap V(G_5)= \emptyset$. $G_1$, $G_2$, $G_3$,
$G_4$, and $G_5$ may be rectangular, $L-$shaped, or $C-$shaped grid graph. We consider the four types of separation,
vertical, horizontal, $L-$shaped, and $C-$shaped separations are
shown in Fig. \ref{sep1} and \ref{sep2}.
\end{defi}
\begin{figure}[tb]
  \centering
  \includegraphics[scale=1]{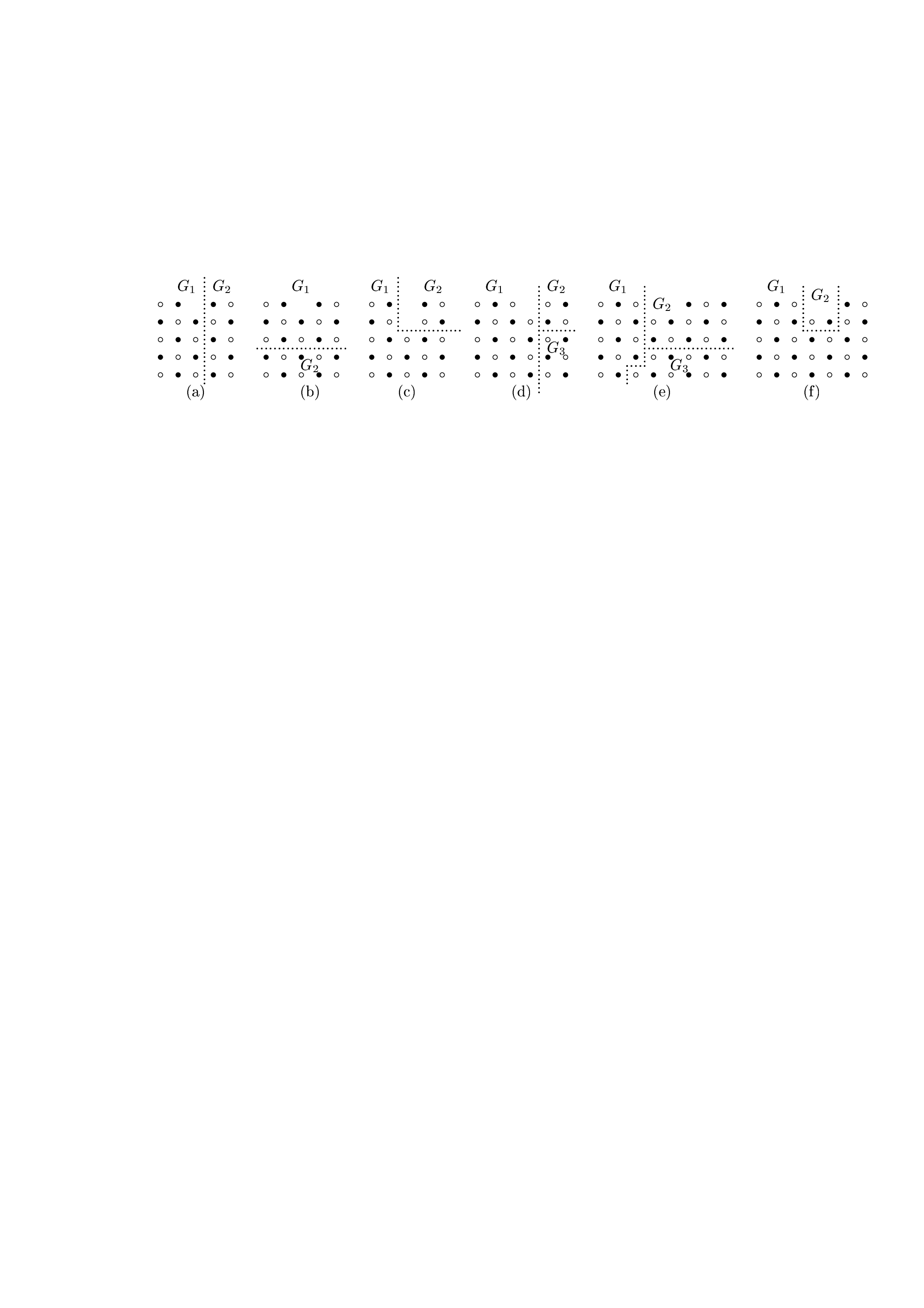}
  \caption[]%
  {\small The four types of separations; (a) a vertical separation (b) a horizontal separation, (c)
  a $L-$shaped separation type I, (d) a $L-$shaped separation type II, (e) a $L-$shaped separation type III, and (f) a $C-$shaped separation type I, where dotted lines indicate the separations.}
  \label{sep1}
\end{figure}
\begin{figure}[tb]
  \centering
  \includegraphics[scale=.94]{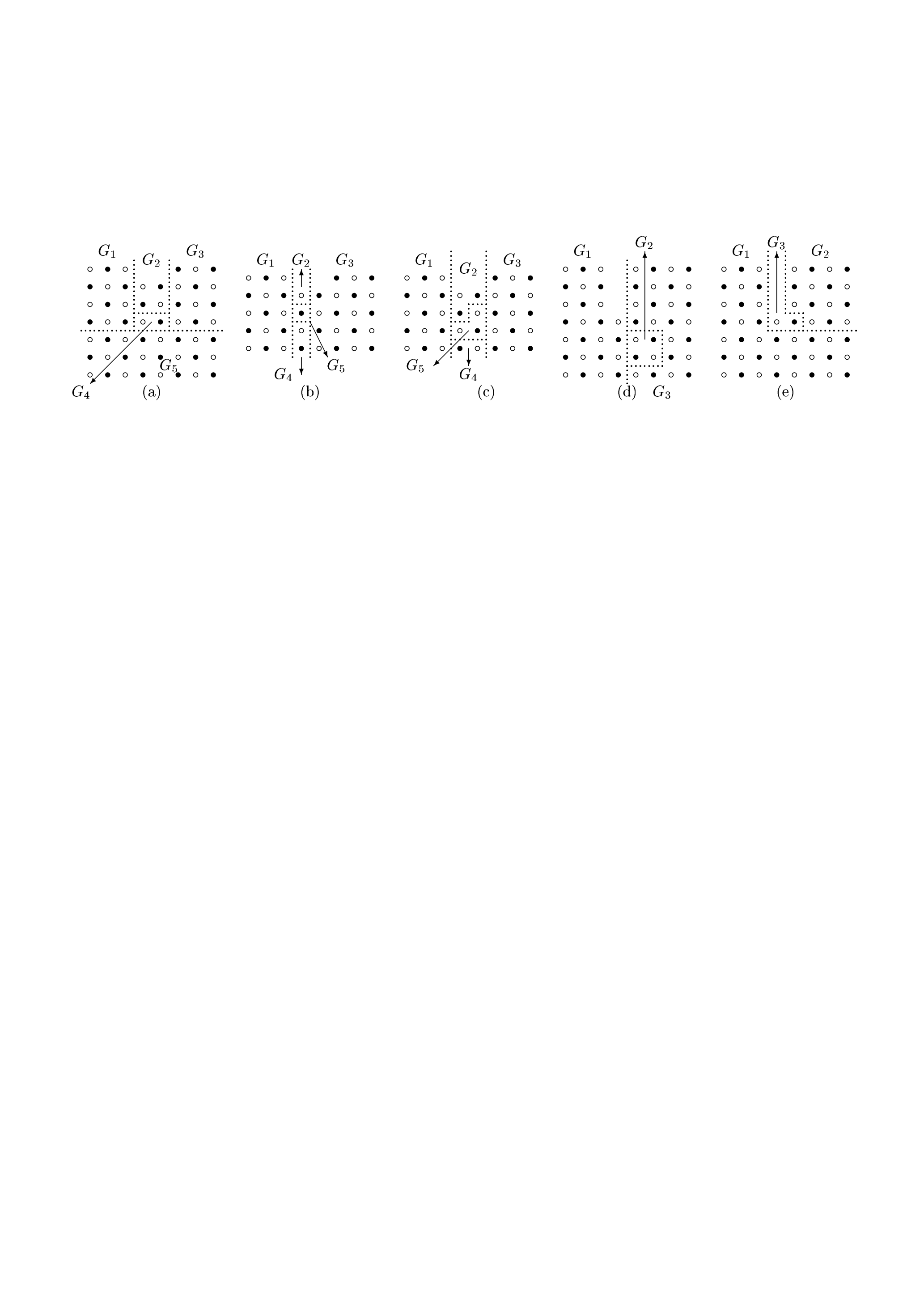}
  \caption[]%
  {\small The four types of separations;
  (a) a $C-$shaped separation type II, (b) and (c) a $C-$shaped separation type III,
  (d) a $C-$shaped separation type IV, and (e) a $C-$shaped separation type V, where dotted lines indicate the separations.}
  \label{sep2}
\end{figure}
\begin{lem}\label{Lemma:c0}\cite{991}
Let $G$ be any grid graph. Let $s$ and $t$ be two given vertices of
$G$ such that $(G,s,t)$ is color-compatible. If we can partition
$(G,s,t)$ into $n$ subgraphs $G_1,G_2,\ldots, G_{n-1}, G_n$ such
that $s,t\in G_n$ and in $V(G_1\cup G_2\cup \ldots\cup G_{n-1})$ the
number of white and black vertices are equal, then $(G_n,s,t)$ is
color-compatible.
\end{lem}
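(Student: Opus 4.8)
The plan is to reduce the lemma to an elementary counting argument, exploiting the fact that the black/white coloring of the two-dimensional integer grid is intrinsic to each vertex: it depends only on the parity of $v_x+v_y$, so it is inherited by every vertex-induced subgraph. Consequently, writing $W(H)$ and $B(H)$ for the numbers of white and black vertices of a subgraph $H$, a partition of $V(G)$ into the disjoint pieces $V(G_1),\ldots,V(G_n)$ gives $W(G)=\sum_{i=1}^{n} W(G_i)$ and $B(G)=\sum_{i=1}^{n} B(G_i)$. Throughout, the colors of $s$ and $t$ are fixed, since $s,t\in G_n$.

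First I would set $G'=G_1\cup\cdots\cup G_{n-1}$ and record the hypothesis simply as $W(G')=B(G')$. Subtracting the black total from the white total over all of $G$ then yields
\[
W(G_n)-B(G_n)=\bigl(W(G)-B(G)\bigr)-\bigl(W(G')-B(G')\bigr)=W(G)-B(G),
\]
because the middle term vanishes by hypothesis. Thus the \emph{signed} color imbalance of $G_n$ equals that of the whole graph $G$, and this single identity is the engine of the proof.

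Next I would split into the two cases of Definition~\ref{Definition:d1}. If $(G,s,t)$ is color-compatible via the even-sized case, then $W(G)=B(G)$, so the displayed identity forces $W(G_n)=B(G_n)$; hence $G_n$ is even-sized, and since $s$ and $t$ retain their (distinct) colors, $(G_n,s,t)$ is color-compatible. If instead $(G,s,t)$ is color-compatible via the odd-sized case, then $|W(G)-B(G)|=1$ and $s,t$ both carry the majority color of $G$; the identity gives $W(G_n)-B(G_n)=W(G)-B(G)$, so $G_n$ is odd-sized and, crucially, its majority color coincides with that of $G$ because the sign of the imbalance is preserved. Therefore $s$ and $t$ carry the majority color of $G_n$ as well, and $(G_n,s,t)$ is again color-compatible.

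The argument has essentially no hard step; the only point requiring care is the last one, namely checking that the majority color of $G_n$ \emph{agrees} with that of $G$, rather than merely that $G_n$ is odd-sized. This is exactly what the preservation of the imbalance as a signed quantity delivers, which is why writing the subtraction out before the case split is the cleanest route.
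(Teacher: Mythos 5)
Your proof is correct. Note that the paper itself gives no proof of this lemma---it is imported from \cite{991} with only a citation---so there is no in-paper argument to compare against; your signed-imbalance identity $W(G_n)-B(G_n)=W(G)-B(G)$ is the natural (essentially the only) argument, and you correctly handle the one delicate point: the \emph{sign} of the imbalance is preserved, so in the odd-sized case the majority color of $G_n$ agrees with that of $G$, and $s,t$ therefore still carry the majority color of $G_n$.
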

Because $C(m,n,k,l)$ is bipartite, colors of vertices of any
path must alternate between black and white. Hence, the
color-compatibility of $s$ and $t$ in $C(m,n,k,l)$ is a necessary
condition for $(C(m,n,k,l),s,t)$ to be Hamiltonian. Besides, in
addition to conditions (F1) and (F3) (as shown in Fig.
\ref{fig:s1}(a)-(d)) whenever one of the following conditions holds
then $(C(m,n,k,l),s,t)$ has no Hamiltonian $(s,t)-$path.
\begin{figure}[tb]
  \centering
  \includegraphics[scale=1]{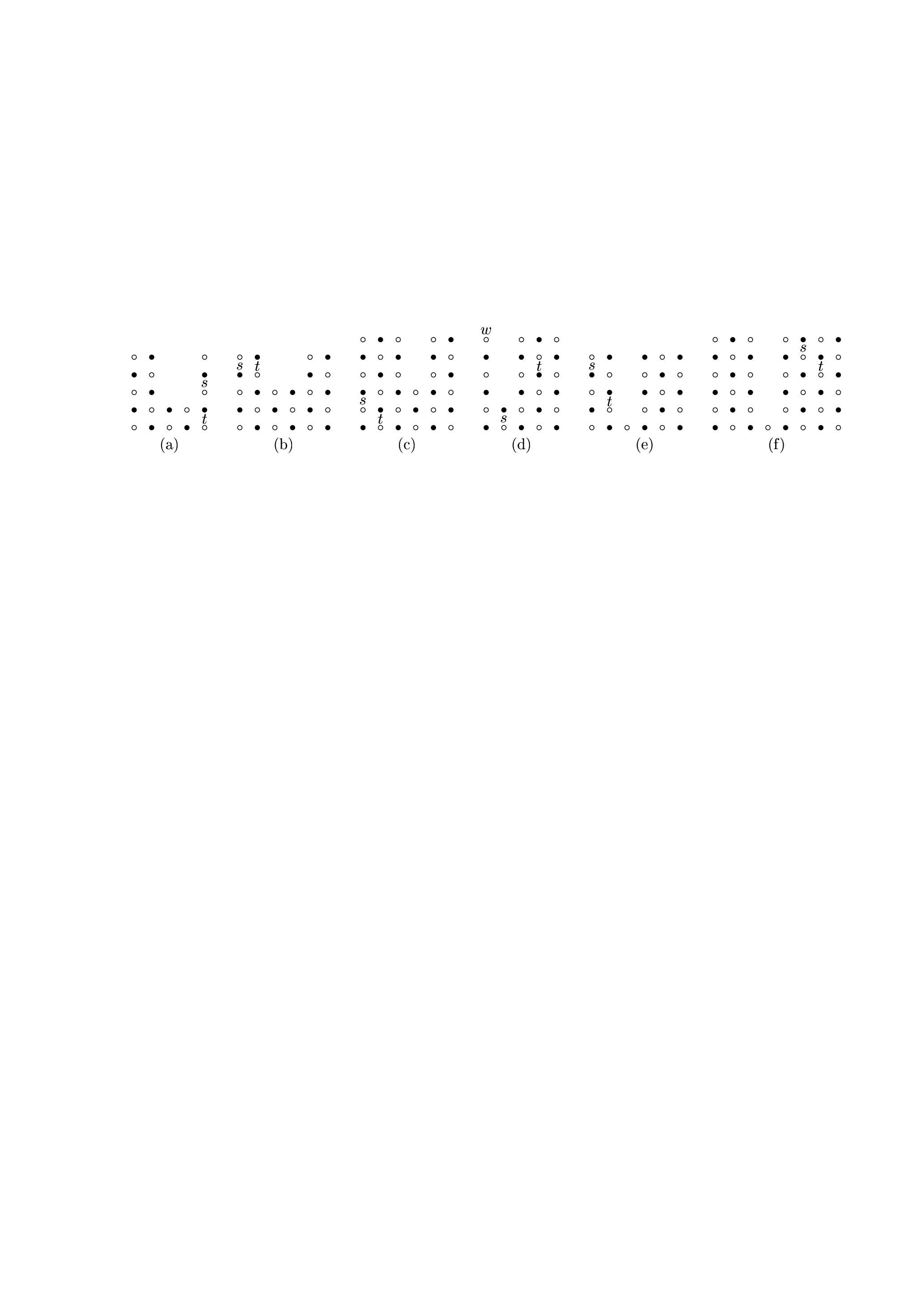}
  \caption[]%
  {\small The $C-$shaped grid graphs in which there is no Hamiltonian $(s,t)-$path.}
\label{fig:s1}
\end{figure}
\begin{figure}[tb]
  \centering
  \includegraphics[scale=.96]{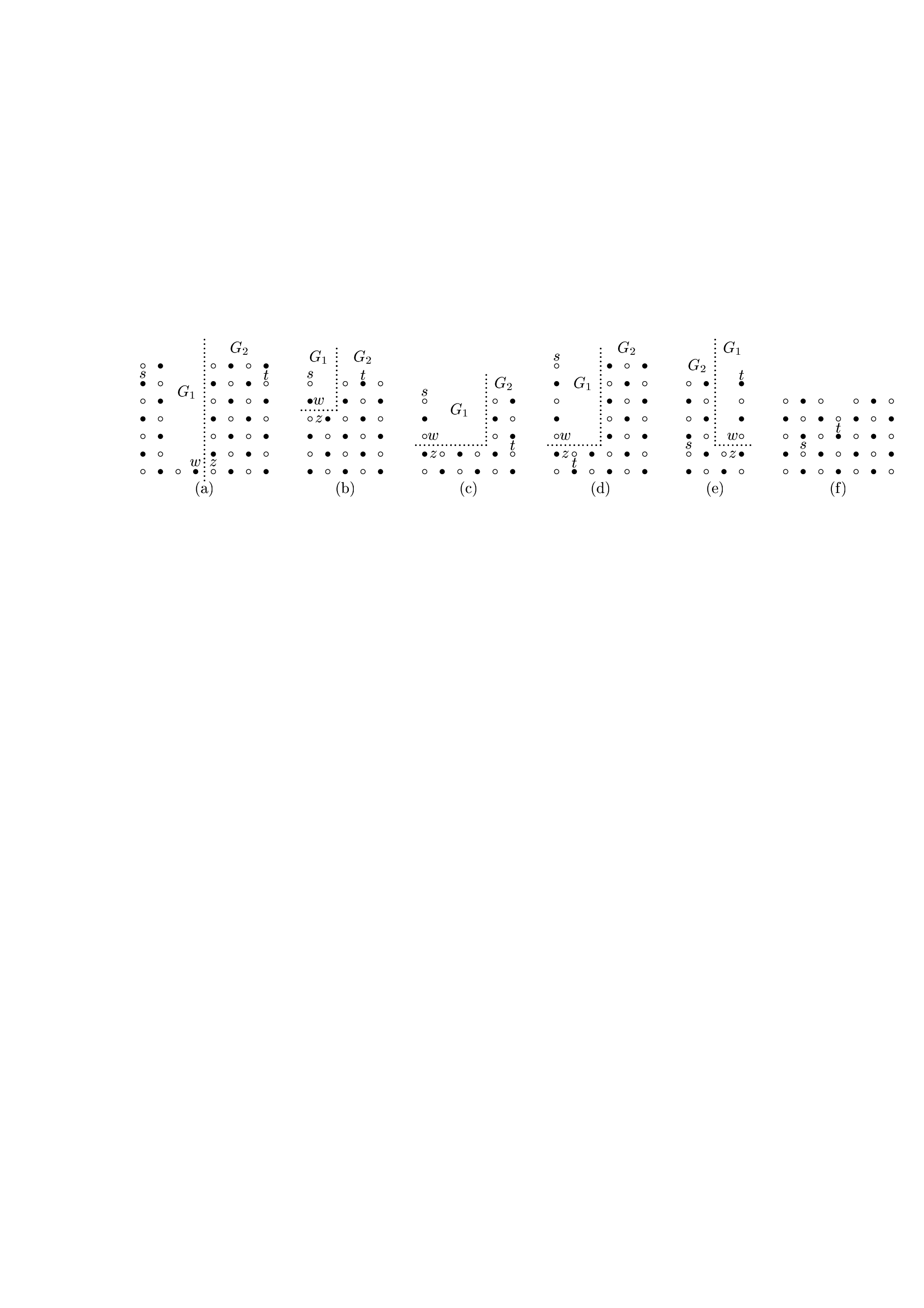}
  \caption[]%
  {\small The $C-$shaped grid graphs in which there is no Hamiltonian $(s,t)-$path.}
\label{fig:s1c}
\end{figure}
\begin{itemize}
\item [(F10)] $n-l=1$, $c,d>1$, and
\begin{itemize}
\item [(i)] $s_x,t_x\leq d$ or $s_x,t_x>d+k$ (Fig. \ref{fig:s1}(e) and \ref{fig:s1}(f)); or
\item [(ii)] Let $C(m,n,k,l)$ be even-sized, and let $\{G_1,G_2\}$ be a vertical separation of
$C(m,n,k,l)$ such that $G_1=L(m^{'},n,k,l)$, where $m^{'}=d+k$,
$G_2=R(m-m^{'},n)$ (as shown Fig. \ref{fig:s1c}(a)), and exactly a
vertex $w$ is in $G_1$ that is connected to $G_2$. Let $z\in G_2$ such
that $w$ and $z$ are adjacent.  And $s\in G_1$, $t\in G_2$, and
$(G_1,s,w)$ or $(G_2,z,t)$ is not acceptable (Fig.
\ref{fig:s1c}(a)).
\end{itemize}
\item [(F11)] $n-l>1$ and $[(d=1$, $c>1$, and $s=(1,1))$ or $(d>1$, $c=1$, and $t=(m,1))]$.
 Let $\{G_1,G_2\}$ be $L-$shaped separation $($ type I$)$ of $C(m,n,k,l)$ such that
 $G_1=R(m^{'},l)$, $G_2=L(m,n,k^{'},l)$, $m^{'}=d$ if $d=1$; otherwise $m^{'}=c$, and
 $k^{'}=k+m^{'}$. Let exactly a vertex $w$ be in $G_1$ that is connected to $G_2$ and let $z\in G_2$ such that $w$ and $z$ are adjacent.
 And one of the following cases occurs:
\begin{itemize}
\item [(i)] $d=1$, $t\in G_2$, and $(G_2,z,t)$ is not acceptable (Fig. \ref{fig:s1c}(b)-(d)); or
\item [(ii)] $c=1$, $s\in G_2$, and $(G_2,s,z)$ is not acceptable(Fig. \ref{fig:s1c}(e)).
\end{itemize}
\item [(F12)] $R(m,n)$ is odd$\times$odd with white majority color and $R(k,l)$ is odd$\times$odd with black majority color (Fig. \ref{fig:s1c}(f)).
\item [(F13)] $n=odd$, $n-l=2$, and $d,c>1$. Let $\{G_1,G_2\}$
be a vertical separation of $C(m,n,k,l)$ such that
$G_1=L(m^{'},n,k,l)$, $G_2=R(m-m^{'},n)$, and $m^{'}=d+k$ (or $G_1=L(m-m^{'},n,k,l)$, $G_2=R(m^{'},n)$, and $m^{'}=d)$. Let exactly two vertices
$u$ and $v$ be in $G_1$ that are connected to $G_2$. And one of the
following cases occurs
\begin{itemize}
\item [(a)] $C(m,n,k,l)$ is odd-sized and
\begin{itemize}
\item [(a$_1$)] $G_1$ is odd-sized, $G_2$ is even-sized, and
\begin{itemize}
\item [(a$_{11}$)] $s,t\in G_2$ (see Fig. \ref{fig:s2}(a)); or
\item [(a$_{12}$)] $s\in G_1$, $t\in G_2$, $s^{'}=s$, $t^{'}=u$
(or $t\in G_1$, $s\in G_2, t^{'}=t$, $s^{'}=u)$, and
$(G_1,s^{'},t^{'})$ satisfies condition (F1) ( i.e.,
$\{s^{'},t^{'}\}$ is a vertex cut$)$ (see Fig. \ref{fig:s2}(b)).
\end{itemize}
 \begin{figure}[tb]
  \centering
  \includegraphics[scale=1]{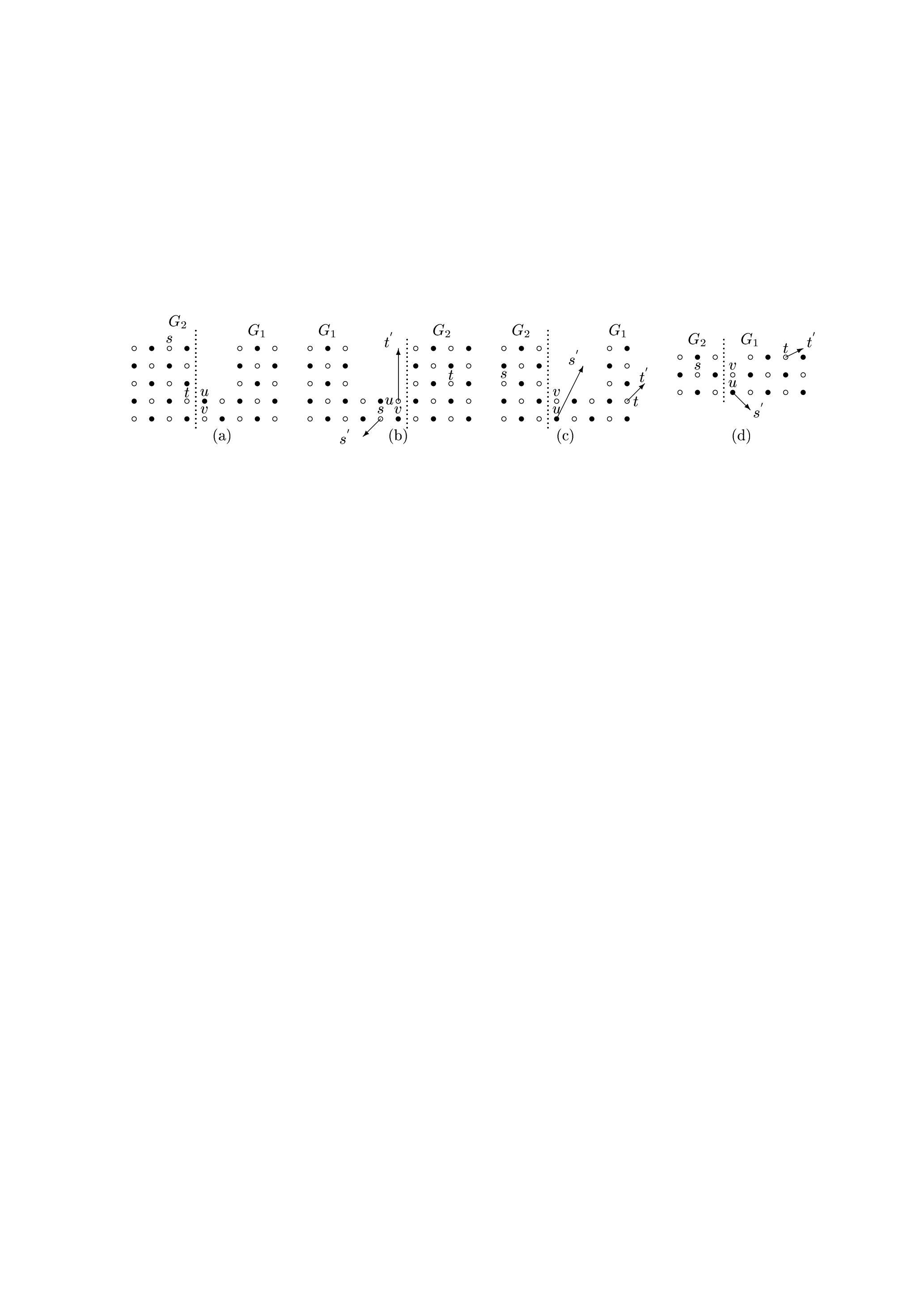}
  \caption[]%
  {\small The $C-$shaped grid graphs in which there is no Hamiltonian $(s,t)-$path.}
\label{fig:s2}
\end{figure}
\item [(a$_2$)] $m$ is even, $G_1$ is even-sized, $G_2$ is odd-sized,
$s\in G_1$, $t\in G_2$, $s^{'}=s$, $t^{'}=u$ $($or $t\in G_1$, $s\in
G_2, t^{'}=t$, $s^{'}=u)$, and $(G_1,s^{'},t^{'})$ satisfies
condition (F6) or (F8) (see Fig. \ref{fig:s2}(c) and
\ref{fig:s2}(d)).
\end{itemize}
\begin{figure}[tb]
  \centering
  \includegraphics[scale=1]{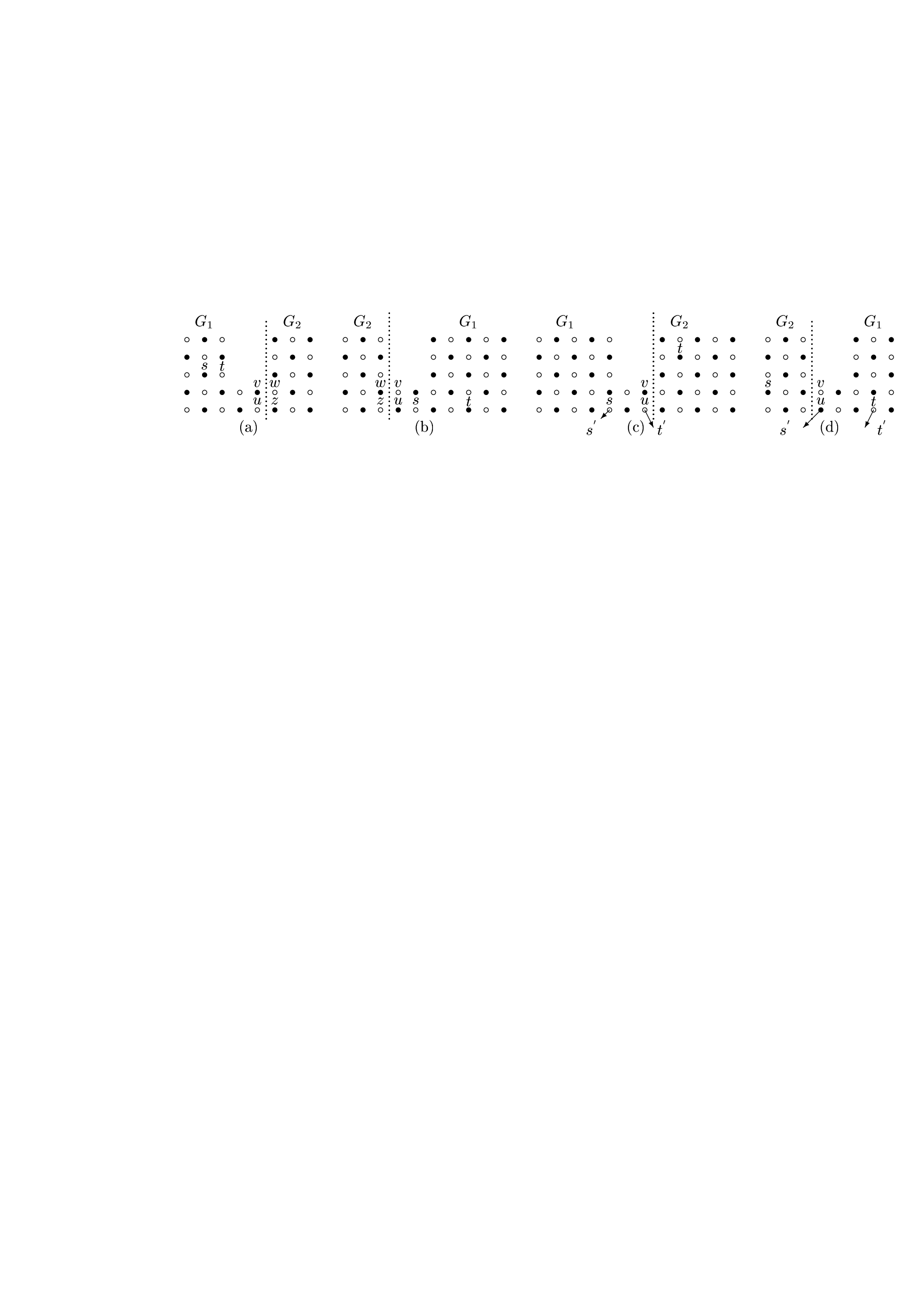}
  \caption[]%
  {\small The $C-$shaped grid graphs in which there is no Hamiltonian $(s,t)-$path.}
\label{fig:s6a}
\end{figure}
\item [(b)] $C(m,n,k,l)$ is even-sized, $d=odd$, $c=odd$, and
\begin{itemize}
\item [(b$_1$)] $s,t\in G_1$ (see Fig. \ref{fig:s6a}(a) and \ref{fig:s6a}(b)); or
\item [(b$_2$)] $s_x\leq d$, $t_x>d+k$, $s^{'}=s$, $t^{'}=u$ (or $s^{'}=u$, $t^{'}=t)$, and $(G_1,s^{'},t^{'})$ is not acceptable (see Fig. \ref{fig:s6a}(c) and \ref{fig:s6a}(d)).
\end{itemize}
\end{itemize}
\begin{figure}[tb]
  \centering
  \includegraphics[scale=1]{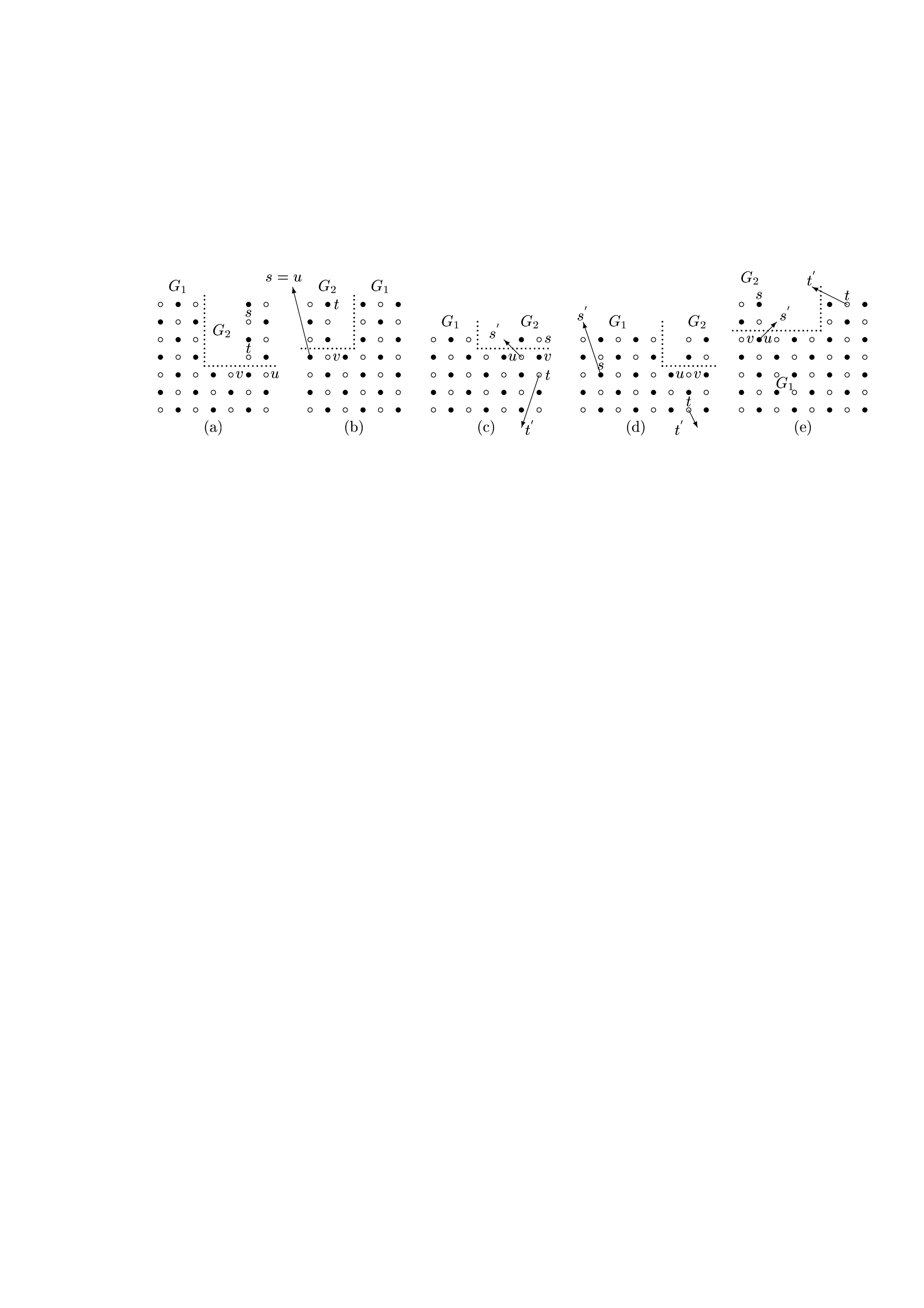}
  \caption[]%
  {\small The $C-$shaped grid graphs in which there is no Hamiltonian $(s,t)-$path.}
\label{fig:s4}
\end{figure}
\item [(F14)] $n=odd$, $n-l>2$, $[(d=odd>1$ and $c=2)$ or $(d=2$ and $c=odd>1)]$,
and $[(C(m,n,k,l)$ is odd-sized$)$ or $(m=even$ and $k\times
l=$odd$\times$even$)]$. Let $\{G_1,G_2\}$ be a $L-$shaped separation
$($type I$)$ of $C(m,n,k,l)$ such that $G_1=L(m,n,k^{'},l)$, where
$k^{'}=m-d$ or $k^{'}=m-c$, $G_2=R(2,l)$ $($see Fig. \ref{fig:s4}),
and exactly two vertices $u$ and $v$ are in $G_1$ that are connected
to $G_2$. And one of the following cases occurs
\begin{itemize}
\item [(a)] $C(m,n,k,l)$ is odd-sized, $[(m=even)$ or $(m=odd$ and $k=even)]$, and
\begin{itemize}
\item [(a$_1$)] $s,t\in G_2$ (see Fig. \ref{fig:s4}(a)); or
\item [(a$_2$)] $s=u$ and $t\in G_2$ $($or $s\in G_2$ and $t=u)$ (see Fig. \ref{fig:s4}(b)); or
\item [(a$_3$)] $m=$odd, $l=$odd, $s\in G_1$, $t\in G_2$, $s^{'}=s$, $t^{'}=u$ $($or $t\in G_1$, $s\in G_2, t^{'}=t$,
$s^{'}=u)$ and $(G_1,s^{'},t^{'})$ satisfies condition (F1) (that
is, $\{s^{'},t^{'}\}$ is a vertex cut$)$ (see Fig. \ref{fig:s4}(c)),
\end{itemize}
\item [(b)] $C(m,n,k,l)$ is even-sized, $s^{'}=s$, $t^{'}=t$, if $s^{'}$ (or $t^{'})\notin G_1$ then
$s^{'}=u$ (or $t^{'}=u)$, and $(G_1,s^{'},t^{'})$ satisfies
condition (F9) (see Fig. \ref{fig:s4}(d) and \ref{fig:s4}(e)).
\end{itemize}
\item [(F15)] $C(m,n,k,l)$ is odd-sized, $m=even$, $n=odd$, $n-l=4$, and
\begin{itemize}
\item [(i)]  $d=odd>1$, $[(l=1$ and $c=even\geq 4)$ or $(s_y,t_y>l$ and $c=2)]$, and $s_x,t_x>d+k+1$ (Fig. \ref{fig:s11}(a)); or
\item [(ii)] $c=odd>1$, $[(l=1$ and $d=even\geq 4)$ or $(s_y,t_y>l$ and $d=2)]$, and $s_x,t_x<d$.
\end{itemize}
\begin{figure}[tb]
  \centering
  \includegraphics[scale=1]{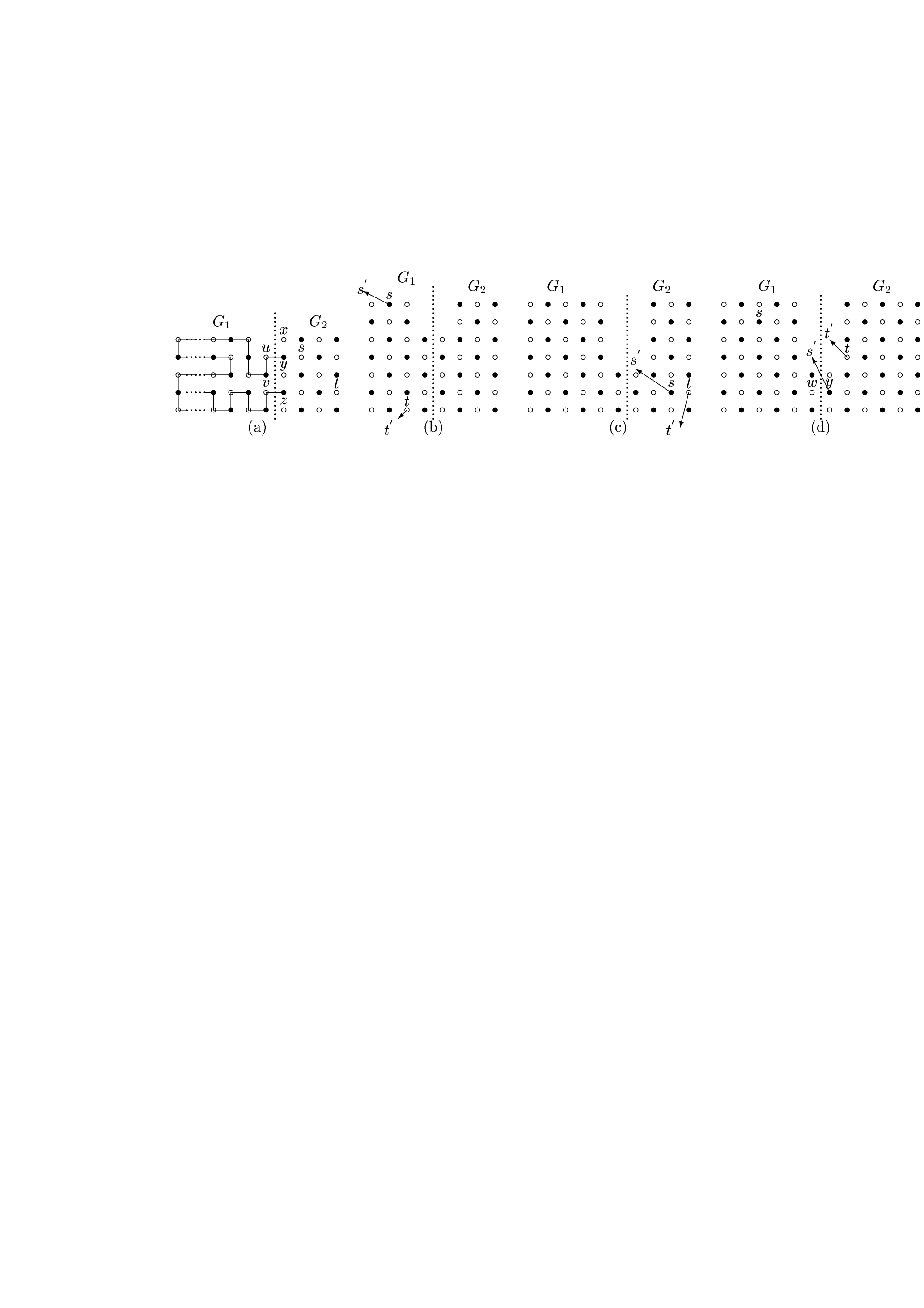}
  \caption[]%
  {\small The $C-$shaped grid graphs in which there is no Hamiltonian $(s,t)-$path.}
\label{fig:s11}
\end{figure}
\item [(F16)] $C(m,n,k,l)$ is even-sized, $m\times n=$even$\times$odd,
$c=odd>1$, $d=odd>1$, and $n-l=odd>1$. Assume that $\{G_1,G_2\}$ is
a vertical separation of $C(m,n,k,l)$ such that
$G_1=L(m^{'},n,k^{'},l)$, where $m^{'}=d+1$ and $k^{'}=m^{'}-d$,
$G_2=L(m-m^{'},n,k^{''},l)$, where $k^{''}=k-k^{'}$ (see Fig.
\ref{fig:s11}(b)-(d)), and at least three vertices $u$, $w$ and $v$
are in $G_1$ that are connected to $G_2$. Let $y\in G_2$ such that
$w$ and $y$ are adjacent. And $(s^{'}=s$ and $t^{'}=w$, if $t\in
G_1$ let $t^{'}=t)$ or $(s^{'}=y$ and $t^{'}=t$, if $s\in G_2$ let
$s^{'}=s)$, and $(G_1,s^{'},t^{'})$ or $(G_2,s^{'},t^{'})$ satisfies
condition (F9).
\begin{figure}[tb]
  \centering
  \includegraphics[scale=1]{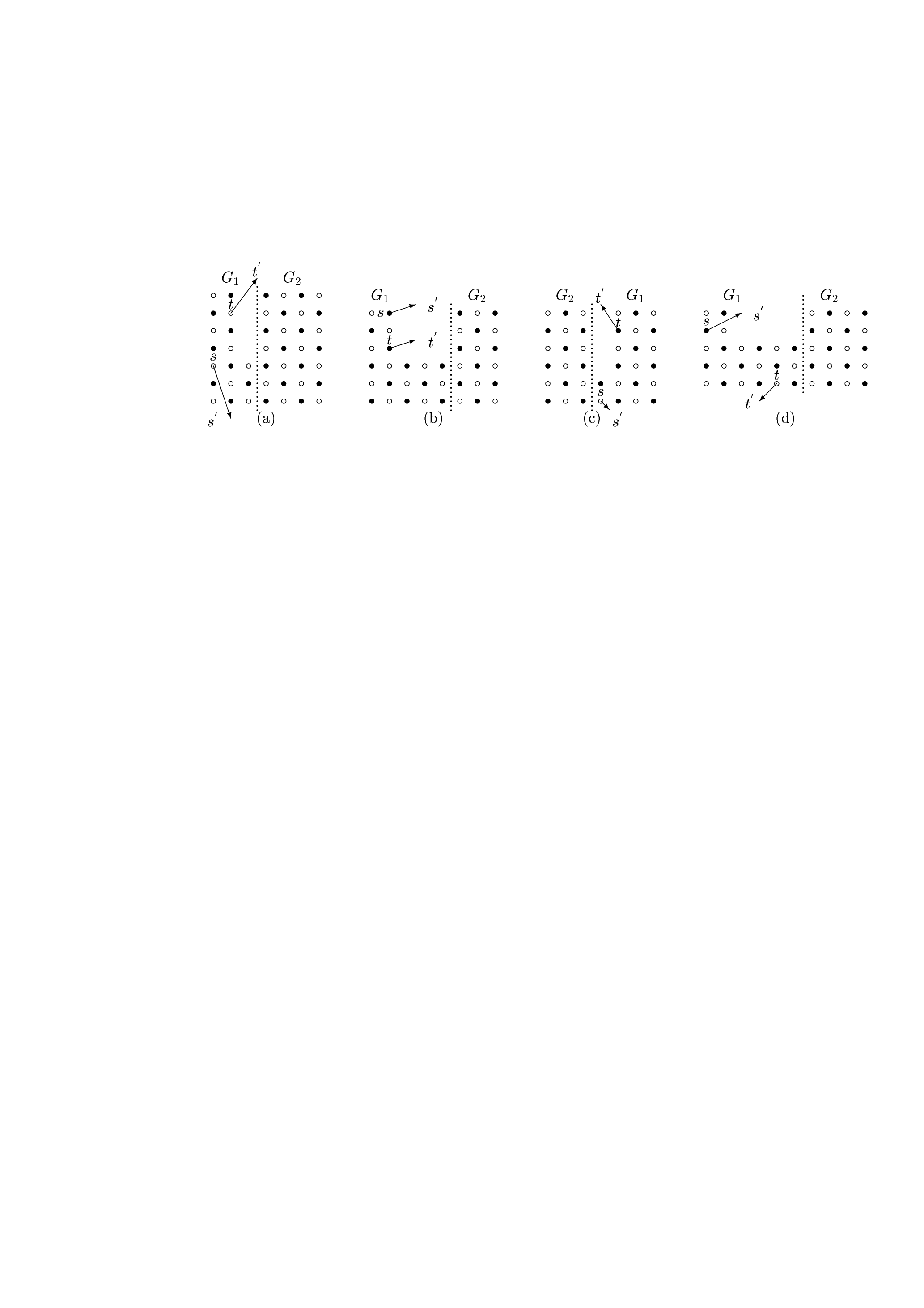}
  \caption[]%
  {\small The $C-$shaped grid graphs in which there is no Hamiltonian $(s,t)-$path.}
\label{fig:s11a1}
\end{figure}
\begin{figure}[tb]
  \centering
  \includegraphics[scale=1]{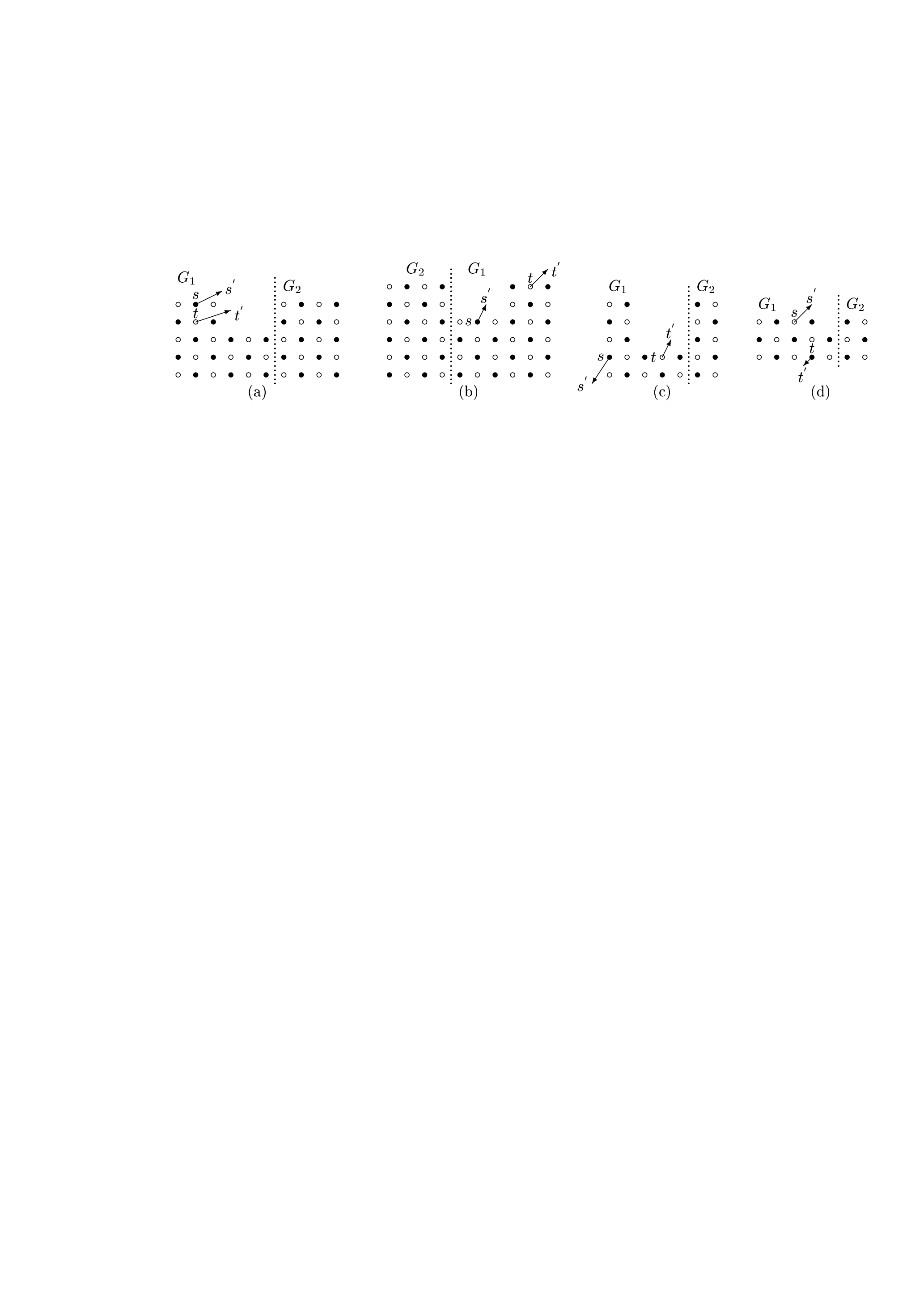}
  \caption[]%
  {\small The $C-$shaped grid graphs in which there is no Hamiltonian $(s,t)-$path.}
\label{fig:s11b1}
\end{figure}
\begin{figure}[tb]
  \centering
  \includegraphics[scale=1]{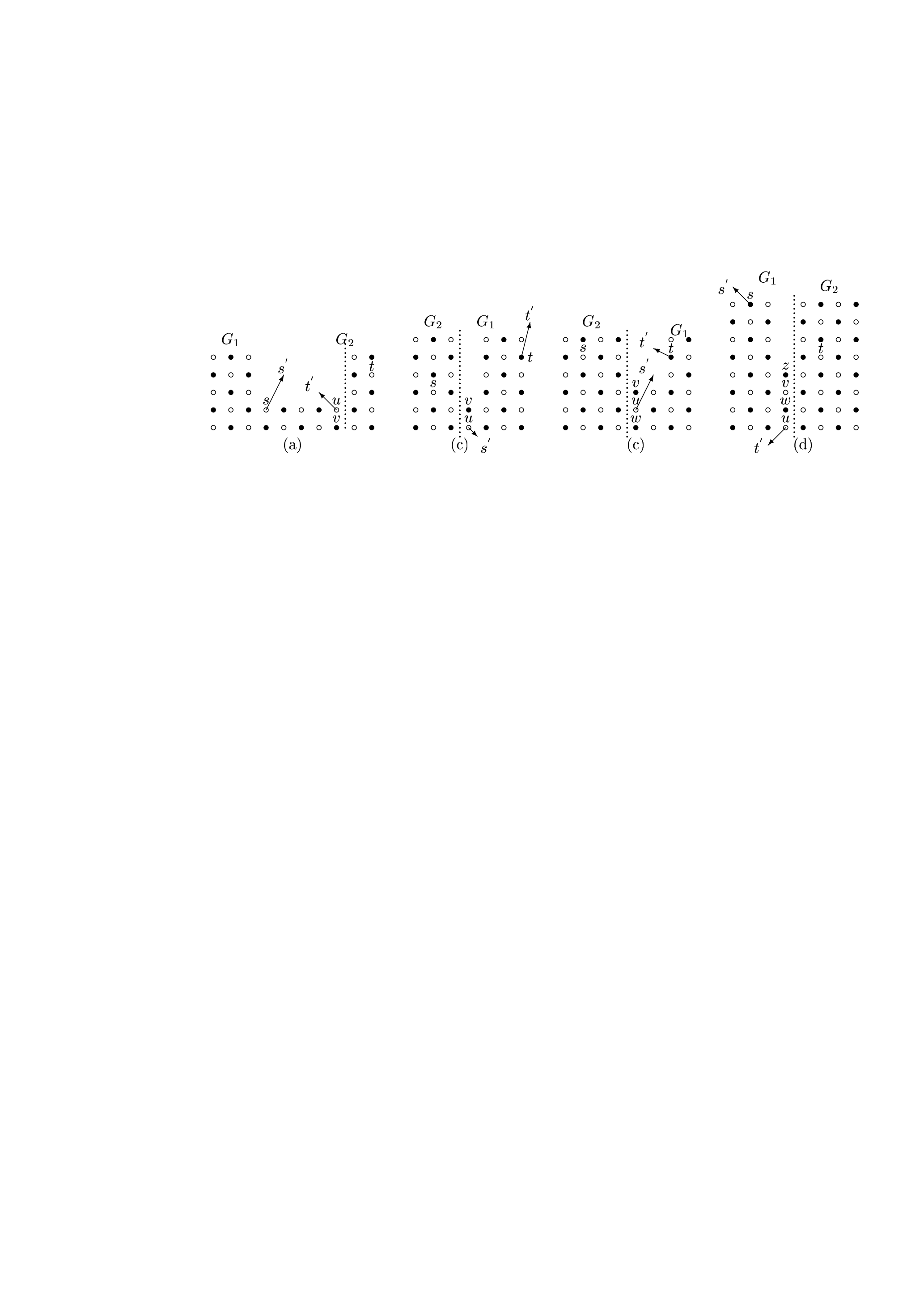}
  \caption[]%
  {\small The $C-$shaped grid graphs in which there is no Hamiltonian $(s,t)-$path.}
\label{fig:s11c1}
\end{figure}
\begin{figure}[tb]
  \centering
  \includegraphics[scale=.95]{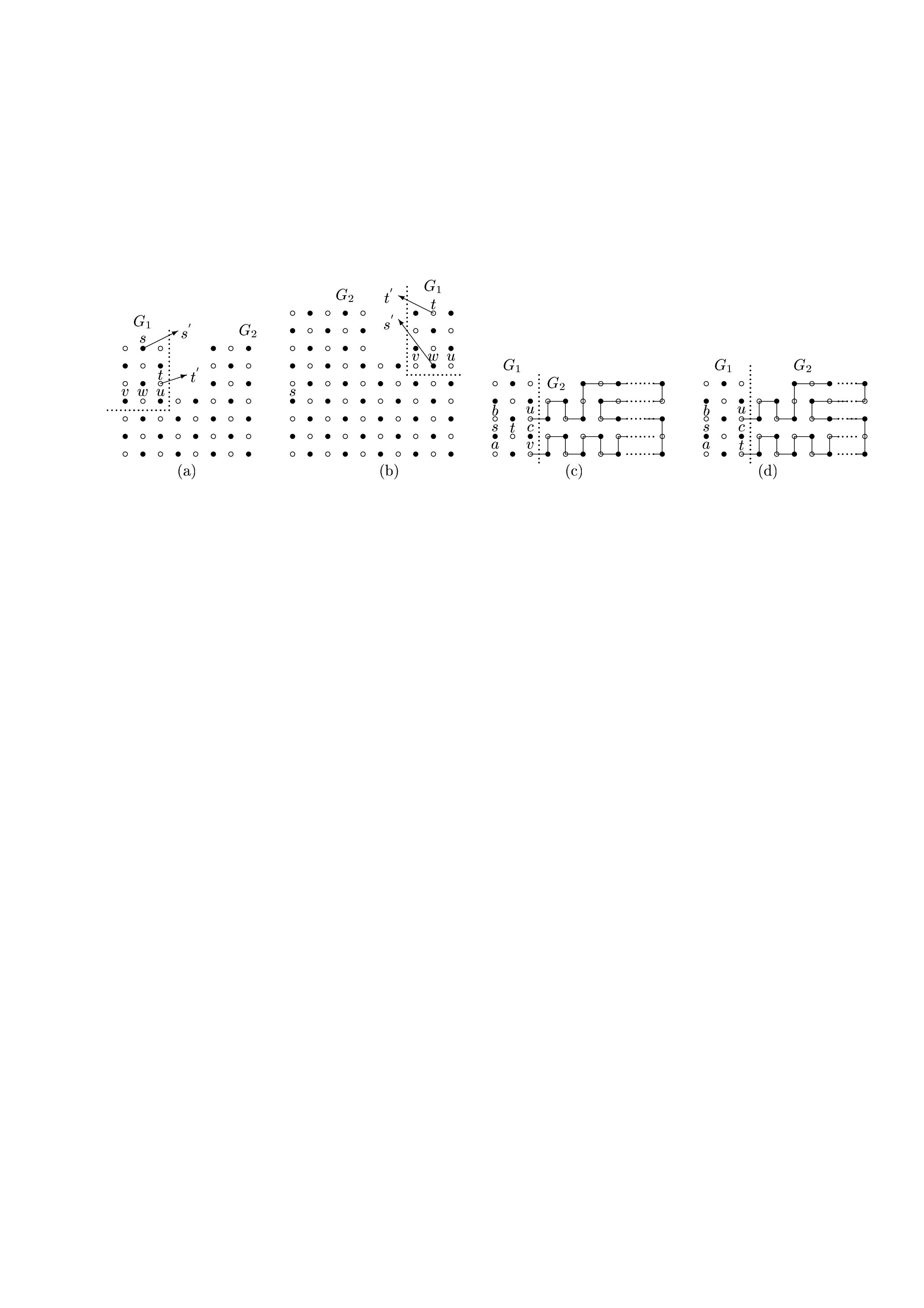}
  \caption[]%
  {\small The $C-$shaped grid graphs in which there is no Hamiltonian $(s,t)-$path.}
\label{fig:s11a2}
\end{figure}
\item [(F17)] $n-l\geq 2$, $d,c>1$ and $[(C(m,n,k,l)$ is odd-sized and $[(n=even)$
or $(n=odd$ and $[(m=even)$ or $(m=odd$ and $[k\times
l=$odd$\times$even or even$\times$odd$])])]$ or $(C(m,n,k,l)$ is
even-sized and $[(m\times n=$odd$\times$odd$)$, $(n=$even$)$, or
$(m\times n=$even$\times$odd and $[(c$ and $d$ are even$)$,
$(c=even\geq 4$ and $d=odd)$, or $(c=odd$ and $d=even\geq 4)])])]$.
Let $\{G_1,G_2\}$ be a vertical separation of $C(m,n,k,l)$ such that
$G_1=L(m^{'},n,k,l)$, $G_2=R(m-m^{'},n)$, $m^{'}=d+k$ (or $G_1=L(m-m^{'},n,k,l)$, $G_2=R(m^{'},n)$, $m^{'}=d)$, $G_2$ is even-sized, and at
least two vertices $v$ and $u$ are in $G_1$ which are connected to
$G_2$. If $C(m,n,k,l)$ is even-sized, $k=1$, and $n-l=even\geq 4$,
then let $m^{'}-k>2$ $($ or $m-m^{'}-k>2)$ in $G_1$. And $s^{'}=s$, $t^{'}=t$, if
$s^{'}$ (or $t^{'})\notin G_1$ then $s^{'}=u$ (or $t^{'}=u)$, and
$(G_1,s^{'},t^{'})$ satisfies one of the conditions (F5), (F6),
(F7), (F8), or (F9) (Fig. \ref{fig:s11a1}, \ref{fig:s11b1}, and
\ref{fig:s11c1}).
\item [(F18)] $C(m,n,k,l)$ is even-sized, $n=odd$, $d=odd>1$, $c=odd>1$, $n-l=even\geq 4$, and one of the following cases occurs
\begin{itemize}
\item [(a)] Let $\{G_1,G_2\}$ be a $L-$shaped separation $($type I$)$ of
$C(m,n,k,l)$ such that $G_1$ is an even-sized rectangular grid
subgraph with $V(G_1)=\{1\leq x\leq d $ $($or $d+k+1\leq x\leq m)$
and $1\leq y\leq l+1\}$, $G_2$ is an even-sized solid grid subgraph
(see Fig. \ref{fig:s11a2}(a) and \ref{fig:s11a2}(b)), and exactly
three vertices $v$, $w$, and $u$ are in $G_1$ that are connected to
$G_2$. And $s^{'}=s$, $t^{'}=t$, if $s^{'}$ (or $t^{'})\notin G_1$
then $s^{'}=w$ (or $t^{'}=w)$, and $(G_1,s^{'},t^{'})$ satisfies
condition (F2); or
\item [(b)] $n-l=4$ and
\begin{itemize}
\item [(b$_1$)] $s_y,t_y>l+1$ and $[(d=3, \ s_x,t_x\leq d,\  s=(1,n-1)$, and $t_x>s_x)$ or $(c=3,\ s_x,t_x>d+k,\ s_x<t_x,$ and
$t=(m,n-1))]$ (Fig. \ref{fig:s11a2}(c), \ref{fig:s11a2}(d), and \ref{fig:s11b2}(a)); or
\item [(b$_2$)] $s$ is black and $[(s_x\leq d$ and $t_x>d)$ or
$(d+1\leq s_x\leq d+k$ and $t_x>d+k)]$ (Fig. \ref{fig:s11b2}(b) and
\ref{fig:s11b2}(c)); or
\item [(b$_3$)] $d+1\leq s_x,t_x\leq d+k$ and $[(t_x>s_x$ and $s$ is black$)$
or $(s_x=t_x$, $s_y$ (or $t_y)=l+2$, and $t_y$ (or $s_y)=l+3)]$
(Fig. \ref{fig:s11b2}(d) and \ref{fig:s11c2}).
\end{itemize}
\end{itemize}
\begin{figure}[tb]
  \centering
  \includegraphics[scale=1]{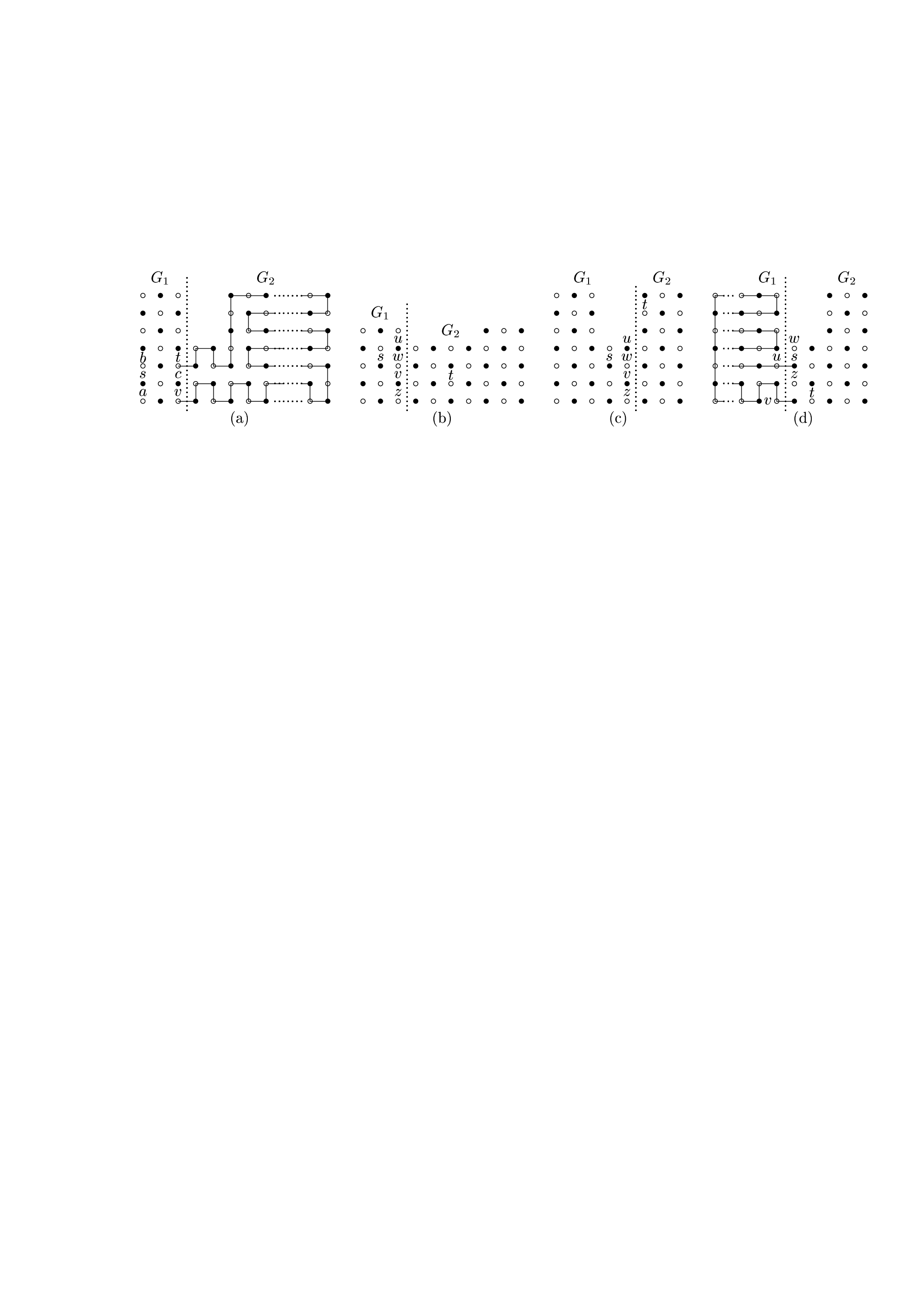}
  \caption[]%
  {\small The $C-$shaped grid graphs in which there is no Hamiltonian $(s,t)-$path.}
\label{fig:s11b2}
\end{figure}
\begin{figure}[tb]
  \centering
  \includegraphics[scale=.93]{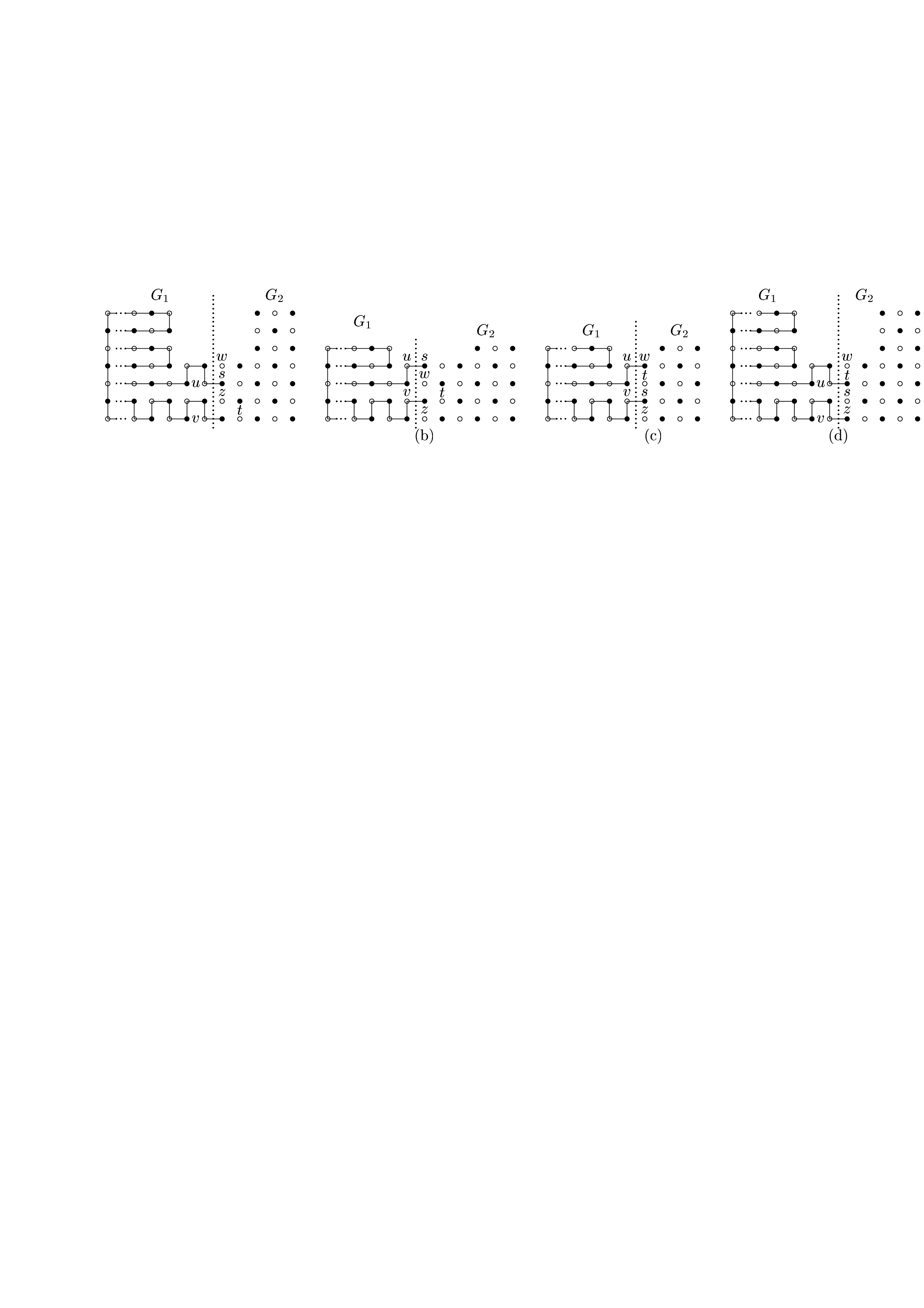}
  \caption[]%
  {\small The $C-$shaped grid graphs in which there is no Hamiltonian $(s,t)-$path.}
\label{fig:s11c2}
\end{figure}
\end{itemize}
\par The following results directly follows from conditions (F1), (F3), and (F10)-(F18).
\begin{co}\label{Corollary:co3}
Suppose that $C(m,n,k,l)$ is a $C-$shaped grid graph with two given vertices
$s$ and $t$. Let $\{G_1,G_2\}$ be a vertical (or $L-$shaped $($type I$)$) separation  of
$C(m,n,k,l)$ such that $G_1$ is a $L-$shaped grid subgraph, $G_2$ is a rectangular grid subgraph, and
$s,t\in G_1$. If $(C(m,n,k,l)$ is not acceptable, then $(G_1,s,t)$ is not acceptable.\end{co}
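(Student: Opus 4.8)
The plan is to prove the statement directly, by unwinding what ``not acceptable'' means on each side. For the $C$-shaped graph, $(C(m,n,k,l),s,t)$ is not acceptable exactly when it is not color-compatible or it satisfies one of (F1), (F3), (F10)--(F18); for the $L$-shaped subgraph, $(G_1,s,t)$ is not acceptable exactly when it is not color-compatible or it satisfies one of (F1), (F3)--(F9). So the task is to show that each way in which $C$ can fail forces one of these $L$-failures on $G_1$. Two elementary tools drive everything: first, Lemma \ref{Lemma:c0}, which transfers color information across the separation $\{G_1,G_2\}$ once the color balance of the rectangular piece $G_2$ is known; and second, the fact that $G_1$ is a vertex-induced subgraph of $C$, so the degree of any $v\in V(G_1)$ in $G_1$ is at most its degree in $C$, while the colors of $s$ and $t$ are the same whether computed in $C$ or in $G_1$.

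I would first dispose of color-compatibility and of the ``built-in'' conditions, which make up the bulk of the cases. If $G_2$ is even-sized then $w_{G_2}=b_{G_2}$, so $C$ and $G_1$ share the same parity and the same majority color; hence $(C,s,t)$ and $(G_1,s,t)$ are color-compatible or not together, and a color-compatibility failure of $C$ immediately yields one of $G_1$. For the explicit conditions, every sub-case of (F10)--(F18) that places $s$ or $t$ in $G_2$ is vacuous under the hypothesis $s,t\in G_1$; the surviving sub-cases are precisely those written in the form ``$s'=s$, $t'=t$, and $(G_1,s',t')$ satisfies (F$j$)'' for some $j\in\{1,2,5,6,7,8,9\}$ (this is how (F13)(b$_2$), (F14)(b), (F16), (F17), and (F18)(a) are phrased). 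Since $s,t\in G_1$ forces $s'=s$ and $t'=t$, each of these states outright that $(G_1,s,t)$ satisfies an $L$-failure, so $G_1$ is not acceptable.

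The work concentrates in three places, and the genuine obstacle is the interface between $G_1$ and $G_2$. (i) The separation fixed in the statement must be reconciled with the separation named inside the failing condition: I must check that when $C$ satisfies, say, (F17) or (F13), the witnessing vertical (or $L$-shaped type I) cut can be taken to be the one peeling off the rectangular $G_2$ of the corollary, so that its $L$-shaped remainder is exactly $G_1$. For conditions whose natural second piece is not rectangular (notably (F16) and (F18)(a)), I would argue directly from colors and cuts rather than quoting the embedded $L$-failure. (ii) When $G_2$ is odd-sized, as in (F12) and in case (b) of (F13), the parities of $C$ and $G_1$ no longer agree, so the color-compatibility transfer of Lemma \ref{Lemma:c0} must track the majority color rather than parity alone; this is where I expect the most careful bookkeeping. (iii) Finally, for (F1) and (F3) I would show that a degree-$1$ vertex, cut vertex, or vertex cut of $C$ contained in $G_1$ remains one in $G_1$: degree monotonicity handles (F3) unless the offending vertex is joined to $C$ only through $G_2$, which cannot happen for an induced $L$-shaped subgraph on more than one vertex, and the connectivity of the attached rectangle $G_2$ reduces the cut cases to the same statement inside $G_1$. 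Verifying that this interface argument closes uniformly across the thin and degenerate cases ($n-l=1$, $2$-rectangles, and the like) is the hardest step.
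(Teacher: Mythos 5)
Your plan cannot be carried out, because the implication you are trying to prove is false as written, and it fails precisely at the two interface steps you flagged as the ``genuine obstacle.'' For your step (iii): a cut vertex of $C$ lying in $G_1$ need \emph{not} remain a cut vertex of $G_1$, since the only thing its removal may separate off is $G_2$ itself. Concretely, take $C(4,3,1,2)$ with $d=1$, $k=1$, $c=2$ (so $n-l=1$): this graph is the path $(1,1),(1,2),(1,3),(2,3)$ joined to the block on columns $3,4$ by the single edge between $(2,3)$ and $(3,3)$. Take $s=(2,3)$, $t=(1,1)$, and the vertical separation with $G_1=L(2,3,1,2)$ on columns $1,2$ and $G_2=R(2,3)$ on columns $3,4$; note $G_2$ is even-sized, so this lies inside your ``easy'' color case. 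Then $s$ is a cut vertex of $C$, so $(C(4,3,1,2),s,t)$ satisfies (F1) and is not acceptable; but $G_1$ is itself a path from $s$ to $t$, so $(G_1,s,t)$ is Hamiltonian and hence acceptable by Theorem \ref{Theorem:6t}. Here $C-s$ separates nothing but $G_2$, and $G_1-s$ stays connected. Your step (ii) breaks independently: $C(3,5,1,1)$ with $d=c=1$ satisfies (F12) (eight white versus six black vertices), hence is not acceptable for \emph{every} choice of $s,t$; yet for the vertical separation with $G_2=R(1,5)$ (odd-sized) and $s=(3,1)$, $t=(3,3)$, the piece $G_1$ is odd-sized with white majority, and $(G_1,s,t)$ is color-compatible and in fact Hamiltonian. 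These are counterexamples to Corollary \ref{Corollary:co3} itself, not merely to your lemmas. They also show that your reading of (F10)--(F18) is too optimistic: (F12), (F15), and (F18)(b) are not phrased as an embedded $L$-failure of $G_1$ at all, and for the conditions that are so phrased, the separation they quantify over need not coincide with the one fixed in the corollary.

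On the comparison you were asked to make: the paper offers no proof whatsoever --- the corollary is introduced only by the sentence that it ``directly follows'' from (F1), (F3), and (F10)--(F18), an assertion the examples above refute in the direction stated. What \emph{does} follow directly from the phrasing of (F13), (F14), (F16)--(F18), and what the paper actually uses throughout Section 4 (arguments of the form ``if $(G_1,s,t)$ satisfies condition (F9), then $(C(m,n,k,l),s,t)$ satisfies condition (F16), a contradiction''), is the converse implication: for these separations, non-acceptability of $(G_1,s,t)$ forces non-acceptability of $(C(m,n,k,l),s,t)$; equivalently, acceptability of $C$ forces acceptability of $G_1$, which is how the sufficiency lemmas obtain acceptable separations. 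Your even-sized-$G_2$ color transfer and your reading-off of the embedded conditions are correct ingredients for that converse statement, and your outline would be a reasonable skeleton for it; but for the implication as the corollary (and your proposal) states it, the interface steps you isolated are not merely the hardest part --- they are impossible.
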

\begin{defi}
A $C-$shaped Hamiltonian path problem $(C(m,n,k,l), s, t)$ is
\textit{acceptable} if it is color-compatible and $(C(m,n,k,l),s,t)$
does not satisfy any of conditions (F1), (F3), and (F10)-(F18).
\end{defi}
\par
We define the length of a path in a grid graph be the number of vertices of the path. In
any grid graph, the length of any path between two same-colored vertices is odd and the
length of any path between two different-colored vertices is even.
\begin{thm} \label{Theorem:1h}
Let $C(m,n,k,l)$ be a $C-$shaped grid graph and $s$ and $t$ be two
distinct vertices of it. If $(C(m,n,k,l), s, t)$ is Hamiltonian,
then $(C(m,n,k,l),s,t)$ is acceptable.
\end{thm}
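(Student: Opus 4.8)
The plan is to prove the contrapositive: assuming $(C(m,n,k,l),s,t)$ is \emph{not} acceptable, I will exhibit an obstruction to any Hamiltonian $(s,t)$-path. Non-acceptability means that the problem is either not color-compatible or satisfies one of (F1), (F3), (F10)--(F18), so the argument splits along these alternatives. The failure of color-compatibility is already ruled out by the bipartite alternation argument recalled just before the statement. For (F1) I will use the standard cut bound for Hamiltonian paths: after deleting a set $S$ of vertices a Hamiltonian $(s,t)$-path breaks into at most $|S|+1$ subpaths, and into at most $|S|-1$ when $s,t\in S$; hence if deleting a cut vertex or the pair $\{s,t\}$ leaves too many components the path cannot cover them all. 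For (F3) the obstruction is immediate, since a vertex $w$ with $degree(w)=1$ can appear only as an endpoint of a covering path, so $w\notin\{s,t\}$ precludes a Hamiltonian $(s,t)$-path.

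The core of the proof is the treatment of (F10)--(F18), each phrased through a separation $\{G_1,G_2\}$ (Definition~\ref{Definition:d3}) in which only one, two, or three vertices of $G_1$ are adjacent to $G_2$. The key technique is to restrict a hypothetical Hamiltonian $(s,t)$-path $P$ to the two parts. Because each passage between the parts consumes one of the finitely many crossing edges, the trace $P\cap G_1$ is a union of at most that many vertex-disjoint subpaths whose endpoints lie in $\{s,t\}\cap V(G_1)$ together with the crossing vertices $u,v,w$. A short crossing count combined with the size parities of $G_1,G_2$ then forces $P\cap G_1$ to be a single Hamiltonian path of $G_1$ whose endpoints are exactly the vertices $s',t'$ named in each condition (and symmetrically for $G_2$), where Lemma~\ref{Lemma:c0} certifies that the induced subproblem stays color-compatible so the characterizations apply. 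Invoking Theorem~\ref{Theorem:1a} for rectangular $G_1$ and Theorem~\ref{Theorem:6t} for $L$-shaped $G_1$, the stipulated non-acceptability of $(G_1,s',t')$ (or its satisfying one of (F1), (F2), (F5)--(F9)) contradicts the existence of this forced subpath. This single scheme dispatches (F11), (F13), (F14), (F16), (F17), and (F18)(a); the one-crossing case (F10)(ii) and (F11) produce two forced subpaths and use both Theorem~\ref{Theorem:1a} and Theorem~\ref{Theorem:6t}, while the two-crossing cases additionally yield vertex-cut obstructions of the form ``$\{s',t'\}$ cuts $G_1$'' (as in (F13)(a$_{12}$) and (F14)(a$_3$)).

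A few conditions lie outside this clean reduction and need direct arguments. Condition (F12) is purely a color count: removing an odd$\times$odd black-majority rectangle from an odd$\times$odd white-majority rectangle leaves $\#\text{white}-\#\text{black}=2$ in the even-sized graph $C$, which no color-alternating path can cover, so no Hamiltonian path exists for any choice of $s,t$. Conditions (F10)(i), (F15), and (F18)(b) I will settle by local forced-move analysis near a thin arm or a corner of the $C$: there the successive edges of $P$ are essentially determined, and tracing them shows that some vertex is necessarily missed or some color balance is violated. Throughout I will rely on the parity fact recorded before the theorem, namely that a path between two equal-colored vertices has odd length and between differently colored vertices even length, to keep the size and color bookkeeping consistent.

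The main obstacle will be the sheer size and delicacy of the case analysis rather than any single deep idea: for each of (F10)--(F18) one must select the correct separation, determine which crossing vertex becomes the forced endpoint $s'$ or $t'$, and verify the colors and sizes so that precisely the claimed non-acceptable subproblem is produced. I expect the hardest cases to be those with three crossing vertices, (F16) and (F18)(a), where $P\cap G_1$ may a priori consist of two subpaths and extra work is required to force a single Hamiltonian path realizing an (F9)-type obstruction, together with the direct corner arguments (F15) and (F18)(b), which do not reduce to a known rectangular or $L$-shaped instance and must be closed by hand.
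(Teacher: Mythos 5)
Your proposal is correct and follows essentially the same route as the paper's own proof: arguing by contrapositive, handling failure of color-compatibility via bipartite alternation, disposing of (F12) by the white/black count, and treating (F10)--(F18) through separations whose few crossing vertices force the restriction of any Hamiltonian $(s,t)$-path to $G_1$ (or $G_2$) to be a Hamiltonian path with prescribed endpoints, contradicting Theorem \ref{Theorem:1a} or Theorem \ref{Theorem:6t}, with direct forced-move/parity arguments for the residual cases such as (F10)(i), (F15), and (F18)(b). No substantive difference from the paper's argument.
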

\begin{proof}
Arguing by contrapositive, Suppose $(C(m,n,k,l),s,t)$ is not
acceptable, then $(C(m,n,k,l), s, t)$ has no Hamiltonian
$(s,t)-$path. Clearly, if $(C(m,n,k,l),s,t)$ is not color-compatible
then $(C(m,n,k,l), s, t)$ has not Hamiltonian $(s,t)-$path. Thus,
without loss of generality, suppose $(C(m,n,k,l),s,t)$ is
color-compatible. In the following, we will show that if one of the
conditions (F1), (F3), and (F10)-(F18) holds, then $(C(m,n,k,l), s,
t)$ has no Hamiltonian $(s,t)-$path. \par  (F1) and (F3): See Fig.
\ref{fig:s1}(a)-(d).
\par (F10): (i) Consider Fig. \ref{fig:s1}(e) and \ref{fig:s1}(f). Since $n-l=1$,
we can easily see that there is no Hamiltonian path in
$C(m,n,k,l)$.\par  (ii) Consider Fig. \ref{fig:s1c}(a). Since
$n-l=1$, then the Hamiltonian path $P$ of $C(m,n,k,l)$ which starts
from $s$ must pass through all the vertices of $G_1$, and leaves
$G_1$ at $w$, then enter to $G_2$ at $z$, and end at $t$. Clearly,
if $(G_1,s,w)$ or $(G_2,z,t)$ is not acceptable, then by Theorem
\ref{Theorem:6t} or \ref{Theorem:1a}, $(G_1,s,w)$ or $(G_2,z,t)$ has
no Hamiltonian path, respectively, and hence $(C(m,n,k,l),s,t)$ has
no Hamiltonian $(s,t)-$path.
\par(F11). The proof is a straightforward; see Fig. \ref{fig:s1c}(b)-(e).
\par (F12): Consider Fig. \ref{fig:s1c}(f). Notice that, here, $C(m,n,k,l)$ is even-sized
and the number of vertices with white color is two more than the
number of vertices with black color. Since $C(m,n,k,l)$ is
even-sized and colors of vertices of any path must alternate between
black and white, it is clear that two vertices with white color
remain out of the path, and hence $C(m,n,k,l)$ has no Hamiltonian
$(s,t)-$path.
\par (F13): (a$_{11}$) and (b$_1$) Consider Fig. \ref{fig:s2}(a), \ref{fig:s6a}(a), and \ref{fig:s6a}(b). Clearly, since $n-l=2$, the Hamiltonian path $P$ must enter to
$G_1$ (resp. $G_2$, where $C(m,n,k,l)$ is odd-sized) through one of the vertices $u$ (or $v$) (resp. $w$ (or $z))$ then the path $P$
leaves $G_1$ (resp. $G_2$) after visiting the vertices of $G_1$ by $v$ (or $u$) (resp. $G_2$ by $z$ (or $w$)) .
It is clear that $(G_1,v,u)$ (resp. $(G_2,w,z))$ is not acceptable, because $G_1$ (resp. $G_2)$ is
odd-sized and $v$ and $u$ (resp. $w$ and $z$) have different colors. Thus by Theorem
\ref{Theorem:6t}, $(G_1,u,v)$ (resp. $(G_2,w,z))$ does not have any Hamiltonian
$(u,v)-$path (resp. $(w,z)-$path). Hence, $(C(m,n,k,l),s,t)$ has no Hamiltonian
$(s,t)-$path. \par (a$_{12}$), (a$_2)$, and (b$_2$) Consider Fig.
\ref{fig:s2}(b)-(d), \ref{fig:s6a}(c), and \ref{fig:s6a}(d). Since $n-l=2$, the Hamiltonian path $P$ of
$C(m,n,k,l)$ which starts from $s$ must pass through all the
vertices of $G_1$ (or $G_2$), leave $G_1$ at one of the vertices $v$
or $u$ (or leave $G_2$), then enter to $G_2$ ( enter to $G_1$) and
pass through all the vertices of $G_2$ (or $G_1$) and end at $t$.
Clearly, if $(G_1,s^{'},t^{'})$ is not acceptable then by Theorem \ref{Theorem:6t} $(G_1,s^{'},t^{'})$ has no
Hamiltonian path, and hence $(C(m,n,k,l),s,t)$ has no Hamiltonian
$(s,t)-$path.
\par (F14): (a) Like in the proof of condition (F13), we can obtain that $(C(m,n,k,l),s,t)$
has no Hamiltonian path (see Fig. \ref{fig:s4}(a)-(c)).\par (b)
Since $G_2$ is connected to $G_1$ by two vertices $v$ and $u$, using
the same argument as in the proof [\cite{991}, Theorem 3.2,
condition (F8)], it can be proved that  $C(m,n,k,l)$ does not have
any Hamiltonian $(s,t)-$path (see Fig. \ref{fig:s4}(d) and
\ref{fig:s4}(e)). Note that, here, $G_1$ is a $L-$shaped grid
subgraph.\par (F15): Suppose that $c=even$. Consider Fig.
\ref{fig:s11}(a). Let $\{G_1,G_2\}$ be a vertical separation of $C(m,n,k,l)$ such that $G_1=L(m^{'},n,k,l)$, $G_2=R(m-m^{'},n)$, and $m^{'}=d+k$. Since
$c=even$, thus $G_2$ is even-sized. Moreover, since $C(m,n,k,l)$ is
odd-sized, we conclude that $G_1$ is odd-sized. Since $G_1$ is
odd-sized and $n-l=4$, a Hamiltonian path $P$ of $C(m,n,k,l)$ that
starts from $s$ must be enter to $G_1$ for the first time through
one of the vertices $u$ (or $v$) then pass through all the vertices
of $G_1$ and leave $G_1$ at $v$ (or $u$) and end at $t$. Clearly in
this case, if $s_x,t_x>d+k+1$, then one of the three vertices $x$,
$y$, or $z$ remains out of the path. By symmetry, the result
follows, if $d=even$.
\par (F16): By condition (F9), the proof is Straightforward (as shown in Fig. \ref{fig:s11}(b)-(d)).
\par (F17): This follows immediately from conditions (F5)-(F9); see Fig. \ref{fig:s11a1}, \ref{fig:s11b1}, and \ref{fig:s11c1}.\par
(F18): (a) Since $G_1$ is connected to $G_2$ by three vertices $v$,
$w$ and $u$, in a similar manner as in the proof [\cite{991},
Theorem 3.2, condition (F9)], we derive $C(m,n,k,l)$ does not have
any Hamiltonian $(s,t)-$path (see Fig. \ref{fig:s11a2}(a) and
\ref{fig:s11a2}(b)).
\par (b$_1$) We shall only prove the first case
$(s_x,t_x\leq d)$. The other case ($s_x,t_x>d+k$) is similar.
Consider Fig. \ref{fig:s11a2}(c), \ref{fig:s11a2}(d), and \ref{fig:s11b2}(a). Let $\{G_1,G_2\}$ be a
vertical separation (type I) of $C(m,n,k,l)$ such that
$G_1=R(m^{'},n)$, $G_2=L(m-m^{'},n,k,l)$, $m^{'}=d$ and $s,t\in
G_1$. Let $u=(d,l+2)$ and $v=(d,n)$. Notice that, here, $G_1$ and
$G_2$ are odd-sized. Since $n-l=4$ and $G_2$ is odd-sized, thus the
Hamiltonian path $P$ of $C(m,n,k,l)$ which starts from $s$ should
pass through some vertices of $G_1$, leaves $G_1$ at $u$ (or $v)$,
then passes through all the vertices of $G_2$ and reenters to $G_1$
at $v$ (or $u)$, and passes through all the remaining vertices of
$G_1$ and ends at $t$. A simple check shows that one of the three
vertices $a$, $b$, and $c$ remains out of path $P$.
 \par (b$_2$)
Consider Fig. \ref{fig:s11b2}(b) and
\ref{fig:s11b2}(c). Let
$\{G_1,G_2\}$ be a vertical separation of $C(m,n,k,l)$ such that
$G_1=R(m^{'},n)$, $G_2=L(m-m^{'},n,k,l)$, where $m^{'}=d$ and
$s_x\leq d$, or $G_1=L(m^{'},n,k,l)$ and $G_2=R(m-m^{'},n)$, where
$m^{'}=d+k$ and $d+1\leq s_x\leq d+k$,  $s\in G_1$, and $t\in G_2$.
Note that, in this case, $G_1$ is odd-sized with white majority
color and $G_2$ is odd-sized with black majority color. The
following cases may be considered. \par Case 1. The Hamiltonian path
$P$ of $C(m,n,k,l)$ which starts from $s$ should pass through all
the vertices of $G_1$, leaves $G_1$ at $w$ (or $z)$ (or $v$ (or
$u)$, enters $G_2$, and passes through all the vertices of $G_1$,
and ends at $t$. This is impossible, $(G_1,s,t^{'})$ is not
acceptable, where $t^{'}$ is $w$, $z$, $v$, or $u$.\par Case 2. The
Hamiltonian path $P$ of $C(m,n,k,l)$ which starts from $s$ should
pass through some vertices of $G_1$, leaves $G_1$ at $w$ (or $z)$
(or $v$ (or $u)$, enters $G_2$, then passes through some vertices of
$G_2$, \begin{enumerate}
\item reenters to $G_1$ at $z$ (or $w)$ (or $u$ (or $v))$, passes through
all the remaining vertices of it, leaves it at $u$ (or $v)$ (or $w$
(or $z))$, and passes through all the remaining vertices of $G_2$
and finally ends at $t$. In this case, two subpaths of $P$ which are
in $G_1$ are called $P_1$ and $P_2$, $P_1$ from $s$ to $w$ (or $z)$
(or $v$ (or $u))$ and $P_2$ from $z$ (or $w)$ (or $u$ (or $v))$ to
$u$ (or $v)$ (or $w$ (or $z))$. This is impossible, because the size
of $P_1$ is even (or odd) and the size of $P_2$ is even, then
$|P_1+P_2|$ is even (or odd with black majority color) while $G_1$
is odd-sized with white majority color.
\item  reenters to $G_1$ at $u$ (or $v)$ (or $w$ (or $z))$,
passes through all the remaining vertices of it, leaves it at $z$
(or $w)$ (or $u$ (or $v))$, and passes through all the remaining
vertices of $G_2$ and finally ends at $t$. In this case, $P_2$ from
$u$ (or $v)$ (or $w$ (or $z))$ to $z$ (or $w)$ (or $u$ (or $v))$.
This is impossible, because the size of $P_1$ is even (or odd) and
the size of $P_2$ is even, then $|P_1+P_2|$ is even (or odd with
black majority color) while $G_1$ is odd-sized with white majority
color.
\item  reenters to $G_1$ at $u$ (or $v)$ (or $w$ (or $z))$, passes
through all the remaining vertices of it, leaves it at $v$ (or $u)$
(or $z$ (or $w))$, and passes through all the remaining vertices of
$G_2$ and finally ends at $t$. In this case, $P_2$ from $u$ to $v$
(or $v$ to $u)$ (or $w$ to $z$ or ($z$ to $w$)). This is impossible,
because the size of $P_1$ is even (or odd) and the size of $P_2$ is
odd, then $|P_1+P_2|$ is odd with black majority color (or even)
while $G_1$ is odd-sized with white majority color.
\end{enumerate}
\par (b$_3$) Let $\{G_1,G_2\}$ be a vertical separation of $C(m,n,k,l)$ such that
$G_1=R(m^{'},n)$ and $G_2=L(m-m^{'},n,k,l)$ (or
$G_1=L_1(m^{'},n,k^{'},l)$ and $G_2=L_2(m-m^{'},n,k^{''},l))$
$m^{'}=s_x-1$, $k^{'}=m^{'}-d$, $k^{''}=k-k^{'}$, and $s,t\in G_2$.
Let $u,v\in G_1$ such that $u_y,v_y>l$, $v_x=u_x=m^{'}-1$,
$u_y,v_y=odd$ if $s_x=even$; otherwise $u_y,v_y=even$. Consider Fig.
\ref{fig:s11b2}(d) and \ref{fig:s11c2}. Note that
$G_1$ is an odd sized grid subgraph with white majority color. The
Hamiltonian path $P$ of $C(m,n,k,l)$ must enter to $G_1$ through one
of the vertices $v$ (or $u$), then the path $P$ leaves $G_1$ after
visiting all the vertices $G_1$ by $u$ (or $v$), reenters to $G_2$,
and ends at $t$. One easily check that one of the vertices $w$ or
$z$ remains out of path.
 \end{proof}
\section{\bf Sufficient conditions}
Suppose $(C(m,n,k,l),s,t)$ is an acceptable Hamiltonian path
problem. The purpose of this section is to prove that all acceptable
$C-$shaped Hamiltonian path problems have solutions.
\begin{defi}
A separation is \textit{acceptable} if all of its component are acceptable.
\end{defi}
\begin{defi}
 Two nonincident edges $(u_1,v_1)$ and $(u_2,v_2)$ are \textit{parallel}, if $u_1$
 (resp. $v_1)$ is adjacent to $u_2$ and $v_1$ (resp. $u_1)$  is adjacent to $v_2$ .
\end{defi} 
\par The following three lemmas discuss how to construct a Hamiltonian $(s,t)-$path for $C(m,n,k,l)$.
\begin{lem} \label{Lemma:c11}
Suppose that $(C(m,n,k,l),s,t)$ is an acceptable Hamiltonian path
problem. Let $C(m,n,k,l)$ be even-sized. Then there is an acceptable
separation for $(C(m,n,k,l),s,t)$ and it has a Hamiltonian path.
\end{lem}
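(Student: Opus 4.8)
The plan is to prove the lemma constructively: in each configuration I will exhibit an explicit acceptable separation of $(C(m,n,k,l),s,t)$ and then assemble the Hamiltonian paths and cycles of its components into one Hamiltonian $(s,t)$-path. Writing $m=d+k+c$, so that $d$ and $c$ are the widths of the left and right legs of the $C$ and $n-l$ is the height of the bridge joining them, the central idea is to cut the $C$-shape into one main piece $G_1$ that carries both endpoints $s$ and $t$ together with one or more even-sized pieces that carry no endpoint. By Lemma~\ref{Lemma:1m} and Lemma~\ref{Lemma:6t1} each such even-sized rectangular or $L$-shaped piece has a Hamiltonian cycle, and by the remark following Lemma~\ref{Lemma:1m} this cycle may be taken to contain all boundary edges on a freely chosen side. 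The merge then uses parallel edges: whenever $G_1$ is joined to an even-sized piece $G_2$ along a common side, the Hamiltonian $(s,t)$-path of $G_1$ contains a boundary edge parallel to a boundary edge of the cycle of $G_2$, and I splice the cycle into the path by deleting the two parallel edges and inserting the two connecting edges. Iterating this absorption over all even-sized pieces produces the desired Hamiltonian $(s,t)$-path of the whole graph.

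For the main piece I arrange, whenever possible, that $G_1$ be a rectangle or an $L$-shaped grid graph containing $s$ and $t$, so that its Hamiltonian $(s,t)$-path is furnished by Theorem~\ref{Theorem:1a} or Theorem~\ref{Theorem:6t}. The case split is governed by the positions of $s$ and $t$ and by the bridge height $n-l$. When $s$ and $t$ lie in the same leg-plus-bridge region, I take a vertical or $L$-shaped (type~I) separation that peels the opposite leg off as an even-sized rectangle, leaving an acceptable $L$-shaped $G_1$. When $s$ and $t$ lie in opposite legs the path must traverse the bridge and no small rectangle or $L$-shape contains both endpoints; I then reduce to a smaller $C$-shaped problem by a $C$-shaped separation that strips an even-sized rectangle from a leg or from the top of the bridge, keeping $s,t$ in the residual $C$-shape, and finish by induction on the size $m\times n-k\times l$ (with explicit base constructions). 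Thin bridges ($n-l\in\{1,2\}$) and thin legs ($d$ or $c$ equal to $1$ or $2$) are treated as base cases.

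The point that makes the reduction legitimate is that acceptability of $(C(m,n,k,l),s,t)$ forces the component $G_1$ produced by the chosen separation to be acceptable too. Since $(C(m,n,k,l),s,t)$ is color-compatible and every peeled piece is even-sized, Lemma~\ref{Lemma:c0} gives that $(G_1,s,t)$ is color-compatible; and because the whole problem satisfies none of (F1), (F3), and (F10)--(F18), none of the corresponding obstructions (F1), (F3)--(F9) can appear in $G_1$. Indeed, conditions (F10)--(F18) are precisely the $C$-shaped obstructions that an inherited obstruction (F1)--(F9) in some component would generate, so their absence is exactly what allows a separation all of whose components are acceptable; Corollary~\ref{Corollary:co3} records the contrapositive of this correspondence for the vertical and $L$-shaped (type~I) separations, and its analogue for the remaining separation types is checked in the same way.

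The main obstacle is the bookkeeping of this case analysis: for every combination of endpoint positions and of the parities and sizes of $d$, $c$, $l$, and $n-l$ one must verify that at least one admissible separation keeps all endpoints in an acceptable main component while leaving only even-sized pieces, and that the parallel-edge merge can actually be carried out, i.e.\ that a suitable parallel pair of boundary edges remains available on each shared side once the Hamiltonian path of $G_1$ has been fixed. The delicate sub-cases are the thin-bridge case $n-l=2$ and the thin-leg cases, where the cycle of a peeled rectangle degenerates or the only parallel edge on the shared side is pinned down by the endpoints; these are handled either by selecting the free boundary side of the cycle (permitted by the remark after Lemma~\ref{Lemma:1m}) or by replacing the generic merge with a direct construction.
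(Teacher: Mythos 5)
Your overall framework---peel off even-sized pieces, absorb their Hamiltonian cycles into the path of a main component via parallel edges, and argue that acceptability of the whole forces acceptability of the components---is indeed the paper's scheme for the configurations where $s$ and $t$ lie on the same side (its Cases 1--3 use exactly the vertical and $L$-shaped type~I separations you describe). But there is a genuine gap in how you treat endpoints in opposite legs. The paper does \emph{not} keep $s$ and $t$ together in a residual $C$-shape and induct; it splits the target path: it chooses a pair of adjacent connector vertices $p\in G_1$, $q\in G_2$ across the cut (with a delicate case formula for $p$ depending on colors and on $c$, $n-l$), proves that \emph{both} $(G_1,s,p)$ and $(G_2,q,t)$ are acceptable, builds a Hamiltonian $(s,p)$-path and a Hamiltonian $(q,t)$-path, and concatenates them through the edge $(p,q)$ (Case 5 and Subcase 4.1.2 of this lemma, and much of Lemmas~\ref{Lemma:c15} and \ref{Lemma:c13}). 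Your proposal never introduces this two-path concatenation device, and your substitute---strip an even-sized rectangle and induct on a smaller $C$-shape---fails in concrete configurations: for $s=(1,n)$ and $t=(m,n)$ every strip from the left, right, or bottom side removes an endpoint, so only hole-enlargement strips remain, and these merely reduce $n-l$; the induction then bottoms out at thin-bridge $C$-shapes with endpoints still in opposite legs. Those "base cases" are arbitrarily large graphs (long legs, wide hole), not a finite list, and handling them directly is precisely the connector-vertex construction you omitted.

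Two further problems compound this. First, acceptability is \emph{not} inherited by a $C$-shaped residual: when the residual is again a $C$-shape, its acceptability involves (F10)--(F18) afresh, and stripping can create new obstructions (reducing $n-l$ to $2$ or $4$ can trigger (F13), (F15), or (F18) even though the original graph avoided them); the paper re-verifies these conditions case by case whenever it uses a $C$-shaped component (e.g.\ Case 3 here and Case 1 of Lemma~\ref{Lemma:c15}), and your appeal to Corollary~\ref{Corollary:co3} cannot replace this, since that corollary concerns only $L$-shaped components of vertical or type~I separations containing both endpoints, and is in any event stated in the converse direction to the implication your argument needs (component not acceptable $\Rightarrow$ whole not acceptable). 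Second, your merge step requires the inductively obtained Hamiltonian path to contain a boundary edge facing each absorbed piece; the paper secures such edges because its paths come from the explicit algorithms of \cite{991} and \cite{CST:AFAFCHPIM} and it checks edge availability in each case, whereas an induction hypothesis that asserts only the existence of \emph{some} Hamiltonian path carries no such invariant. So the proposal, as written, defers exactly the hard content of the paper's proof and would not close without importing the paper's connector-vertex technique and its per-case acceptability verifications.
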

\begin{figure}[tb]
  \centering
  \includegraphics[scale=1]{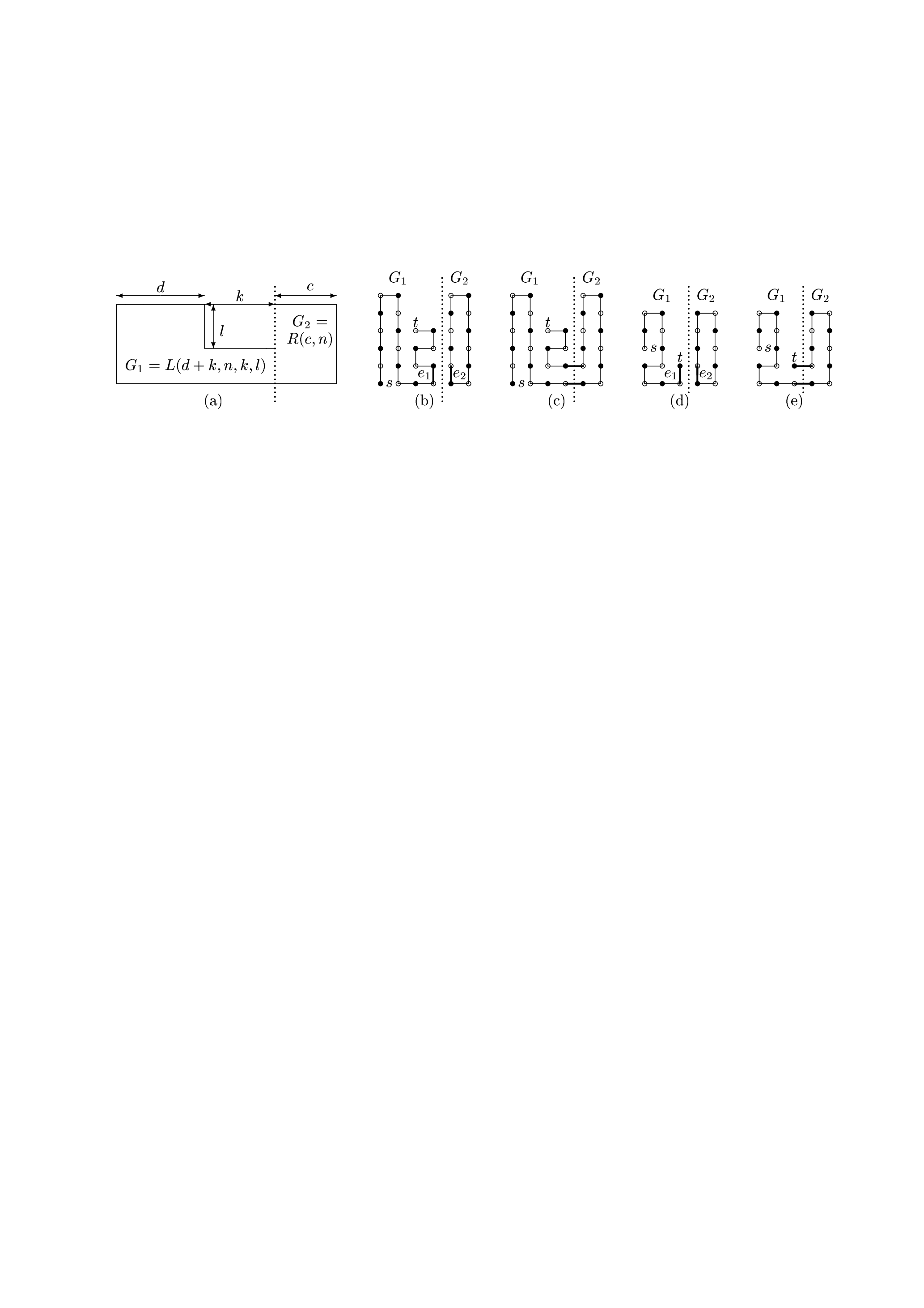}
  \caption[]%
 {\small A vertical separation of $C(m,n,k,l)$ and a Hamiltonian path in $(C(m,n,k,l),s,t)$.}
\label{fig:s13}
\end{figure}
 \begin{proof}
 Here, $s$ and $t$ have different colors. Let $m=even$ (resp. $m=odd)$ and $k=even$
 (resp. $k=odd)$, then $d$ and $c$ are even (or odd). Similarly let $m=even$ (resp. $m=odd)$ and
 $k=odd$ (resp. $k=even)$, then $d=even$ and $c=odd$ or $d=odd$ and $c=even$. Notice that, for $m\times n=$odd$\times$odd,
 since $(C(m,n,k,l),s,t)$ is acceptable, $d$ and $c$ must be even. We have the following five cases.
 \par Case 1. $n-l>1$, $[(c>1$ and $s_x,t_x\leq d+k)$ or ($d>1$ and
 $[(s_x,t_x>d+k)$ or $(d+1\leq s_x\leq d+k$ and $t_x>d+k)])]$, \indent \indent and \\
\indent \indent (a) $n=even$ and $[(k=1$ and $[(n-l=2)$ or
$(n-l=even\geq 4$ and $d^{'}\neq 2)])$ or $(k>1)]$, where $d^{'}=d$
if  \indent \indent $s_x,t_x\leq d+k$; otherwise $d^{'}=c$; or
 \\ \indent \indent (b) $n=odd$ and $[(m=odd$ and $[(k=1$ and $n-l=2)$ or $(k>1)])$ or
$(m=even$, $d=even$, and $c=even)]$.\\ Let $s_x,t_x\leq d+k$. By
symmetry, the result follows, if $(s_x,t_x>d+k)$ or $(d+1\leq
s_x\leq d+k$ and $t_x>d+k)$. Let $\{G_1,G_2\}$ be a vertical
separation of $C(m,n,k,l)$ such that $G_1=L(m^{'},n,k,l)$,
$G_2=R(m-m^{'},n)$, $m^{'}=d+k$, and $s,t\in G_1$ (Fig.
\ref{fig:s13}(a)).
 First, we prove that $(G_1,s,t)$ is acceptable. 
Since $n=even$ or $c=even $, it follows that $G_2$ is even-sized.
Moreover, since $C(m,n,k,l)$ is even-sized, we conclude that $G_1$
is even-sized. By Lemma \ref{Lemma:c0}, $(G_1,s,t)$ is
color-compatible. In the following, we show that $(G_1,s,t)$ is not
in conditions (F1), (F3), (F4), and (F6)-(F9). The condition (F1)
holds, if (i) $d=1$ and $2\leq s_y\ ($ or $t_y)\leq l+1$; (ii) $d=2$
and $2\leq s_y=t_y\leq l+1$; or (iii) $(n-l=2$ or $n=2)$ and $2\leq
s_x=t_x\leq d+k$, clearly if these cases occur, then
$(C(m,n,k,l),s,t)$ is in condition (F1), a contradiction. Therefore,
$(G_1,s,t)$ is not in condition (F1). To satisfy condition (F3), $d$
must be $1$ and $s_y,t_y>l$. If this case holds, then
$(C(m,n,k,l),s,t)$ satisfies condition (F3), a contradiction. Thus,
it follows that $(G_1,s,t)$ does not satisfy condition (F3). The
condition (F4) holds, if $m\times n=$odd$\times$odd, $k\times l=1$,
$n-l>2$, and $m^{'}-k>2$. Since $n-l=2$, where $k=1$, thus
$(G_1,s,t)$ does not satisfy condition (F4). If $(G_1,s,t)$
satisfies condition (F6), (F8), or (F9), then $(C(m,n,k,l),s,t)$
satisfies condition (F17), a contradicting the assumption.
Therefore, it follows that $(G_1,s,t)$ is not in condition (F6),
(F8), and (F9). The condition (F7) holds, if $k=1$, $d=2$,
$n-l=even\geq 4$, $s=(1,l+1)$, and $t=(d+k,l+2)$. This is
impossible, because of $n-l=2$, and hence $(G_1,s,t)$ is not in
condition (F7). Therefore, $(G_1,s,t)$ is acceptable.
 \par Now, we show that $(C(m,n,k,l),s,t)$ has a Hamiltonian path. Since $(G_1,s,t)$ is acceptable,
 by Theorem \ref{Theorem:6t} it has a Hamiltonian $(s,t)-$path. Thus, we construct a Hamiltonian path
 in $(G_1,s,t)$ by the algorithm in \cite{991}. Furthermore, since $G_2$ is even-sized, it has a Hamiltonian
 cycle by Lemma \ref{Lemma:1m}. Then by combining Hamiltonian cycle and path using two parallel edges $e_1$
 and $e_2$ (Fig. \ref{fig:s13}(b) and \ref{fig:s13}(d)), a Hamiltonian $(s,t)-$path for $(C(m,n,k,l),s,t)$ is obtained,
 as shown in Fig. \ref{fig:s13}(c) and \ref{fig:s13}(e). Now, we describe combining a Hamiltonian path in $(G_1,s,t)$ with the constructed
cycle in $G_2$. Any Hamiltonian path $P$ in $G_1$ contains all the
vertices of $G_1$. Therefore, $P$ should contain a boundary edge of
$G_1$ that has a parallel edge in $G_2$. Moreover, since $n-l>1$, it is easy to check that there is at least one edge for combining Hamiltonian cycle and
path. \par
Case 2. $n-l=even\geq 4$, $k=1$ and \\
\indent \indent
 (a) $c>1$, $s_x,t_x\leq d+k$, and $[(n=even$ and $d=2)$ or $(n=odd$ and $[(l=1$ and $c=2)$ or  $(l=odd>1$ \indent \indent and $c\geq 2)])]$; or \\
 \indent \indent (b) $d>1$, $[(s_x,t_x>d+k)$ or $(d+1\leq s_x\leq d+k$ and $t_x>d+k)]$,
 and $[(n=even$ and $c=2)$ or $(n=odd$ and  \indent \indent $[(l=1$ and $d=2)$ or $(l=odd>1$ and $d\geq 2)])]$.\\
Assume that $s_x,t_x\leq d+k$. By symmetry, the result follows, if $(s_x,t_x>d+k)$ or $(d+1\leq s_x\leq d+k$ and $t_x>d+k)$.
Let $\{G_1,G_2\}$ be a $L-$shaped separation (type I) of $C(m,n,k,l)$ such that $G_1=L(m,n,k^{'},l)$ and $G_2=R(m^{'},n^{'})$,
$k^{'}=m-d$, $m^{'}=k^{'}-k$, $n^{'}=l$, and $s,t\in G_1$ (see Fig. \ref{fig:s13a}(a)). 
In the following, we show that $(G_1,s,t)$ is acceptable. Since
$l=even$, where $n=even$, or $c=even$, where $n=odd$, thus $G_2$ is
even-sized, and since $C(m,n,k,l)$ is even-sized, we conclude that
$G_1$ is even-sized. By Lemma \ref{Lemma:c0}, $(G_1,s,t)$ is
color-compatible. Since $m-k^{'}=even$, $k=1$, and $c>1$, we have
$m\geq 5$ and $k^{'}>1$. Moreover, since $m\geq 5$, $m-k^{'}=even$,
$n-l=even\geq 4$, and $k^{'}>1$, it is obvious that $(G_1,s,t)$ is
not in conditions (F3), (F4), and (F6)-(F9). $(G_1,s,t)$ is not in condition (F1), the proof is the same as Case 1. Therefore,
$(G_1,s,t)$ is acceptable. Now, we show that $(C(m,n,k,l),s,t)$ has
a Hamiltonian path. Let $l>1$, then the Hamiltonian path in $(C(m,
n, k, l), s, t)$ is obtained similar to Case 1. Notice that since
$m^{'}\geq 2$, there exists at least one edge for combining
Hamiltonian cycle and path. Now, let $l=1$. In this case, $G_2$ is a
one-rectangle, where $|G_2|=2$. Let two vertices $v_1,v_2\in G_2$
(Fig. \ref{fig:s13a}(b)) and $P$ be a Hamiltonian path in $G_1$.
Using algorithm in \cite{991}, there exists an edge $e_1$ such that
$e_1\in P$ is on the boundary of $G_1$ facing $G_2$, as shown in Fig. \ref{fig:s13a}(b). Thus, by
merging $(v_1,v_2)$ to this edge, we obtain a Hamiltonian path for
$C(m,n,k,l)$ as shown in Fig. \ref{fig:s13a}(c).
 \par Case 3.
 $n=odd$, $m=odd$, $k\times l=1$, $n-l=even\geq 4$, and $[(s_x,t_x\leq d+k$ and $c=even\geq 4)$ or $(d=even\geq 4$ and $[(s_x,t_x>d+k)$ or
 $(d+1\leq s_x\leq d+k$ and $t_x>d+k)])]$.
Suppose $s_x,t_x\leq d+k$. By symmetry, the result follows, if
$(s_x,t_x>d+k)$ or $(d+1\leq s_x\leq d+k$ and $t_x>d+k)$. Let
$\{G_1,G_2\}$ be a vertical separation of $C(m,n,k,l)$ such that
$G_1=C(m^{'},n,k,l)$, $G_2=R(m-m^{'},n)$, $m^{'}=d+k+2$, and $s,t\in
G_1$.
 We know that $d=even$ and $c=even$. Thus, $m-m^{'}=even$, $G_2$ is even$\times$odd, and $G_2$ is even-sized. By Lemma
 \ref{Lemma:c0}, $(G_1,s,t)$ is color-compatible. Since $n-l\geq 4$, $l= 1$, $d=even$, $k=1$, and $c^{'}=2$ implies $m^{'},n\geq 5$.
 Moreover, since $d=even$, $c^{'}=2$, $m^{'},n\geq 5$, and $n-l\geq 4$, it suffices to prove $(G_1,s,t)$ is not in condition (F1).
 A simple check shows that $(G_1,s,t)$ is not in condition (F1). Now, we show that $(C(m,n,k,l),s,t)$ has a Hamiltonian path. In this case,
 $(G_1,s,t)$ in Case 2. The Hamiltonian path in $(C(m, n, k, l), s, t)$ is obtained similar
to Case 1. Notice that, since $n\geq 5$, there is at least one edge for combining Hamiltonian cycle and path. \par
Case 4. $n=odd$ and $[d=odd$ or $c=odd]$.
 \par Subcase 4.1. $d=odd$, $c=odd$, and $n-l=odd>1$. \par
 Subcase 4.1.1. $(s_x,t_x\leq d+1$ and $c>1)$ or $(s_x,t_x>d+1$ and $d>1)$. This case is similar to Case 3,
 where $G_1=L(m^{'},n,k^{'},l)$, $G_2=L(m-m^{'},n,k^{''},l)$, $m^{'}=d+1$, $k^{'}=m^{'}-d$, $k^{''}=k-k^{'}$, and $[s,t\in G_1$ or
 $s,t\in G_2]$. Consider Fig. \ref{fig:s13a}(d). Clearly, $G_1$ and $G_2$ are even-sized. By Lemma \ref{Lemma:c0}, $(G_1,s,t)$ (or $(G_2,s,t))$
 is color-compatible. Note that, because of $n-l=odd>1$ and $l=even$, we have $n>3$. Moreover, since $d$, $c$, and $n-l$ are odd and $n>3$, it
 suffices to prove $(G_1,s,t)$ (or $(G_2,s,t))$ is not in conditions (F1), (F3), and (F9). $(G_1,s,t)$ (or $(G_2,s,t))$ does not satisfy conditions
 (F1) and (F3), the proof is the same as Case 1. If $(G_1,s,t)$ (or $(G_2,s,t))$ satisfies condition (F9), then $(C(m,n,k,l),s,t)$ satisfies condition
 (F16), we have a contradiction. Thus $(G_1,s,t)$ (or $(G_2,s,t))$ does not satisfy condition (F9). Hence $(G_1,s,t)$ (or $(G_2,s,t))$ is acceptable. The Hamiltonian path
 in $(C(m, n, k, l), s, t)$ is
obtained similar to Case 1. In this case, $G_2$ (or $G_1)$ is an
even-sized $L-$shaped grid graph, thus it has a Hamiltonian cycle by
Lemma \ref{Lemma:6t1}.
 \par Subcase 4.1.2. $s_x\leq d+1$ and $t_x>d+1$. This case is the same as Subcase 4.1.1,
 where $s,p\in G_1$, $q,t\in G_2$, $p$ and $q$ are adjacent, and  $p=(m^{'},n)$ if $s$ is
 white; otherwise $p=(m^{'},n-1)$. Consider Fig. \ref{fig:s13a}(d). A simple check shows that $(G_1,s,p)$
 and $(G_2,q,t)$ are color-compatible. By the same argument as in the proof of Subcase 4.1.1, it is sufficient to show that
 $(G_1,s,p)$ and $(G_2,q,t)$ are not in conditions (F1), (F3), and (F9). The condition (F1) occurs, when
 $s_y,t_y\leq l+1$ and $[(d=1$ and $s\neq (1,1))$ or $(c=1$ and $t\neq (m,1))]$. If this case holds, then $(C(m,n,k,l),s,t)$ satisfies
 condition (F1), a contradiction. Thus, $(G_1,s,p)$ is not in condition (F1). The condition (F3) holds, if $s_y,t_y>l$ and $[(d=1)$ or
 $(c=1)].$ It is obvious that if this case holds, then $(C(m,n,k,l),s,t)$ satisfies condition (F3), a contradiction. Therefore, $(G_1,s,p)$ does not
 satisfy condition (F3). $(G_1,s,p)$ and $(G_2,q,t)$ do not satisfy condition (F9), the proof is similar to Subcase 4.1.1. Hence, $(G_1,s,p)$ and $(G_2,q,t)$ are acceptable. Now, we show that $(C(m,n,k,l),s,t)$ has a Hamiltonian. Since $(G_1,s,p)$ and $(G_2,q,t)$ are acceptable, by Theorem \ref{Theorem:6t} have Hamiltonian paths. Thus, we construct Hamiltonian paths in $(G_1,s,p)$ and $(G_2,q,t)$ by the algorithm in \cite{991}. Then the Hamiltonian path for $(C(m,n,k,l),s,t)$ can be obtained by connecting two vertices $p$ and $q$ as shown in Fig. \ref{fig:s13a}(e).
\begin{figure}[tb]
  \centering
  \includegraphics[scale=.95]{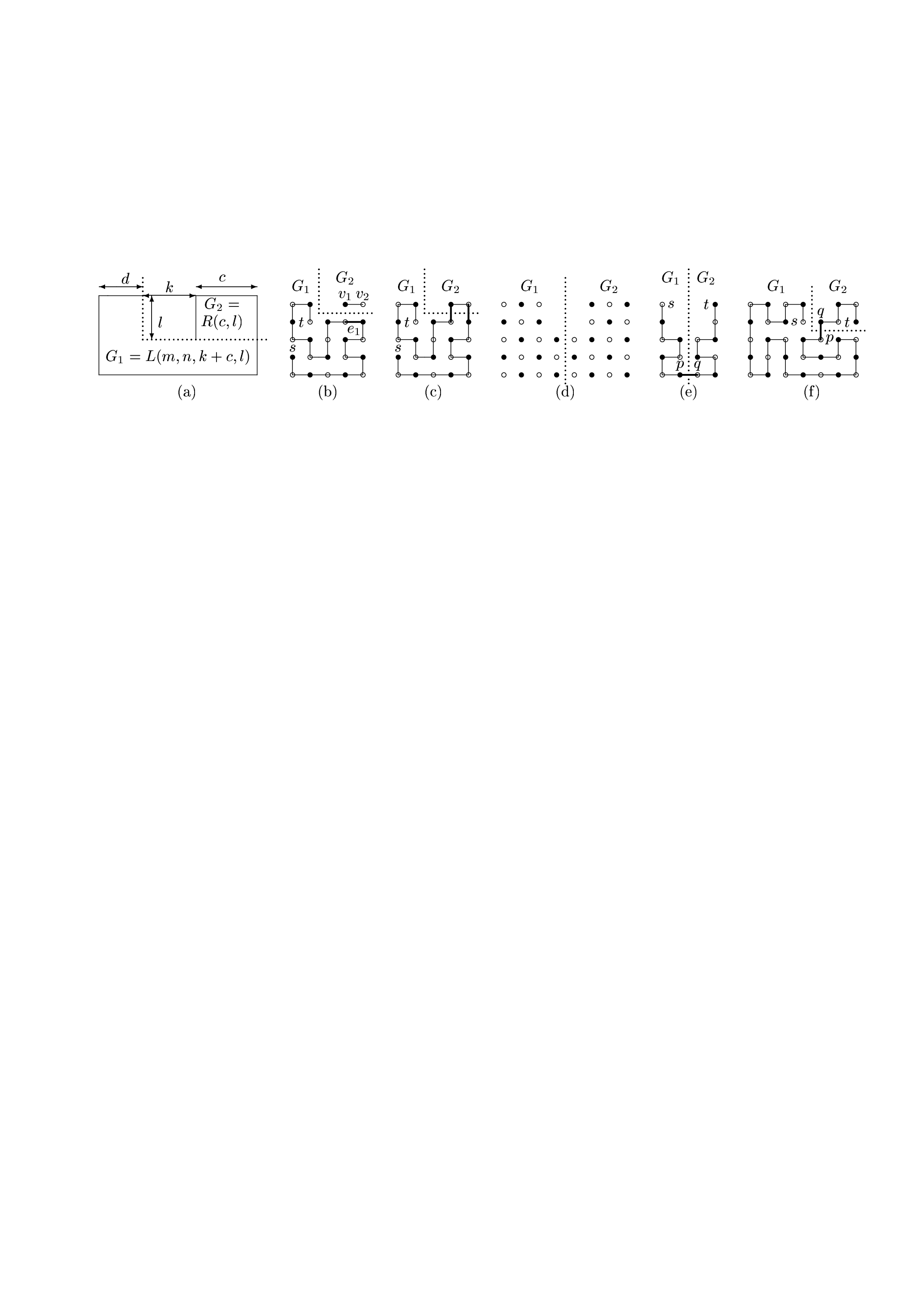}
  \caption[]%
 {\small (a) A $L-$shaped separation of $C(m,n,k,l)$, (b) a Hamiltonian path in $G_1$, (c) combine a Hamiltonian path in $G_1$ with edge $(v_1,v_2)$ in $G_2$, (d) a vertical separation of $C(m,n,k,l)$, (e) and (f) a Hamiltonian $(s,t)-$path in $C(m,n,k,l)$.}
\label{fig:s13a}
\end{figure}
\par
Subcase 4.2. $n-l=odd>1$, $[(d=odd$ and $c=even)$ or $(d=even$ and $c=odd)]$, and $[(s_x,t_x\leq d+k$ and $c>1)$ or $(d>1$ and $[(s_x,t_x>d+k)$ or $(d+1\leq s_x\leq d+k$ and $t_x>d+k)])]$. Since $l=even$ and $n-l=odd>1$, we have $n\geq 5$.
Let $d=odd$ and $c=even$. By symmetry, the result follows, if $d=even$ and $c=odd. $ Consider the following subcases. \par
 Subcase 4.2.1. $s_x,t_x\leq d+k$. This case is similar to Case 1.
 \par Subcase 4.2.2. $(s_x,t_x>d+k)$ or $(d+1\leq s_x\leq d+k$ and $t_x>d+k)$. Since $d=odd>1$, $c=even$, and $k\geq 1$, it follows that $m\geq 6$. This case is similar to Case 2. Since $l=even$, thus $G_2$ is even-sized. Moreover, since $C(m,n,k,l)$ is even-sized, we conclude that $G_1$ is even-sized. By Lemma \ref{Lemma:c0}, $(G_1,s,t)$ is color-compatible. Since $m\geq 6$, $n\geq 5$, $c=even$, and $n-l=odd>1$, it is enough to show that $(G_1,s,t)$ is not in conditions (F1), (F8), and (F9). $(G_1,s,t)$ is not in condition (F1), the proof is the same as Case 1. If $(G_1,s,t)$ satisfies condition (F8), then $(C(m,n,k,l),s,t)$ satisfies condition (F14), a contradiction. Therefore, $(G_1,s,t)$ is not in condition (F8). The condition (F9) holds, if $n-l=3$, $c\geq 4$, $s$ is black, and $s_x\leq d+k$. If this case occurs, then $(C(m,n,k,l),s,t)$ satisfies condition (F17), a contradiction. Thus, $ (G_1,s,t)$ does not satisfy condition (F9). Hence, $(G_1,s,t)$ is acceptable. The Hamiltonian path in $(C(m, n, k, l), s, t)$ is obtained similar
to Case 2.
  \par Case 5. $s_x\leq d$, $t_x>d+k$, and \\
  \indent \indent (a) $n=even$; or \\
  \indent \indent (b) $n=odd$ and $[(m=$even and $[(d=odd$, $c=odd$, and $n-l\leq 2)$, $(d=odd$ and $c=even$), $(d=even$ and \indent \indent $c=odd$), or $(c=even$ and $d=even)])$ or $(m=$odd and $[(k>1)$ or $(k=1$ and $[(c>2)$ or $(c=2$, $s\neq (d,l+1)$, \indent \indent or $t\neq (m,l+1))])])]$.\\This case is similar to Case 1 such that $s,p\in G_1$, $q,t\in G_2$, $p$ and $q$ are adjacent, and
 $$p=
  \begin{cases}
   (d+k,l+1);   &  if\ w\ and\ t\ have\ different\ colors,\ n-l=even,\ and\  [(c=2\ and \ t\neq(m,l+1))\ or\ (c> 2)]  \\
    (d+k,l+3);   &  if\ w\ and\ t\ have\ different\ colors,\ n-l=even>2,\ c=2,and \ t=(m,l+1) \\
 (d+k,l+2);& if \ w\ and\ t\ have\ the\ same\ color,\ n-l=odd>1,\ and\  [(c=2\ and \ t\neq(m,l+2))\ or\ (c>2)]\\
   (d+k,l+4);   &  if\ w\ and\ t\ have\ the\ same\ color,\ n-l=odd>3,\ and \ t=(m,l+2)\\
    (d+k,n);& otherwise
   \end{cases}$$
   where $w=(d+k+1,l+1)$. In the following, we prove that $(G_1,s,p)$ and $(G_2,q,t)$ are acceptable.
There are the following two subcases for $G_1$ and $G_2$.
 \par Subcase 5.1. $G_1$ and $G_2$ are even-sized. A simple check shows that $(G_1,s,p)$ and $(G_2,q,t)$ are color-compatible. Consider $(G_2,q,t)$. In this case, $G_2$ is even$\times$even, even$\times$odd, or odd$\times$even. The condition (F1) holds, if (i) $c=1$ and $[(t\neq (m,1)$ or $q\neq (d+k+1,n)]$. Since $(C(m,n,k,l),s,t)$ is acceptable, thus $t=(m,1)$. Moreover, Since $q=(d+k+1,n)$, clearly $q$ and $t$ are corner vertices in $G_2$; (ii) $c=2$ and $2\leq q_y=t_y\leq n-1$. We can easily see that $q_y\neq t_y$ or $q_y=t_y=n$; or (iii) $n=2$ and $d+k+2\leq q_x=t_x\leq m-1$. This case can not occur, because of $q_x=d+k+1$. Hence, $(G_2,q,t)$ is not in condition (F1). The condition (F2) occurs, when (i) $c=3$, $[(t$ and $w$ have different colors and $n-l=odd)$ or $(t$ and $w$ have the same color and $n-l=even)]$, and $q_y<t_y-1$. Since $q=(d+k+1,n)$, thus $(G_2,q,t)$ is not in condition (F2); or (ii) $n=3$, $t_x>d+k+1$, and $t$ is black (when $m$ is odd) or $s$ is white (when $m$ is even). Since $(C(m,n,k,l),s,t)$ is acceptable, the only case that occurs is $t=(d+k+1,1)$ or $t=(d+k+1,n)$. In this case, $q=(d+k+1,l+1)$ and hence $(G_2,q,t)$ is not in the condition (F2). So, $(G_2,q,t)$ is acceptable.
\par Now, consider $(G_1,s,p)$. Since $p_x=d+k$ and $s_x\leq d$, a simple check shows that $(G_1,s,p)$ is not in condition (F1) and (F3). The condition (F4) holds, if $k\times l=1$, $m^{'}-k>2$, $n-l>2$, $c=2$, $s=(d,l+1)$, and $t=(m,l+1)$. By the assumption, this is impossible, and hence $(G_1,s,p)$ is not in condition (F4).
 The condition (F7) occurs, when $d=2$, $k=1$, $n-l\geq 4$, $s=(1,l+1)$, and $p=(d+k,l+2)$. This is impossible, because of $p=(d+k,n)$. Hence, $(G_1,s,p)$ does not satisfy conditions (F7).
 The condition (F8) holds, if
(i) $n-l=2$, $d=3$ (or $n=3)$, and $s$ is black; (ii) $n-l=3$,
$d=2$, and $s$ is white. If these cases hold, then
$(C(m,n,k,l),s,t)$ satisfies condition (F17), a contradiction; (iii)
$m^{'}=3$, $d=2$, $s$ is black, and $p_y<s_y-1$, this is impossible,
because of $p_y=n$; or (iv) $m^{'}=3$, $d=2$, $c=2$, $s=(d,l+1)$, and
$t=(m,l+1)$, by the assumption, this case can not occur. So,
$(G_1,s,p)$ is not in conditions (F8). If $(G_1,s,p)$ satisfies
conditions (F6) and (F9), then $(C(m,n,k,l),s,t)$ satisfies
condition (F17), a contradiction. Therefore, it follows that
$(G_1,s,p)$ does not satisfy conditions (F6) and (F9). Hence,
$(G_1,s,p)$ is acceptable. It remains to show that
$(C(m,n,k,l),s,t)$ has a Hamiltonian path. In this case, the
Hamiltonian path in $(C(m, n, k, l), s, t)$ is obtained similar to
Subcase 4.1.2. Notice that, here, $(G_2,q,t)$ is a rectangular grid
graph and by Theorem \ref{Theorem:1a} it has a Hamiltonian path.
Thus, we construct  a Hamiltonian path in $(G_2,q,t)$ by the
algorithm in \cite{CST:AFAFCHPIM}. \par Now, let $m=odd$, $n=odd$,
$k=1$, $c=2$, $n-l>2$, $s=(d,l+1)$, and $t=(m,l+1)$. Let
$\{G_1,G_2\}$  be a $L-$shaped separation (type I) of $C(m,n,k,l)$
such that $G_1=L(m,n,k^{'},l^{'})$, $G_2=L(m^{'},n^{'},k,l)$,
$k^{'}=m-d$, $l^{'}=l+1$, $m^{'}=k^{'}$, $n^{'}=l^{'}$. Let $s,p\in
G_1$, $q,t\in G_2$, $q$ and $p$ are adjacent, and $p=(d+1,l+2)$.
Consider Fig. \ref{fig:s13a}(f). One easily check that $(G_1,s,p)$
and $(G_2,q,t)$ are acceptable. The Hamiltonian path in $(C(m, n, k,
l), s, t)$ is obtained similar to Subcase 4.1.2.
 \par Subcase 5.2. $G_1$ is odd-sized and $G_2$ is odd$\times$odd. We can easily see that $(G_1,s,p)$ and $(G_2,q,t)$ are color-compatible.
 Consider $(G_2,q,t)$. $(G_2,q,t)$ is not in conditions (F1) and (F2), the proof is similar to Subcase 5.1. Thus, $(G_2,q,t)$ is acceptable. Now, consider $(G_1,s,p)$. Since $G_1$ is odd-sized, it suffices to prove $(G_1,s,p)$ is not in conditions (F1), (F3), and (F5). Since $s_x\leq d$ and $p_x=d+k$, a simple check  that $(G_1,s,p)$ is not in conditions (F1) and (F3). The condition (F5) holds, if $n-l=2$ and $s=(d,n)$. If this case occurs, then $(G_1,s,p)$ satisfies condition (F13) (case (b)), a contradiction. Therefore, $(G_1,s,p)$ is not in condition (F5). Hence $(G_1,s,p)$ is acceptable. The Hamiltonian path in $(C(m, n, k, l), s, t)$ is
obtained similar to Subcase 5.1. Now, Lemma \ref{Lemma:c11} completes the proof.
\end{proof}
\begin{lem} \label{Lemma:c15}
Assume $(C(m,n,k,l),s,t)$ is an acceptable Hamiltonian path problem with $m\times n=$even$\times$odd, $c=odd$, $d=odd$, and $n-l=even\geq 4$.
 Then there is an acceptable separation for $(C(m,n,k,l),s,t)$ and it has a Hamiltonian path.
\end{lem}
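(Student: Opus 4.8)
The plan is to reuse the ``Hamiltonian path in $G_1$ plus Hamiltonian cycle in $G_2$, spliced along two parallel edges'' scheme of Lemma~\ref{Lemma:c11}, but to choose the cut so that it respects the arithmetic of this case. First I would record the parities forced by the hypotheses. From $m$ even and $c,d$ odd we get $k=m-d-c$ even; from $n$ odd and $n-l$ even we get $l$ odd; hence $k\times l$ is even and the size $m\times n-k\times l$ is even, so $C(m,n,k,l)$ is even-sized and $s,t$ have different colours. I would also note that this is exactly the parity pattern that condition (F17) leaves out and that condition (F18) governs: since $(C(m,n,k,l),s,t)$ is acceptable, (F18) (as well as (F1), (F3) and (F10)--(F15)) fails, and it is precisely these failures that I will invoke, through the bookkeeping of Lemma~\ref{Lemma:c0} and Corollary~\ref{Corollary:co3}, to certify acceptability of the pieces produced below.

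The key difficulty, and the reason this configuration needs its own lemma, is that the two side blocks $R(d,n)$ and $R(c,n)$ are both odd$\times$odd and hence odd-sized, so by Lemma~\ref{Lemma:1m} neither carries a Hamiltonian cycle; the plain vertical peel used throughout Lemma~\ref{Lemma:c11} is therefore unavailable. I would use two devices instead. The first is a \emph{horizontal} cut through the bridge $R(m,n-l)$: because $m$ is even, every full-width band $R(m,h)$ is even-sized and so, by Lemma~\ref{Lemma:1m} and the remark following it, carries a Hamiltonian cycle using all the boundary edges of three of its sides. When $n-l\ge 6$ and $s,t$ can be kept above a bottom band, I take $G_2=R(m,h)$ with $h=n-l-3$ (odd, so $h\ge 3$ and its cycle exists) and $G_1=C(m,n-h,k,l)$. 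Then $n-h$ is even, so $G_1$ is an even$\times$even $C$-shape with $k\ge 2>1$ and bridge height $3$, which falls under Lemma~\ref{Lemma:c11} (Case~1(a), or Case~5 if $s$ and $t$ lie on opposite sides of the notch). After checking, as in that case, that $(G_1,s,t)$ is color-compatible and avoids the relevant (F1), (F3) and (F4)--(F9) patterns---each of which would pull the original problem into a forbidden condition---I take the cycle of $G_2$ and splice it to the path of $G_1$ along parallel edges across the cut, which exist since the shared side has length $m>1$. If instead one endpoint lies in $G_2$, I route the cut between $s$ and $t$ and join the two component paths through a single pair of adjacent vertices $p\in G_1$, $q\in G_2$, choosing the colour of $p$ so that $(G_1,s,p)$ and $(G_2,q,t)$ are both color-compatible, exactly as in Subcase~4.1.2 of Lemma~\ref{Lemma:c11}.

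The second device, used whenever a horizontal band cannot separate $s$ and $t$ cleanly and in the irreducible base case $n-l=4$ (where $h=n-l-3=1$ would leave a bridge of height $1$, i.e.\ condition (F10), or a band $R(m,1)$ with no cycle), is an $L$-shaped separation of type~I isolating the even-sized rectangle $R(d,l+1)$---a left ear together with the first bridge row, odd$\times$even hence even-sized---or its right-hand mirror $R(c,l+1)$. This is exactly the rectangle appearing in (F18)(a). Its complement $G_2$ is an even-sized $L$-shaped grid graph and carries a Hamiltonian cycle by Lemma~\ref{Lemma:6t1}; whichever of the two pieces contains $s,t$ supplies the Hamiltonian $(s,t)$-path, by Theorem~\ref{Theorem:1a} for the rectangle or Theorem~\ref{Theorem:6t} for the $L$-shape, and the two are spliced as before. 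The remaining sub-subcases---$s,t$ deep in the bridge or inside the notch columns $d+1\le x\le d+k$---I would handle by a vertical separation together with a case analysis on the colour of $s$ and on which of the three column-ranges ($x\le d$, $d<x\le d+k$, $x>d+k$) contains $s_x,t_x$, mirroring the dissection of (F18)(b$_1$)--(b$_3$).

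The step I expect to be the main obstacle is this last one. Verifying that none of the fine patterns (F18)(b$_1$)--(b$_3$) occurs under the acceptability hypothesis, and then confirming in every surviving configuration both that the piece carrying $s$ and $t$ meets the hypotheses of Theorem~\ref{Theorem:6t} (or Theorem~\ref{Theorem:1a}) and that a parallel, or single adjacent, edge is available for the splice, is where essentially all of the case analysis lies; by contrast the horizontal reduction for $n-l\ge 6$ is routine once the even$\times$even target of Lemma~\ref{Lemma:c11} is recognised.
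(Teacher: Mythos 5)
Your parity bookkeeping and your first device are sound, and the latter matches the spirit of the paper's Cases 3--4 (the paper also uses horizontal separations, with $n'=l+3$, $l+4$, or $l+5$, to reduce the bridge height and fall back on Lemma~\ref{Lemma:c11}). But two of your three devices break down exactly where the paper deploys its heaviest machinery. First, your $n-l=4$ device misidentifies the complement: removing the rectangle $R(d,l+1)$ (columns $1\le x\le d$, rows $1\le y\le l+1$) from $C(m,n,k,l)$ leaves a region whose boundary has \emph{two} reflex corners (right ear up to row $l$, then the bridge row $l+1$ over columns $d+1\le x\le m$, then the full-width rows $l+2\le y\le n$). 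This is a staircase, not an $L$-shaped grid graph --- the paper itself, in condition (F18)(a), is careful to call this piece only ``an even-sized solid grid subgraph.'' Consequently neither Lemma~\ref{Lemma:6t1} (to get a Hamiltonian cycle in it) nor Theorem~\ref{Theorem:6t} (to get a Hamiltonian path in it when $s,t$ lie there) applies, and your base case has no valid construction. The paper instead handles $n-l=4$ with $C$-shaped separations of types I--IV and explicit splicing patterns (Figs.~\ref{fig:s13e1}--\ref{fig:s13e4}); in particular, for $s_x,t_x\le d$ it carves out only the $k\times 1$ strip $R(k,1)$ above the notch, so that the remaining piece is again a genuine $C$-shape with odd bridge height, which Lemma~\ref{Lemma:c11}, Subcase~4.1.1 can process.

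Second, your fallback ``vertical separation plus case analysis'' cannot cover the configurations you assign to it. If $s,t$ both lie deep in the bridge (below any admissible horizontal cut), or both lie in the notch columns with $s_x=t_x$, then every vertical cut leaves both endpoints in one piece, and a parity count shows that piece and its complement are both \emph{odd}-sized: the ears are odd$\times$odd, and a cut at $x=x_0$ through the notch region gives an $L$-shape of size $x_0n-(x_0-d)l\equiv d\equiv 1\pmod 2$. So the co-piece never has a Hamiltonian cycle (Lemma~\ref{Lemma:1m}/Lemma~\ref{Lemma:6t1}), and the path-plus-cycle splice is impossible --- this is the same odd$\times$odd obstruction you yourself identified for the plain vertical peel. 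The paper resolves precisely these cases by putting the Hamiltonian \emph{path} in the rectangular band and a Hamiltonian \emph{cycle} in the $C$-shaped piece (a cycle it constructs by explicit patterns, e.g. Fig.~\ref{fig:s13e5}(a),(b), since no cited lemma covers cycles in $C$-shapes), and by multi-piece $C$-shaped separations of types II--IV (Subcases~2.2.1, 3.1.4, 3.2.3.1) with patterns given in the figures. Without either a $C$-shape Hamiltonian-cycle construction or these multi-piece separations, your proposal leaves genuinely unprovable residual cases. (A smaller point: for the $C$-shaped piece $G_1$ of your first device, acceptability means avoiding (F1), (F3), (F10)--(F18), not (F4)--(F9), which pertain to $L$-shaped problems.)
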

\begin{proof} Note that, here, $C(m,n,k,l)$ is even-sized and $s$ and $t$ have different colors. For all the following cases, we prove that $(C(m,n,k,l),s,t)$ has an acceptable separation and show that it has a Hamiltonian path.
\begin{figure}[tb]
  \centering
  \includegraphics[scale=.95]{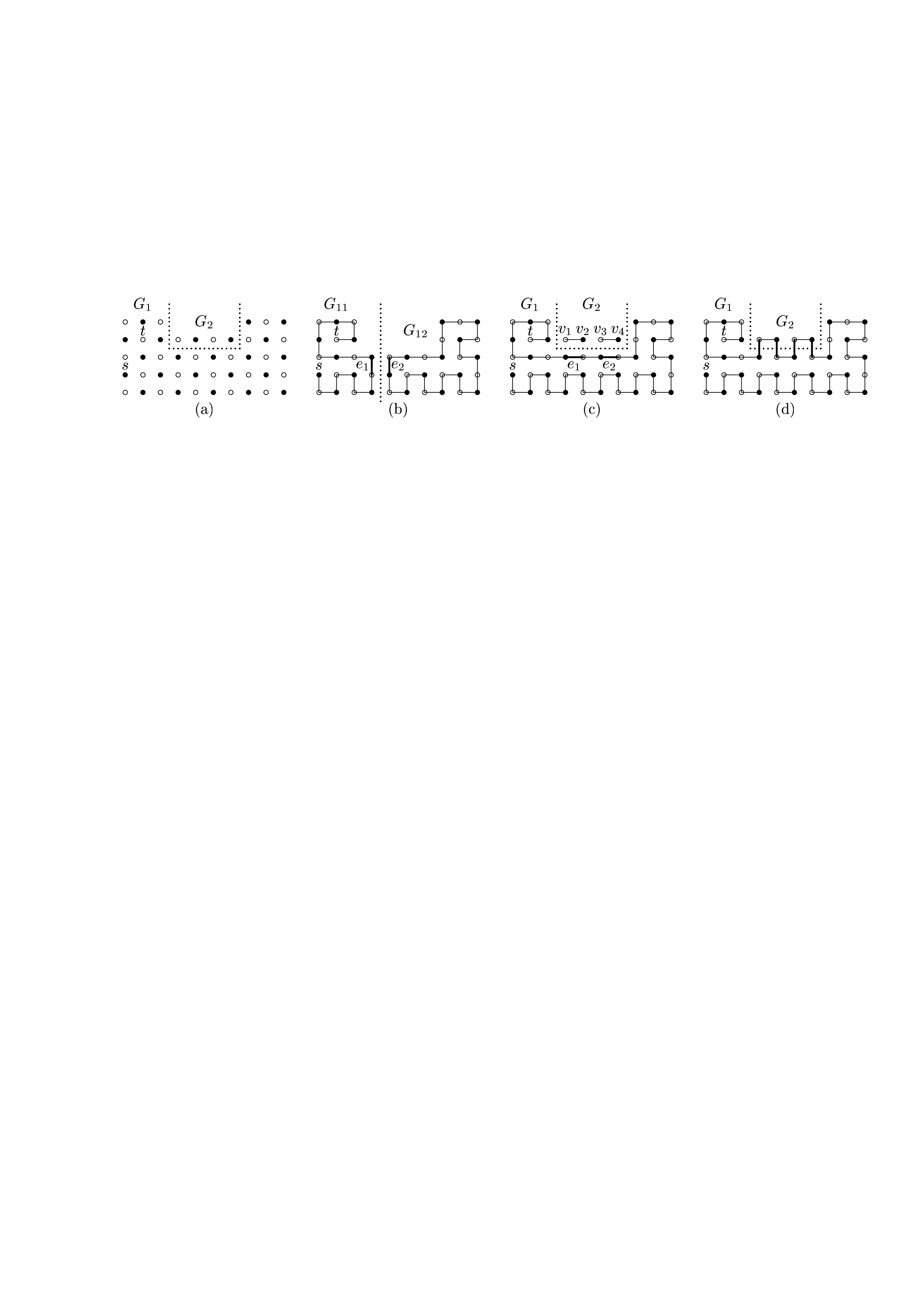}
  \caption[]%
 {\small (a) A $C-$shaped separation of $C(m,n,k,l)$, (b) and (c) a Hamiltonian $(s,t)-$path in $G_1$, and (d) a Hamiltonian $(s,t)-$path in $C(m,n,k,l)$.}
\label{fig:s13e}
\end{figure}
\par Case 1. $n-l\leq 6$ and $[(s_x,t_x\leq d$ and $c>1)$ or $(s_x,t_x>d+k$ and $d>1)]$.
 Assume $\{G_1,G_2\}$ is a $C-$shaped separation (type I) of $C(m,n,k,l)$ such that $G_1=C(m,n,k,l^{'})$, $G_2=R(m^{'},n^{'})$, $l^{'}=l+1$, $m^{'}=k$, $n^{'}=1$, and $s,t\in G_1$, as depicted in Fig. \ref{fig:s13e}(a). Because of $k=even$, $G_2$ is even-sized. Also, since $C(m,n,k,l)$ is even-sized, we conclude that $G_1$ is even-sized. By Lemma \ref{Lemma:c0}, $(G_1,s,t)$ is color-compatible. Since $n-l\leq 6$ and $l^{'}=l+1$, it follows that $n-l^{'}=odd\leq 5$. Moreover, since $n-l^{'}=odd$, $c=odd$, and $d=odd$, it suffices to prove $(G_1,s,t)$ is not in conditions (F1), (F3), (F11), and (F16). Let $s_x,t_x\leq d$, for case $s_x,t_x>d+k$, the proof is similar. $(G_1,s,t)$ is not in conditions (F1) and (F3), the proof is similar to Case 1 of Lemma \ref{Lemma:c11}. A simple check shows that $(G_1,s,t)$ is not in condition (F11). The condition (F16) holds, if $d=3$ and (i) $s_y\leq l$, $t_y> l$, and $s$ is black (here the role of $s$ and $t$ can be swapped), (ii) $n-l=4$, $s=(1,n-1)$, and $t_x>s_x$, or (iii) $s$ is black and $[(s_x=odd$, $t_y>s_y+1)$ or $(s_x=even$ and $t_y>s_y)]$ (here the role of $s$ and $t$ can be swapped). It is obvious that if these cases hold, then $(G_1,s,t)$ satisfies condition (F18), a contradiction. So, $(G_1,s,t)$ is not in condition (F16), and hence it is acceptable. In this case, $(G_1,s,t)$ is in Subcase 4.1.1 of Lemma \ref{Lemma:c11}. \par The Hamiltonian $(s,t)-$path is constructed as follows. First by Subcase 4.1.1 of Lemma \ref{Lemma:c11}, $G_1$ partitions into two subgraphs $G_{11}$ and $G_{12}$, and the Hamiltonian $(s,t)-$path in $G_{11}$ and Hamiltonian cycle in $G_{12}$ is constructed by the algorithm in \cite{991} and Lemma \ref{Lemma:6t1}, respectively. Notice that the pattern for constructing a Hamiltonian cycle in $G_{12}$ is shown in Fig. \ref{fig:s13e}(b). Then we combine the Hamiltonian path and cycle in $G_1$ using two parallel edges $e_1$ and $e_2$ as shown in Fig. \ref{fig:s13e}(b). Let four vertices $v_1,v_2,v_3$ and $v_4$ be in $G_2$ and let $P$ be a Hamiltonian path in $G_1$. Consider Fig. \ref{fig:s13e}(c). Clearly, there exist two edges $e_1$ and $e_2$ such that $e_1,e_2\in P$ are on boundary of $G_1$ facing $G_2$. By merging $(v_1,v_2)$ and $(v_3,v_4)$ to these edges, we obtain a Hamiltonian path for $(C(m,nk,l),s,t)$, as illustrated in Fig  \ref{fig:s13e}(d). When $k=2$ or $k>4$, a similar to the case $k=4$, the result follows.
\par Case 2. $n-l=4$ and $[(s_x\leq d$ and $t_x>d)$, $(d+1\leq s_x\leq d+k$ and $t_x>d+k)$, or $(d+1\leq s_x,t_x\leq d+k)]$.
\par Subcase 2.1. $(s_x\leq $ and $t_x>d)$ or $(d+1\leq s_x\leq d+k$ and $t_x>d+k)$. Let $s_x\leq d$ and $t_x>d$. By symmetry, the result follows, if $d+1\leq s_x\leq d+k$ and $t_x>d$. Consider the following subcases.
\par Subcase 2.1.1. $d=1$, $c>1$, and $s=(1,1)$. This case is similar to Case 2 of Lemma \ref{Lemma:c11}, where $s,p\in G_2$, $q,t\in G_1$, $p$ and $q$ are adjacent, and $p=(1,l)$. If $(G_2,s,p)$ is not acceptable, then $(C(m,n,k,l),s,t)$ satisfies condition (F11), a contradiction. Therefore, $(G_2,s,p)$ is acceptable. Moreover, since $s=(1,1)$ and $p=(1,l)$, a simple check shows that $(G_1,q,t)$ is acceptable. The Hamiltonian path in $(C(m, n, k, l), s, t)$ is
obtained similar to Case 5 of Lemma \ref{Lemma:c11} (Fig. \ref{fig:s13e1}(a)).
\par
\begin{figure}[tb]
  \centering
  \includegraphics[scale=.95]{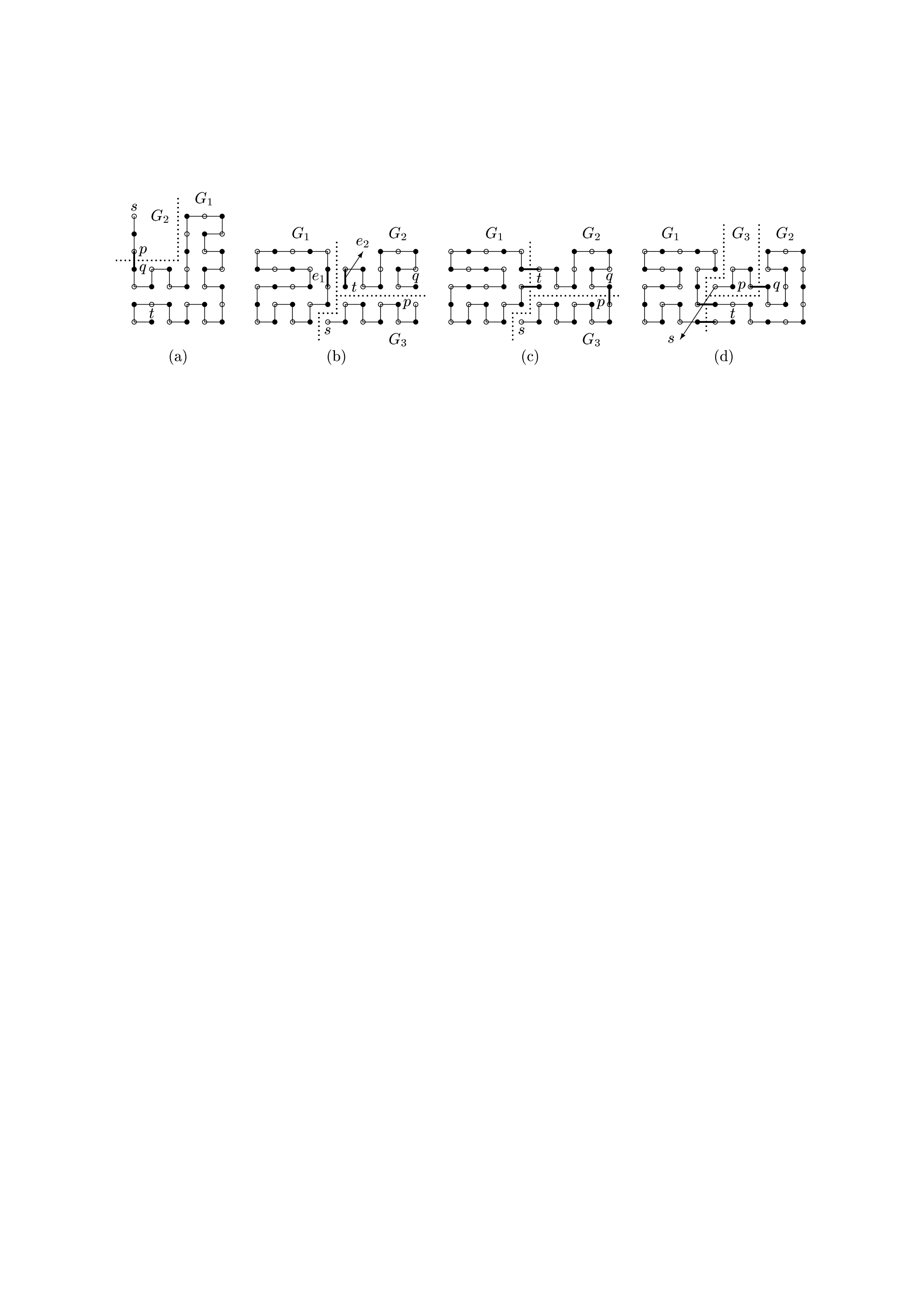}
  \caption[]%
 {\small (a) A Hamiltonian $(s,t)-$path in $C(m,n,k,l)$, (b) a $L-$shaped separation of $C(m,n,k,l)$, and (c) and (d) a Hamiltonian $(s,t)-$path in $C(m,n,k,l)$.}
\label{fig:s13e1}
\end{figure}
\par Subcase 2.1.2. $d,c>1$,
\par
Subcase 2.1.2.1. $s=(d,n)$, and $[t=(d+1,l+2)$ or $t=(d+2,l+1)]$.
Let $\{G_1,G_2,G_3\}$ be a $L-$shaped separation (type III) of
$C(m,n,k,l)$ such that $V(G_1)=\{1\leq x\leq d, 1\leq y\leq n-1$ and
$1\leq x\leq d-1, y=n\}$, $V(G_2)=\{d+1\leq x\leq m,1\leq y\leq
n-2\}$, $G_3=C(m,n,k,l)\backslash (G_1+G_2) $, $q,t\in G_2$, $s,p\in
G_3$, $p$ and $q$ are adjacent, and $p=(m,n-1)$. Consider Fig.
\ref{fig:s13e1}(b). A simple check shows that $(G_2,q,t)$ and
$(G_1,s,p)$ are acceptable. In order to build a
Hamiltonian$(s,t)-$path in $C(m,n,k,l)$, first we construct
Hamiltonian paths in $(G_2,q,t)$ and $(G_3,s,p)$ by the algorithm in
\cite{991}. Then we connect two vertices $p$ and $p$. Moreover,
since $G_1$ is even-sized, then it has a Hamiltonian cycle by Lemma
\ref{Lemma:6t1}. Finally, we combine Hamiltonian cycle in $G_1$ and
Hamiltonian $(s,t)-$path by two parallel edges. The full
construction of a Hamiltonian path in $(C(m,n,k,l)s,t)$ is
illustrated in Fig. \ref{fig:s13e1}(c). The pattern for
constructing a Hamiltonian cycle in $G_1$ is shown in Fig.
\ref{fig:s13e1}(b). It is easy to see that there exists at least one
edge for combining Hamiltonian cycle and path.
 \par
 Subcase 2.1.2.2. $s=(d,l+2)$, and $t=(d+1,n)$. This case is similar to Subcase 2.1.2.1, where  $V(G_1)=\{1\leq x\leq d, 1\leq y\leq l+1$ and $1\leq x\leq d-1, l+2\leq y\leq n\}$, $V(G_2)=\{d\leq x\leq m, n-1\leq y\leq n$ and $d+k+1\leq x\leq m, 1\leq y\leq n-2\}$, $p=(d+k,n-2)$, and $q=(d+k+1,n-2)$ (as shown Fig. \ref{fig:s13e1}(d)).
 \par
 Subcase 2.1.3. $[(d,c>1)$ or $(d=c=1)]$ and $[(s\neq (d,n)$, $t\neq (d+1,l+2)$ or $t\neq(d+2,l+1))$, $(s\neq (d,l+2)$ or $t\neq (d+1,n))]$. This case is similar to Case 5 of Lemma \ref{Lemma:c11}, where $G_1=R(m^{'},n)$, $G_2=L(m-m^{'},n,k,l)$, $m^{'}=d$, and $p=(m^{'},n)$ if $s\neq (m^{'},n)$ or $t\neq (m^{'}+1,n)$; otherwise $p=(m^{'},l+2)$. Since $d=odd$ and $n=odd$, it follows that $G_1$ is odd$\times$odd with white majority color, $G_2$ is odd-sized with black majority color, and $p$ is white. Clearly, $(G_1,s,p)$ and $(G_2,q,t)$ are color-compatible. Consider $(G_1,s,p)$. It is easy to check that $(G_1,s,p)$ is not in conditions (F1)
 and (F2). Now, consider $(G_2,q,t)$. Since $n-l=4$ and $c=odd$, it is enough to show that $(G_2,q,t)$ is not in conditions (F1) and (F3). Since $q_x=d+1$ and $t_x\geq d+1$, a simple check shows that $(G_2,q,t)$ is not in conditions (F1) and (F3). Hence $(G_2,q,t)$ is acceptable. The Hamiltonian path in $(C(m, n, k, l), s, t)$ is
obtained similar to Case 5 of Lemma \ref{Lemma:c11}.\par
\begin{figure}[tb]
  \centering
  \includegraphics[scale=.95]{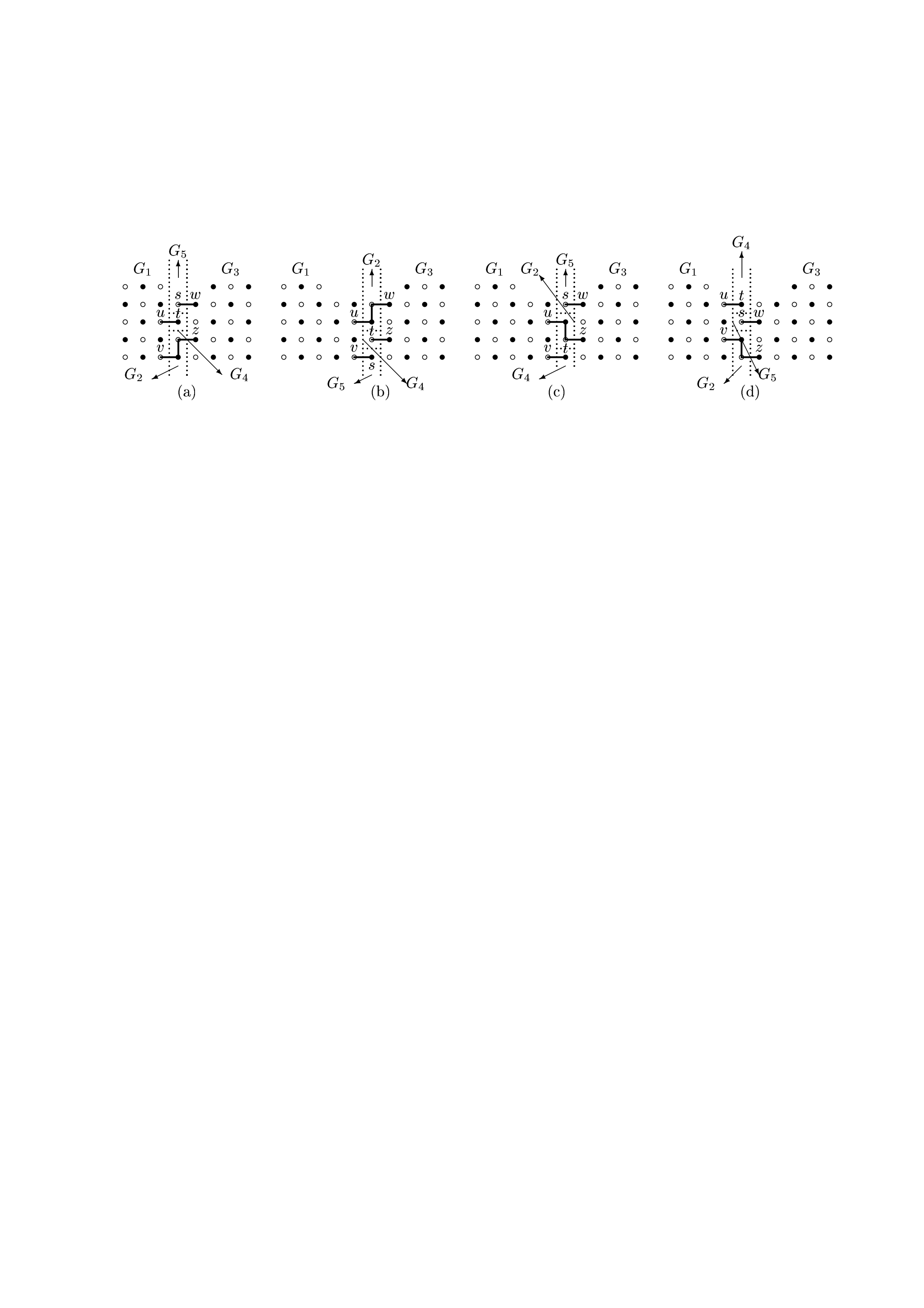}
  \caption[]%
 {\small A Hamiltonian $(s,t)-$path in $C(m,n,k,l)$.}
\label{fig:s13e2}
\end{figure}
 \begin{figure}[tb]
  \centering
  \includegraphics[scale=.95]{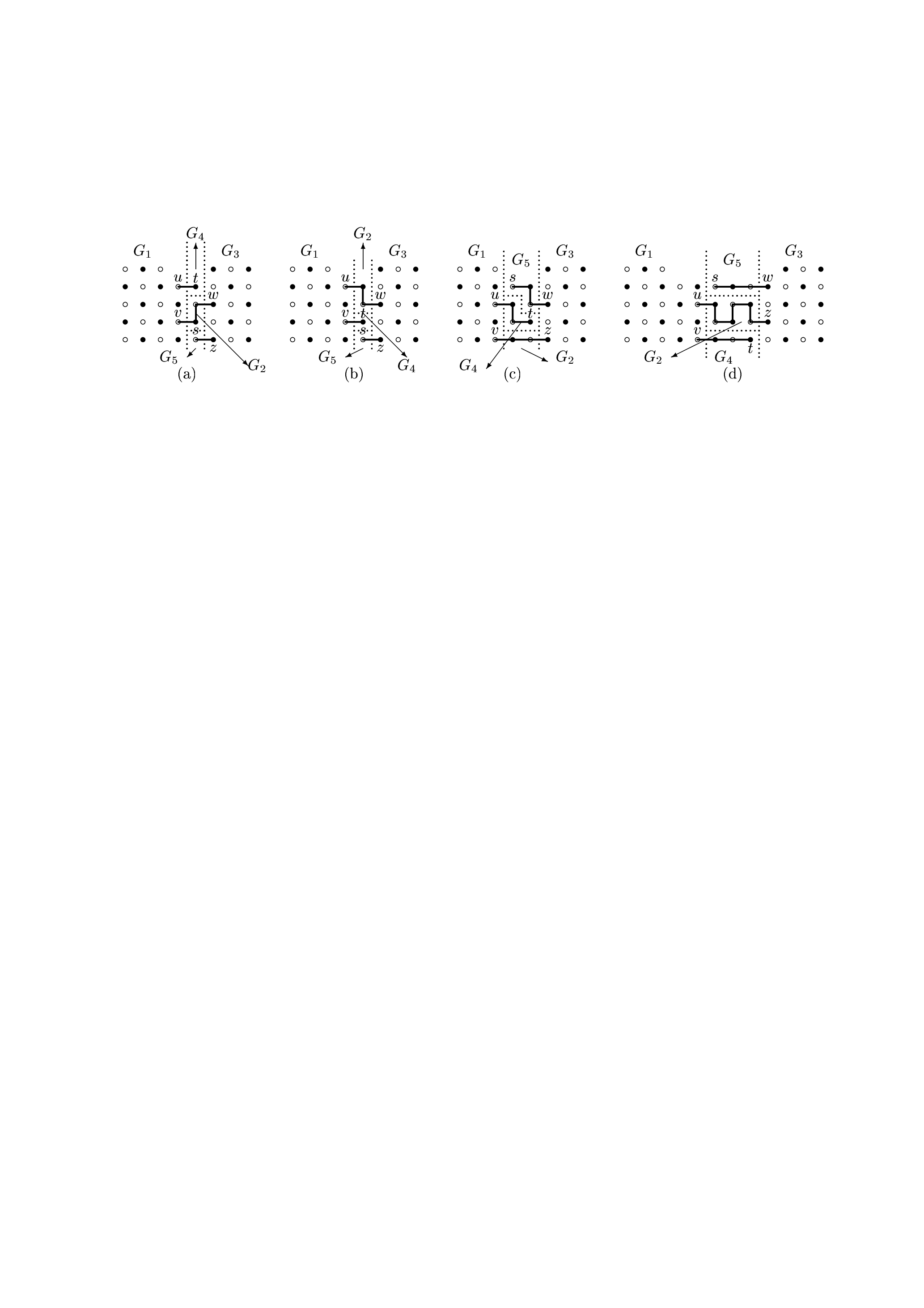}
  \caption[]%
 {\small A Hamiltonian $(s,t)-$path in $C(m,n,k,l)$.}
\label{fig:s13e3}
\end{figure}
\begin{figure}[tb]
  \centering
  \includegraphics[scale=.95]{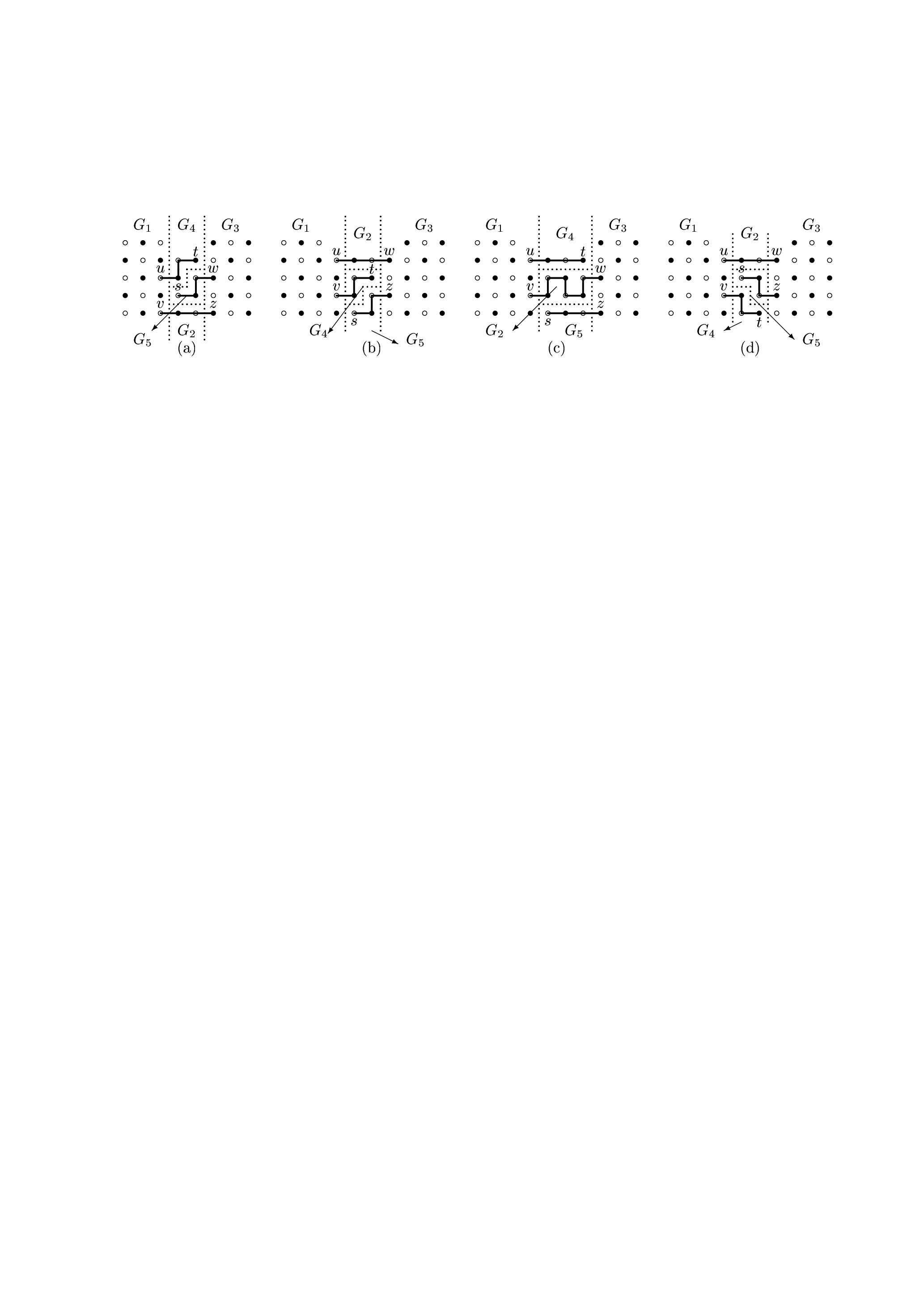}
  \caption[]%
 {\small A Hamiltonian $(s,t)-$path in $C(m,n,k,l)$.}
\label{fig:s13e4}
\end{figure}
\par Subcase 2.2. $d+1\leq s_x,t_x\leq d+k$ and $d,c>1$.
\par Subcase 2.2.1. $s_x=t_x$. Let $\{G_1,G_2,G_3,G_4,G_5\}$ be a $C-$shaped separation (type III) of $C(m,n,k,l)$, as shown in Fig. \ref{fig:s13e2}, \ref{fig:s13e3}(a), and \ref{fig:s13e3}(b). The patterns in Fig. \ref{fig:s13e2}, \ref{fig:s13e3}(a), and \ref{fig:s13e3}(b) can be used for finding a Hamiltonian $(s,t)-$path for any values of $d$, $c$, $l$, and $k$. Notice that in Fig. \ref{fig:s13e2}(a)-(c) $s_x=even$, and in Fig. \ref{fig:s13e2}(d), \ref{fig:s13e3}(a), and \ref{fig:s13e3}(b) $s_x=odd$.
In this case, $G_1$ is a rectangular (or $L-$shaped) grid subgraph, where $s_x=d+1$ (or $s_x>d+1)$ and also $G_3$ is a rectangular (or $L-$shaped) grid subgraph, where $s_x=d+k$ (or $s_x<d+k)$. The Hamiltonian path in $(G_1,u,v)$ and $(G_2.w,z)$ constructed by algorithm in \cite{CST:AFAFCHPIM} or \cite{991}.
\par Subcase 2.2.2. $s_x\neq t_x$.
\par Subcase 2.2.2.1. $(s_x=odd$ and $[(s=(s_x,n)$ and $[t=(s_x+1,l+2)$ or $t=(s_x+2,l+1)])$ or $(s=(s_x,l+2)$ and $t=(s_x+1,n))])$ or $(s_x=even$ and $[(s=(s_x,l+3)$ and $t=(s_x+1, l+1))$ or $(s=(s_x,l+1)$ and $[(t=(s_x+1,l+3)$ or $t=(s_x+2,n)])$. This case is similar to Subcase 2.2.1. The patterns in Fig. \ref{fig:s13e3}(c), \ref{fig:s13e3}(d), and \ref{fig:s13e4} can be used for finding a Hamiltonian $(s,t)-$path between for any values of $d$, $c$, $l$, and $k$.
\par Subcase 2.2.2.2. Other possible cases. This case is similar to Subcase 2.1.3, where $G_1=L(m^{'},n,k^{'},l)$, $G_2=L(m-m^{'},n,k^{''},l)$, $m^{'}=s_x$, $k^{'}=m^{'}-d$, and $k^{''}=k-k^{'}$. Let $s_x=even$, then $p=(m^{'},l+1)$ if $s\neq (m^{'},l+1)$ or $t\neq (m^{'}+1,l+1)$; otherwise $p=(m^{'},l+3)$. Now, let $s_x=odd$, then $p$ is defined similar to Subcase 2.1.3; where $m^{'}=s_x$.
\begin{figure}[tb]
  \centering
  \includegraphics[scale=.95]{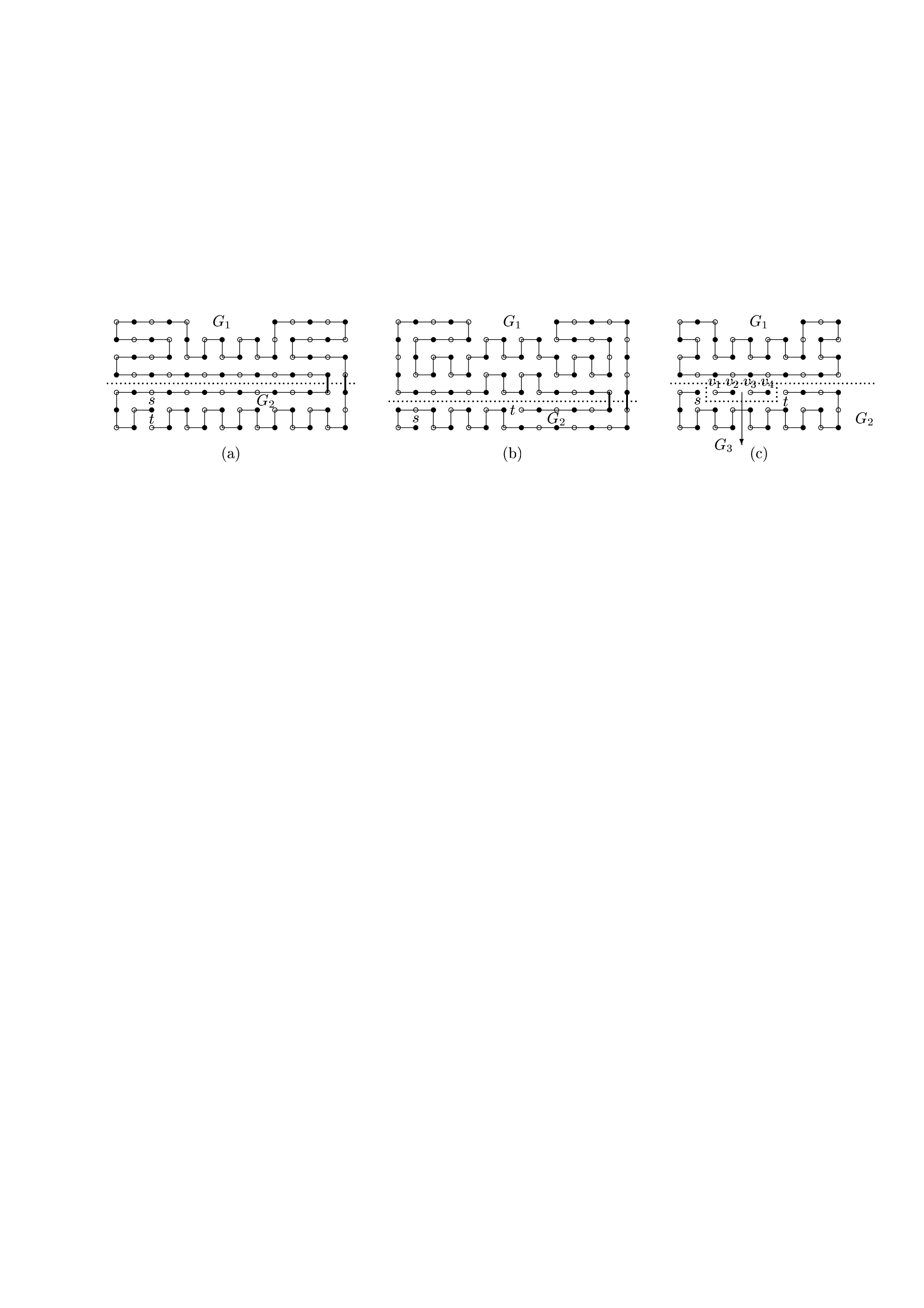}
  \caption[]%
 {\small (a) and (b) A Hamiltonian $(s,t)-$path in $C(m,n,k,l)$, (c) a $C-$shaped separation type (IV) of $C(m,n,k,l)$.}
\label{fig:s13e5}
\end{figure}
\par Case 3. $n-l= 6$ and $[(s_x\leq d$ and $t_x>d)$, $(d+1\leq s_x\leq d+k$, $t_x>d+k$, and $d>1)$, or $(d+1\leq s_x,t_x\leq d+k$ and $c,d>1)]$. Since $n-l= 6$ and $l\geq 1$, thus $n\geq 7$.
\par
Subcase 3.1. $d,c>1$ and $s_y,t_y>l+3$. Since $d=odd>1$, $c=odd>1$, and $k=even$, we have $m\geq 8$.
\par Subcase 3.1.1. $(s_x=t_x)$, $(s $ is black, $s_y=odd$, and $t_x=s_x+1)$, or $(s$ is white and $t_x>s_x)$.
Let $\{G_1,G_2\}$ be a horizontal separation of $C(m,n,k,l)$ such that $G_1=C(m,n^{'},k,l)$, $G_2=R(m,n-n^{'})$, $n^{'}=l+3$ and $s,t\in G_2$. Since $n$ is odd, $n-l= 6$, and $n^{'}=l+3$, it follows that $n-n^{'}= 3$ and $n^{'}=even$. Moreover, since $m=even$, we conclude that $G_2$ is even$\times$odd. By Lemma \ref{Lemma:c0}, $(G_2,s,t)$ is color-compatible. Since $m\geq 8$ and $n-n^{'}=3$, it suffices to prove that $(G_2,s,t)$ is not in condition (F2). The condition (F2) holds, if $s$ is black and $t_x>s_x+1$. This is impossible, because we assume that $t_x=s_x+1$. Thus $(G_2,s,t)$ is not in condition (F2), and hence $(G_2,s,t)$ is acceptable. The Hamiltonian path in $(C(m, n, k, l), s, t)$ is
obtained similar to Case 1 of Lemma \ref{Lemma:c11}. Since $m\geq 8$, thus there is at least one edge for combining Hamiltonian cycle and path. In this case, the pattern for constructing a Hamiltonian cycle in $G_1$ is shown in Fig. \ref{fig:s13e5}(a).
\par
Subcase 3.1.2. $s_y,t_y>l+4$, $s$ is black, and $[(s_y=even$ and $t_x>s_x)$ or $(s_y=odd$ and $t_x>s_x+1)]$. This case is similar to Subcase 3.1.1, where $n^{'}=l+4$. Since $n=odd$, $n-l=6$, and $n^{'}=l+4$, it follows that $n-n^{'}=2$ and $n^{'}=odd$. Clearly, $G_1$ is even-sized and $G_2$ is even$\times$ even. By Lemma \ref{Lemma:c0}, $(G_2,s,t)$ is color-compatible. Since $G_2=$even$\times$even, it is enough to prove that $(G_2,s,t)$ is not in condition (F1). The condition (F1) occurs, when $2\leq s_x=t_x\leq m-1$. Since $t_x>s_x$, thus $s_x\neq t_x$, and hence $(G_2,s,t)$ is not in condition (F1). Therefore, $(G_2,s,t)$ is acceptable. The Hamiltonian path in $(C(m, n, k, l), s, t)$ is
obtained similar to Subcase 3.1.1. In this case, the pattern for constructing a Hamiltonian cycle in $G_1$ is shown in Fig. \ref{fig:s13e5}(b).
\par Subcase 3.1.3. $s_y=l+4$, $t_y>l+4$, $t_x>s_x+1$, and $s$ is black. This case is the same as Subcase 3.1.2, where $s,p\in G_1$, $q,t\in G_2$, $p$ and $q$ are adjacent, and $p=(1,n^{'})$. From Subcase 3.1.2, we know that $G_1$ and $G_2$ is even-sized. Since $l=odd$, we have $n^{'}=odd$. Moreover, since $p=(1,n^{'})$, it is clear that $p$ is white. Hence, $(G_1,s,p)$ and $(G_2,q,t)$ are color-compatible. $(G_2,q,t)$ is not in conditions (F1) and (F2), the proof is similar to Subcase 3.1.2. Consider $(G_1,s,p)$. Since $n^{'}-l=4$ and $d,c\geq 3$, it suffices to prove that $(G_2,s,p)$ is not in condition (F18). The condition (F18) holds, if $p_x<s_x$ and $p$ is black. Since $p$ is white, it is clear that $(G_1,s,p)$ is not in condition (F18). Hence $(G_1,s,p)$ is acceptable. In this case, $(G_1,s,p)$ is in Case 1 or 2. The Hamiltonian path in $(C(m, n, k, l), s, t)$ is
obtained similar to Case 5 of Lemma \ref{Lemma:c11}. Here, if $s_y>l+4$ and $t_y=l+4$, then the role of $p$ and $q$ can be swapped (that is, $s,p\in G_2$ and $q,t\in G_1)$.
\begin{figure}[tb]
  \centering
  \includegraphics[scale=.95]{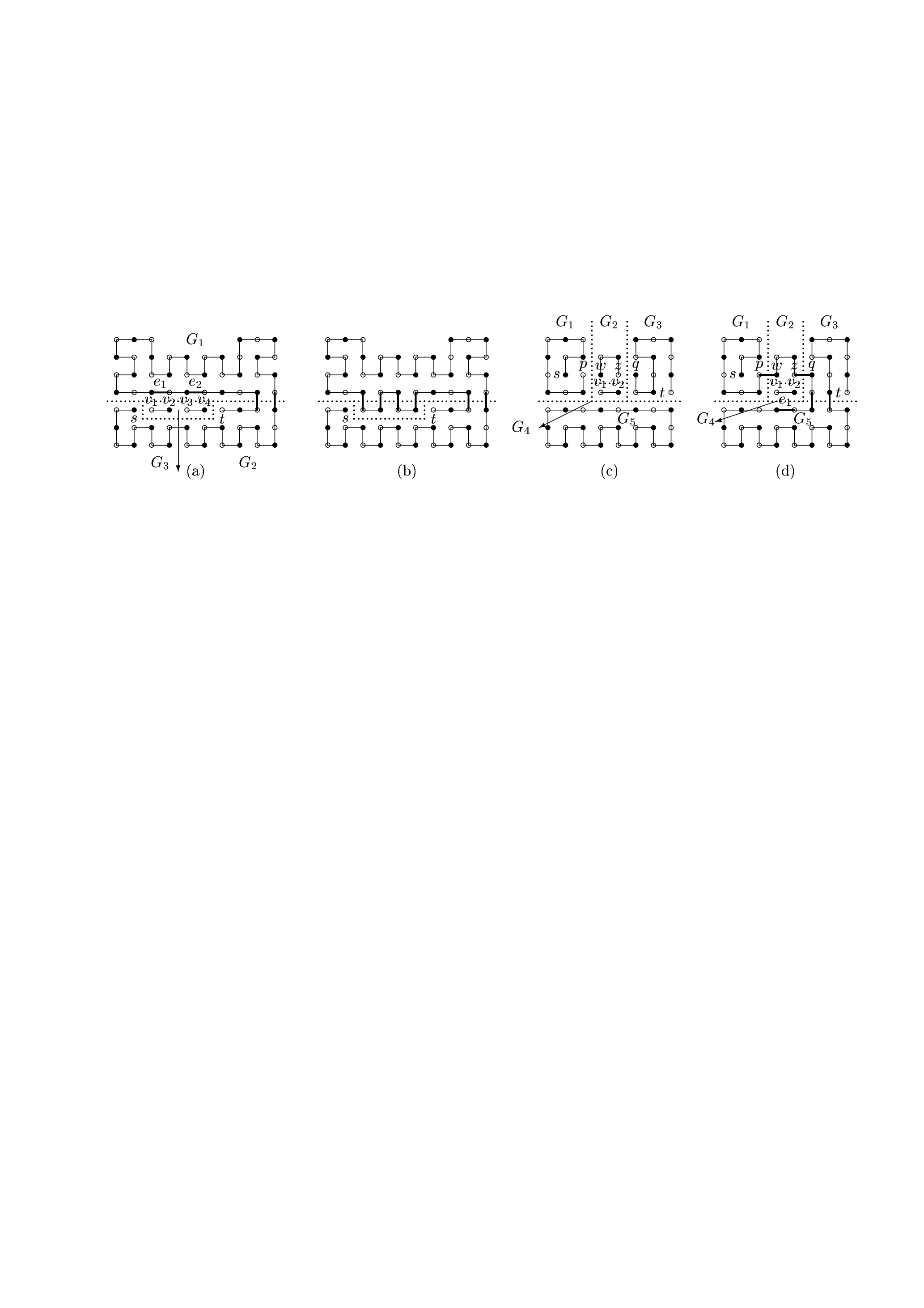}
  \caption[]%
 {\small (a) and (b) Combining a Hamiltonian $(s,t)-$path in $G_2$ and a Hamiltonian cycle in $G_1$, (b) a Hamiltonian $(s,t)-$path in $C(m,n,k,l)$, (c) $C-$shaped separation type (II) of $C(m,n,k,l)$, (d) combining Hamiltonian paths in $G_1$, $G_2$, and $G_3$ and a Hamiltonian cycle in $G_5$.}
\label{fig:s13e5a}
\end{figure}
\par Subcase 3.1.4. $s_y=t_y=l+4$, $s$ is black, and $t_x>s_x+1$. Let $\{G_1,G_2,G_3\}$ be a $C-$shaped separation (type IV) of $C(m,n,k,l)$ such that $G_1=C(m,n^{'},k,l)$, $G_2=C(m,n-n^{'},k^{'},l^{'})$, $G_3=R(m^{'},n^{''})$, $n^{'}=l+3$, $k^{'}=t_x-s_x-1$, $l^{'}=1$, $m^{'}=k^{'}$, $n^{''}=l^{'}$, and $s,t\in G_2$. Consider Fig. \ref{fig:s13e5}(c). Since $m$ is even, $s$ is black, and $t$ is white, thus $t_x=odd$ and $s_x=even$, and hence $d^{'}$, $c^{'}$, and $k^{'}$ are even. Clearly, $G_1$, $G_2$, and $G_3$ are even-sized. By Lemma \ref{Lemma:c0}, $(G_1,s,t)$ is color-compatible. Since $n-n^{'}-l^{'}=2$, $c^{'},d^{'}=even$, $s_x\leq d^{'}$, and $t_x>d^{'}+k^{'}$, it suffices to prove $(G_2,s,t)$ is not in condition (F17). The condition (F17) holds, if $s_y=l+5$ or $t_y=l+5$. Since $t_y=s_y=l+4$, this is impossible, and hence $(G_2,s,t)$ is acceptable. In this case, $(G_2,s,t)$ is in Subcase 5.1 of Lemma \ref{Lemma:c11}. For constructing a Hamiltonian $(s,t)-$path, first combine a Hamiltonian path in $G_2$ and a Hamiltonian cycle in $G_1$, this path is called $P_1$, as shown in Fig \ref{fig:s13e5a}(a). the pattern for constructing a Hamiltonian cycle in $G_1$ is shown in Fig. \ref{fig:s13e5}(c). Notice that since $m\geq 8$, thus there exists at least one edge for combining Hamiltonian cycle and path. Let four vertices $v_1,v_2,v_3$ and $v_4$ be in $G_3$. Consider Fig. \ref{fig:s13e5a}(a). Clearly, there exist two edges $e_1$ and $e_2$ such that $e_1,e_2\in P_1$ are on boundary of $G_1$ facing $G_3$. By merging $(v_1,v_2)$ and $(v_3,v_4)$ to these edges, we obtain a Hamiltonian path for $(C(m,nk,l),s,t)$, as illustrated in Fig  \ref{fig:s13e5a}(b). When $k=2$ or $k>4$, a similar to the case $k=4$, the result follows.
\par Subcase 3.2. $s_y,t_y\leq l+3$.
\par
Subcase 3.2.1. $s$ is white and $[(s_x\neq t_x)$ or $(d+1\leq s_x=t_x\leq d+k$ and $[(s_y$ (or $t_y)>l+3)$ or $(s_y$ (or $t_y)<l+2)])]$. Since $k=even$ and $d,c\geq 1$, we have $m\geq 4$. This case is similar to Subcase 3.1.2, where $s,t\in G_1$. Since $n-l=6$ and $n^{'}=l+4$, it follows that $n^{'}-l=4$. Moreover, since $d$ and $c$ are odd and $n^{'}-l=4$, it suffices to prove $(G_1,s,t)$ is not in condition (F1), (F3), (F11), and (F18). $(G_1,s,t)$ is not in conditions (F1) and (F3), the proof is similar to Case 1. A simple check shows that $(G_1,s,t)$ is not in conditions (F11) and (F18).
 Therefore, $(G_1,s,t)$ is acceptable. In this case, $(G_1,s,t)$ is in Case 2. The Hamiltonian path in $(C(m, n, k, l), s, t)$ is
obtained similar to Case 1 of Lemma \ref{Lemma:c11}. Notice that, because of $m\geq 4$, there is at least one edge for combining Hamiltonian cycle and path.
\par Subcase 3.2.2. $(d+1\leq s_x=t_x\leq d+k$, and $[s_y$ (or $t_y)=l+2$ and $t_y$ (or $s_y)=l+3])$, or $(s$ is black, $s_y=even$, and $t_x=s_x+1)$. Note that, in this case, $d,c>1$. This case is the same as Subcase 3.1.1, where $s,t\in G_1$. One can check that $(G_1,s,t)$ is acceptable. In this case, $(G_1,s,t)$ is in Case 1 of Lemma \ref{Lemma:c11}. The Hamiltonian path in $(C(m, n, k, l), s, t)$ is
obtained similar to Subcase 3.1.1.
\par Subcase 3.2.3. $d,c>1$, $s$ is black, and $[(s_y=even$ and $t_x>s_x+1)$ or $(s_y=odd$ and $t_x>s_x)]$.
\par Subcase 3.2.3.1. $s_x\leq d$ and $t_x>d+k$. Let $\{G_1,G_2,G_3,G_4, G_5\}$ be a $C-$shaped separation (type II) of $C(m,n,k,l)$ such that $G_1=R(m^{'},n^{'})$, $G_2=R(m^{''},n^{''})$, $G_3=R(m-m^{'}-m^{''},n^{'})$, $G_4=R(m^{''}, 1)$, and $G_5=C(m,n,k,l)\backslash (G_1+G_2+G_3+G_4)$, where $V(G_1)=\{1\leq x\leq x^{'}, 1\leq y\leq l+3\}$, $V(G_2)=\{x^{'}+1\leq x\leq x^{''}, 1\leq y\leq l+2\}$, $V(G_3)=\{x^{''}+1\leq x\leq m,1\leq y\leq l+3\}$, $V(G_4)=\{x^{'}+1\leq x\leq x^{''}, y=l+3\}$, $x^{'}=d$, and $x^{''}=d+k$. Assume that $s,p\in G_1$, $w,z\in G_2$, $q,t\in G_3$ such that $w$ and $p$, and $q$ and $z$ are adjacent, $p=(x^{'},l+2)$, and $q=(x^{''}+1,l+2)$. Consider Fig. \ref{fig:s13e5a}(c). It is clear that $(G_1,s,p)$, $(G_2,w,z)$, and $(G_3,q,t)$ are color-compatible. Consider $(G_1,s,p)$ and $(G_3,q,t)$. Since $n^{'}=l+3$ and $l=odd$, it follows that $n^{'}=even\geq 4$. Moreover, since $d,c=odd>1$ and $n^{'}\geq 4$, $(G_1,s,p)$ and $(G_3,q,t)$ are not in condition (F1). The condition (F2) holds, if $(d=3$ and $s_y\leq l)$ or $(c=3$ and $t_y\leq l)$. If this case holds, then $(C(m,n,k,l),s,t)$ satisfies condition (F18), a contradiction. So, $(G_1,s,p)$ and $(G_3,q,t)$ are acceptable. Consider $(G_2,w,z)$. Since $k$ and $n^{''}$ are even, it is clear that $G_2$ is even$\times$even and $G_4$ is even-sized. $(G_2,w,z)$ is not in condition (F2). Moreover, since $w_x=x^{'}+1$ and $z_x=x^{''}$, clearly $(G_2,w,z)$ is not in condition (F1). Therefore, $(G_2,w,z)$ is acceptable. \par Because $(G_1,s,p)$, $(G_2,w,z)$, and $(G_3,q,t)$ are acceptable, by Theorem \ref{Theorem:1a} they have Hamiltonian paths. So, we construct a Hamiltonian path in $(G_1,s,p)$, $(G_2,w,z)$, and $(G_3,q,t)$ by the algorithm in \cite{CST:AFAFCHPIM}. Then we connect vertices $p$, $w$, $z$, and $q$. Furthermore, since $G_5$ is even-sized rectangular grid subgraph, it has a Hamiltonian cycle by Lemma \ref{Lemma:1m}. Then combine Hamiltonian cycle
and path using two parallel edges; see Fig. \ref{fig:s13e5a}(d). Notice that, since $d,c>1$, there exists at least one edge for combining Hamiltonian
cycle and path. Let two vertices $v_1$ and $v_2$ be in $G_4$ and $P$ be a Hamiltonian $(s,t)-$path. Obviously, there exists an edge $e_1$ such that $e_1\in P$ are on boundary of $G_5$ facing $G_4$. By merging $(v_1,v_2)$ to this edge, we obtain a Hamiltonian path for $(C(m,n,k,l),s,t)$, as illustrated in Fig  \ref{fig:s13e7}(a). When $|G_4|>2$, a similar to the case $|G_4|=2$, the result follows.
\begin{figure}[tb]
  \centering
  \includegraphics[scale=1]{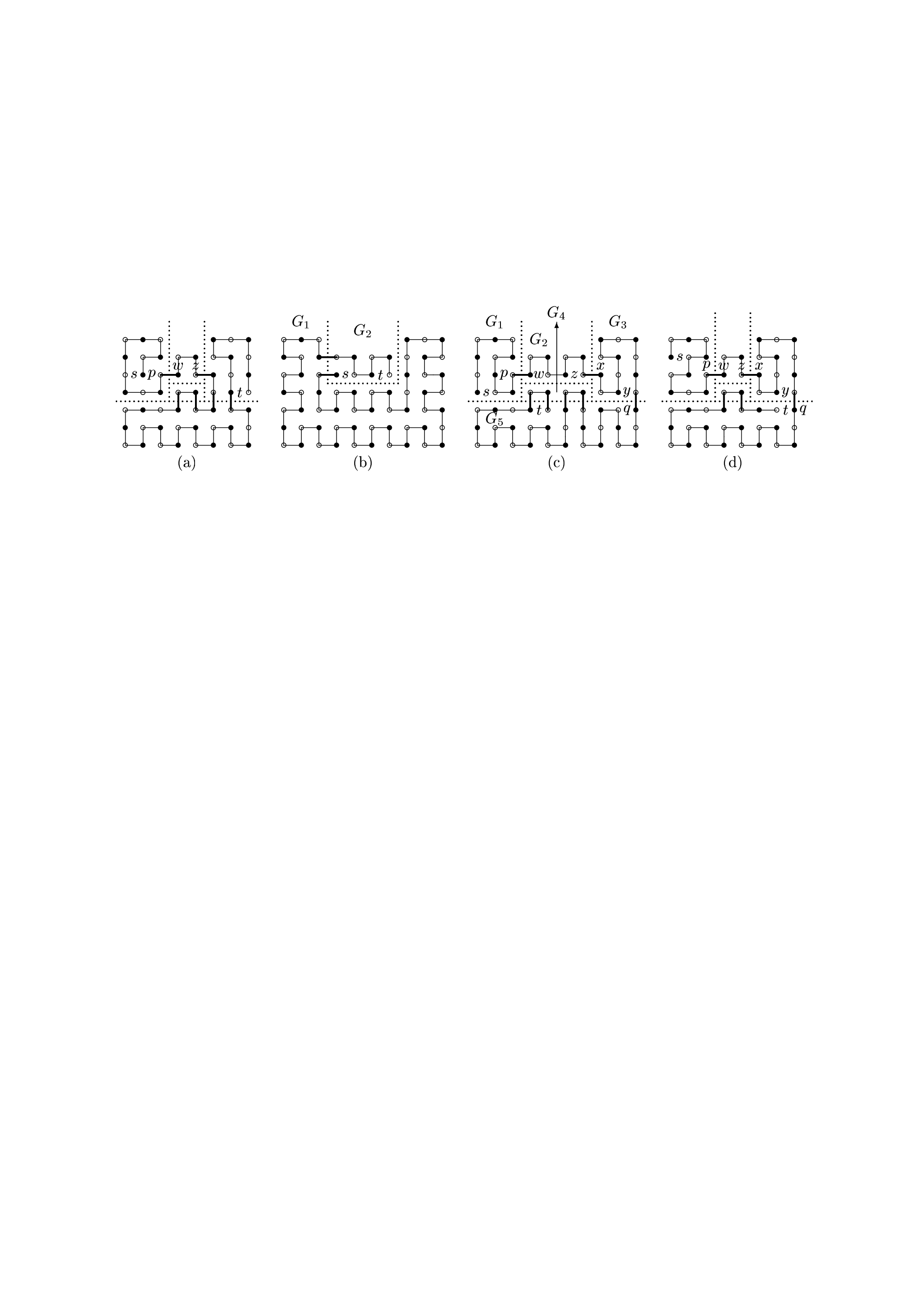}
  \caption[]%
 {\small A Hamiltonian cycle in $(C(m,n,k,l),s,t)$.}
\label{fig:s13e7}
\end{figure}
\par Subcase 3.2.3.2. $d+1\leq s_x,t_x\leq d+k$.
\par Subcase 3.2.3.2.1. $s_y,t_y\leq l+2$. This case is the same as Case 1, where $l^{'}=l+2$, $m^{'}=k$, $n^{'}=l^{'}-l$, and $s,t\in G_2$. Consider Fig. \ref{fig:s13e7}(b). A simple check shows that $(G_2,s,t)$ is acceptable. The Hamiltonian path in $(C(m, n, k, l), s, t)$ is
obtained similar to Case 1 of Lemma \ref{Lemma:c11}. In this case, the pattern for constructing a Hamiltonian cycle in $G_1$ is shown in Fig. \ref{fig:s13e7}(b). It is obvious that there is at least one edge for combining Hamiltonian
cycle and path.
\par Subcase 3.2.3.2.2. $s$ is black and $[(s_y\leq l+2$ and $t_y>l+2)$ or $(t_y\leq l+2$ and $s_y>l+2)]$.
\par Subcase 3.2.3.2.2.1. $s_y\leq l+2$ and $t_y>l+2$. This case is similar to Subcase 3.2.3.2.1, where $s,p\in G_2$, $q,t\in G_1$, $p=(d+k,l+2)$, $p$ and $q$ are adjacent, and $q=(d+k,l+3)$. Since $d+k=odd$ and $l+2=odd$, it follows that $p=(d+k,l+2)$ is white and $q$ is black. Clearly, $(G_2,s,p)$ and $(G_1,q,t)$ are color-compatible. We can easily see that $(G_2,s,p)$ is not in conditions (F1) and (F2). Consider $(G_1,q,t)$. Since $n-l=6$ and $l^{'}=l+2$, it follows that $n-l^{'}=4$. So, it suffices to prove $(G_1,q,t)$ is not in condition (F18). Since $t_x<q_x$ and $t$ is white, $(G_1,q,t)$ does not satisfy condition (F18), and hence $(G_1,q,t)$ is acceptable. In this case, $(G_1,q,t)$ is in Case 2. The Hamiltonian path in $(C(m, n, k, l), s, t)$ is
obtained similar to Case 5 of Lemma \ref{Lemma:c11}.
\par Subcase 3.2.3.2.2.2. $s_y>l+2$ and $t_y\leq l+2$. This case is the same as Subcase 3.2.3.2.2.1, where $s,p\in G_1$, $q,t\in G_2$, $p=(d+1,l+3)$, and $q=(d+1,l+2)$. By the same argument as in proof Subcase 3.2.3.2.2.1, we obtain $(G_1,s,p)$ and $(G_2,q,t)$ are acceptable. The Hamiltonian path in $(C(m, n, k, l), s, t)$ is
obtained similar to Case 1 of Lemma \ref{Lemma:c11}.
\par Subcase 3.2.3.2.3. $s_y,t_y=l+3$. This case is similar to Subcase 3.2.3.1, where $x^{'}=s_x$ and $x^{''}=t_x-1$. We can easily see that $(G_1,s,p)$ and $(G_2,q,t)$ are acceptable. Notice that, in this case, $G_1$ and $G_3$ are $L-$shaped grid graphs. So, we construct a Hamiltonian path in $(G_1,s,p)$ and $(G_2,q,t)$ by the algorithm in \cite{991}.
\par Subcase 3.2.4. $(s_x\leq d$ and $d+1\leq t_x\leq d+k)$ or $(d+1\leq s_x\leq d+k$ and $t_x>d+k)$. Let $t_x>d+k$ and $d+1\leq s_x\leq d+k$. By symmetry, the result follows, if $s_x\leq d$ and $d+1\leq t_x\leq d+k$.
\par Subcase 3.2.4.1. $s_y\leq l+2$. This case is similar to Subcase 3.2.3.2.2.1. Notice that, in this case,
the condition (F18) holds, if $c=3$ and $t_y\leq l$. If this case occurs, then $(C(m,n,k,l),s,t)$ satisfies condition (F18), a contradiction. So, $(G_1,q,t)$ is acceptable.
\par Subcase 3.2.4.2. $s_y=l+3$. This case is similar to Subcase 3.2.3.1, where $x^{'}=s_x$ and $x^{''}=d+k$. Here, $G_1$ is a $L-$shaped grid subgraph, hence we construct a Hamiltonian path in $(G_1,s,p)$ by the algorithm in \cite{991}. Note that, in this case, if $s_x=d+k$, then $G_2=\emptyset$ and $G_4=\emptyset$.
\par Subcase 3.3. $s_y\leq l+3$ and $t_y>l+3$ (or $t_y\leq l+3$ and $s_y>l+3)$.
\par Subcase 3.3.1. $t_y>l+4$. This case is similar to Subcase 3.1.3, where $p=(1,n^{'})$ if $s$ is black; otherwise $p=(m,n^{'})$. From Subcase 3.1.3, we know that $G_1$ is even-sized, $G_2$ is even$\times$even, and $n^{'}=odd$. Since $n^{'}=odd$ and $m=even$, we conclude that $p=(1,n^{'})$ is white and $p=(m,n^{'})$ is black. Thus, It is clear that $(G_1,s,p)$ and $(G_2,q,t)$ are color-compatible. $(G_2,q,t)$ is not in conditions (F2), the proof is the same as Subcase 3.1.3. The condition (F1) occurs, when $2\leq q_x=t_x\leq m-1$. Since $q_x=1$ or $m$, thus $(G_2,q,t)$ is not in condition (F1). So, $(G_2,q,t)$ is acceptable. Now, consider $(G_1,s,p)$. The condition (F1) holds, if $d=1$ (resp. $c=1)$, $s_y\leq l+1$, and $s\neq (1,1)$ (resp. $t_y\leq l+1$ and $t\neq (m,1))$. Since $C(m,n,k,l),s,t)$ is acceptable, it follows that $s=(1,1)$ (resp. $t=(m,1))$. Therefore, $(G_1,s,p)$ is not in condition (F1). The condition (F3) occurs, when $s_y>l$ and $(d=1$ or $c=1)$. Clearly, If this case holds, then $(C(m,n,k,l),s,t)$ satisfies condition (F3), a contradiction. Therefore, $(G_1,s,p)$ s not in condition (F3). A simple check shows that $(G_1,s,p)$ is not in condition (F11). The condition (F18) holds, if (i) $d=3$, $s_y\leq l$, and $s$ is black, (ii) $c=3$, $s_y\leq l$, and $s$ is white; if these cases occur, then $(C(m,n,k,l),s,t)$ satisfies condition (F18), a contradiction; or (iii) $s_y>l$, $s$ is black, and $p_x>s_x$. We can easily check that this case can not ocuur. Thus $(G_1,s,p)$ is not in condition (F18). Hence, $(G_1,s,p)$ is acceptable. In this case, $(G_1,s,p)$ is in Case 1 or 2. The Hamiltonian path in $(C(m, n, k, l), s, t)$ is
obtained similar to Case 5 of Lemma \ref{Lemma:c11}. Notice that if $s_y>l+4$ and $t_y<l+4$, the the role of $p$ and $q$ can be swapped.\par
Subcase 3.3.2. $(t_y=l+4$, $s_y\leq l+3$, $s$ is black, and $t$ is white) or $(s_y=l+4$, $t_y\leq l+3$, $s$ is black, and $t$ is white). Notice that, in this case, $d,c>1$. Let $t_y=l+4$. By symmetry, the result follows, if $s_y=l+4.$\par
\par Subcase 3.3.2.1. $t_x=s_x+1$ or $t_x=s_x+2$. This case is similar to Subcase 3.1.3, where $n^{'}=l+3$ and $p=(t_x-1,n^{'})$. We can easily check that $(G_1,s,p)$ and $(G_2,q,t)$ are acceptable. In this case, $(G_1,s,p)$ is in Case 1 of Lemma \ref{Lemma:c11}. The Hamiltonian path in $(C(m, n, k, l), s, t)$ is
obtained similar to Case 5 of Lemma \ref{Lemma:c11}. 
\par
Subcase 3.3.2.2. $t_x>s_x+2$. This case is similar to Subcase 3.2.3.1, where $s,p\in G_1$, $w,z\in G_2$, $x,y\in G_3$, and $q,t\in G_4$. In this case, $x^{'}=d$ if $s_x\leq d$; otherwise $x^{'}=s_x$ and $x^{''}=d+k$. Let $p$ and $w$, $z$ and $x$, and $q$ and $y$ are adjacent such that $p=(x^{'},l+2)$, $x=(x^{''}+1,l+2)$, and $q=(m,l+4)$. $(G_1,s,p)$, $(G_2,w,z)$, and $(G_3,x,y)$ are acceptable, the proof is similar to Subcases 3.2.3.1 and 3.2.3.2.3. A simple check shows that $(G_5, q,t)$ is acceptable. In this case, $G_1$ is a rectangular grid graph if $s_x\leq d$; otherwise $G_1$ is a $L-$shaped grid graph. The Hamiltonian path in $(C(mn,k,l),s,t)$ is obtained similar to Subcase 3.2.3.1 (as shown in Fig. \ref{fig:s13e7}(c) and \ref{fig:s13e7}(d)). Notice that, here, first we connect vertices $p$ and $w$, $z$ and $x$, and $q$ and $y$. In this case, the patterns for constructing a Hamiltonian path in $G_5$ is shown in Fig. \ref{fig:s13e7}(c) and \ref{fig:s13e7}(d).
\par
Subcase 3.3.3. $(t_y=l+4$, $t$ is black and $s$ is white) or $(s_y=l+4$, $s$ is white and $t$ is black). This case is similar to Subcase 3.2.1.\par
\par Case 4. $n-l>6$.\par
Subcase 4.1.
$s_y,t_y>l+5$. This case is similar to Subcase 3.1.2. By the same argument as in proof Subcase 3.1.2, we drive $(G_2,s,t)$ is acceptable. The Hamiltonian path in $(C(mn,k,l),s,t)$ is obtained similar to Subcase 3.1.2.
\par Subcase 4.2. $s_y,t_y\leq l+5$. This case is similar to Subcase 3.1.1, where $n^{'}=l+5$ and $s,t\in G_1$. Since $m$ is even, it follows that $G_2$ is even-sized. Moreover, since $C(m,n,k,l)$ is even-sized, we conclude that $G_1$ is even-sized. By Lemma \ref{Lemma:c0}, $(G_1,s,t)$ is color-compatible. Because of $l=odd$ and $n^{'}=l+5$, we have $n^{'}=even\geq 6$ and $n^{'}-l=odd\geq 5$. Furthermore, since $n^{'}-l=odd\geq 5$ and $d,c\geq 1$, it suffices to prove $(G_1,s,t)$ is not in conditions (F1), (F3), (F11), and (F17). $(G_1,s,t)$ is not in conditions (F1) and (F3), the proof is the same as Case 1. Clearly, since $n^{'}-l\geq 5$, a simple check shows that $(G_1,s,t)$ is not in condition (F11). It is obvious that if $(G_1,s,t)$ satisfies condition (F17), then $(C(m,n,k,l),s,t)$ satisfies condition (F18), a contradiction. Therefore, $(G_1,s,t)$ is not in condition (F17). Hence, $(G_1,s,t)$ is acceptable.
In this case, $(G_1,s,t)$ is in Case 1 or 2, or Subcase 5.1 of Lemma \ref{Lemma:c11}. The Hamiltonian path in $(C(m, n, k, l), s, t)$ is
obtained similar to Subcase 3.2.1.
\par Subcase 4.3. $s_y\leq l+5$, and $t_y>l+5$. This case is similar to Subcase 4.2, where $s,p\in G_1$, $q,t\in G_2$, $p$ and $q$ are adjacent, and $p=(1,n^{'})$ if $s$ is white; otherwise $p=(m,n^{'})$. Since $n^{'}=even$ and $m=even$, thus $p=(1,n^{'})$ is black and $p=(m,n^{'})$ is white. Hence, $(G_1,s,p)$ and $(G_2,q,t)$ are color-compatible. In this case, $G_1$ is even-sized and $G_2$ is even$\times$ odd. $(G_1,s,p)$ is not in conditions (F1), (F3), and (F11), the proof is the same as Subcase 3.3.1. $(G_1,s,p)$ is not in condition (F17) , the proof is like to Subcase 4.2. Hence $(G_1,s,p)$ is acceptable. Now, consider $(G_2,q,t)$. Since $n-l>6$ and $n^{'}=l+5$, it follows $n-n^{'}\geq 3$. Moreover, since $m\geq 4$, $n-n^{'}\geq 3$, it is sufficient to show that $(G_2,q,t)$ is not in condition (F1). The condition (F2) holds, if $n-n^{'}=3$ and $[(t$ is white and $q_x<t_x-1)$ or $(t$ is black and $q_x>t_x+1)]$. Since  $q_x=1$, where $t$ is black, or $q_x=m$, where $t$ is white, it is clear that $(G_2,q,t)$ is not in condition (F2). Therefore $(G_2,q,t)$ is acceptable. In this case, $(G_1,s,p)$ is in Case 1 or 2, or Subcase 5.1 of Lemma \ref{Lemma:c11}. The Hamiltonian path in $(C(m, n, k, l), s, t)$ is
obtained similar to Case 5 of Lemma \ref{Lemma:c11}. Notice that if $s_y>l+5$ and $t_y<l+5$, then the role of $p$ and $q$ can be swapped (i.e., $s,p\in G_2$ and $q,t\in G_1)$.
This finishes the proof.
\end{proof}
\begin{lem} \label{Lemma:c13}
Suppose that $(C(m,n,k,l),s,t)$ is an acceptable Hamiltonian path problem. Assume $C(m,n,k,l)$ is odd-sized. Then there is an acceptable separation for $(C(m,n,k,l),s,t)$ and it has a Hamiltonian path.
\end{lem}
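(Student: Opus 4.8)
The plan is to follow the same decompose-and-splice strategy used for Lemma~\ref{Lemma:c11}, adapted to the odd parity. Since $C(m,n,k,l)$ is odd-sized, color-compatibility forces $s$ and $t$ to carry the majority color, which is the standing hypothesis. For each configuration I would exhibit a separation $\{G_1,G_2\}$ (occasionally with auxiliary pieces $G_3,G_4,G_5$, using the $L$- and $C$-shaped separations of Definition~\ref{Definition:d3}) in which $s,t\in G_1$, the detached piece $G_2$ (and any auxiliary pieces) is even-sized, and $G_1$ is an odd-sized rectangular, $L$-shaped, or smaller $C$-shaped subgraph with $(G_1,s,t)$ acceptable. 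A Hamiltonian cycle in each even piece exists by Lemma~\ref{Lemma:1m} or Lemma~\ref{Lemma:6t1}; a Hamiltonian $(s,t)$-path in $G_1$ exists by Theorem~\ref{Theorem:1a} (rectangular), Theorem~\ref{Theorem:6t} ($L$-shaped), or by reduction to one of the already-settled cases of Lemmas~\ref{Lemma:c11} and~\ref{Lemma:c15}. This separation is the acceptable separation demanded by the statement, and splicing the path and cycle along a pair of parallel boundary edges then yields the desired Hamiltonian path in $C(m,n,k,l)$.

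The case division mirrors that of Lemma~\ref{Lemma:c11}, organized primarily by the value of $n-l$ and by the position of $s_x,t_x$ relative to $d$ and $d+k$, but now with the parities arranged so that the detached piece comes out even-sized and the piece retaining $s$ and $t$ comes out odd-sized. Concretely, when $n-l>1$ I would take a vertical separation with $G_1=L(m',n,k,l)$, $m'=d+k$, and $G_2=R(m-m',n)$ (or its mirror) whenever one of $c,n$ supplies the even factor needed to make $G_2$ even-sized; when $k=1$ and $n-l$ is even I would instead peel off an even-sized rectangle by an $L$-shaped separation; and when $s$ and $t$ straddle the notch ($s_x\le d$, $t_x>d+k$) I would split into two subgraphs joined at a chosen pair of adjacent vertices $p,q$, exactly as in Case~5 of Lemma~\ref{Lemma:c11}. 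In the remaining small cases (for instance $n-l=2$ with $n$ odd, and the odd$\times$odd and even$\times$odd subcases) I would use the $C$-shaped separations of types~II--V to carve out even-sized blocks and reduce $G_1$ to an already-treated shape.

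The heart of the argument, and the main obstacle, is verifying that the odd-sized piece $(G_1,s,t)$ is \emph{acceptable}. Color-compatibility of $(G_1,s,t)$ follows from Lemma~\ref{Lemma:c0} once the detached piece is even-sized, so it remains to rule out the forbidden conditions for $G_1$. For each such condition I would show that a violation in $G_1$ lifts to a violation in $C(m,n,k,l)$: a cut-vertex or degree-one obstruction, (F1) or (F3) in $G_1$, is already one in $C$; an odd-sized $L$-shaped obstruction (F5) in $G_1$ forces the odd-sized branch of (F17) in $C$; and the remaining odd-sized color-count and two- or three-connector obstructions translate into (F13a), (F14a), (F15), or the odd-sized branch of (F17). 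Because these odd-sized-specific conditions were tailored precisely to forbid the corresponding $G_1$-patterns, the matching is exact, but it must be carried out separately in every subcase; this exhaustive lifting, rather than any single clever construction, is where the real work lies.

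Finally, for the splicing step I would use the remark following Lemma~\ref{Lemma:1m}: an even-sized $G_2$ admits a Hamiltonian cycle containing every boundary edge on three of its sides, so the Hamiltonian path in $G_1$ contains a boundary edge of $G_1$ facing $G_2$ that has a parallel edge in the cycle, and merging this parallel pair produces a single Hamiltonian $(s,t)$-path of $C(m,n,k,l)$; the same device handles the auxiliary even pieces one at a time. Constructing the paths inside $G_1$ and the cycles inside the even pieces by the algorithms of \cite{991} and \cite{CST:AFAFCHPIM}, and merging them as above, then completes the proof.
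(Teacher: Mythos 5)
Your proposal is correct and follows essentially the same approach as the paper's proof: a parity-driven separation into an odd-sized piece $G_1$ containing $s,t$ and even-sized detached pieces, color-compatibility via Lemma \ref{Lemma:c0}, acceptability of $(G_1,s,t)$ established by lifting any violation of (F1), (F3), (F5), etc.\ to a violation of (F1), (F13), (F14), or (F17) in $C(m,n,k,l)$, and splicing the path in $G_1$ with Hamiltonian cycles in the even pieces through parallel boundary edges. The paper's proof is exactly this plan carried out exhaustively over the subcases (organized by the parities of $m$, $n$, $l$ and the positions of $s_x,t_x$), including the same vertical, $L$-shaped, and $C$-shaped (types II--V) separations and the same reductions to cases of Lemmas \ref{Lemma:c11} and \ref{Lemma:c15} that you indicate.
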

 \begin{proof}
 Let $m\times n= $odd$\times$odd, then $k\times l$ is even$\times$even, even$\times$odd, or odd$\times$even and two vertices $s$ and $t$ are white. Let $m=even$, then $k\times l=$odd$\times$odd, $d=even$ and $c=odd$ (or $d=odd$ and $c=even$), and two vertices $s$ and $t$ are black if $d=even$; otherwise $s$ and $t$ are white. Now, let $m\times n= $odd$\times$even, then $k\times l$ is odd$\times$odd, $d$ and $c$ are $even$ (or $odd$), and  two vertices $s$ and $t$ are black if $d$ and $c$ are even; otherwise $s$ and $t$ are white. Notice that, here for $n=odd$, $l=odd$, and $[(d=even$ and $c=odd)$ or $(d=odd$ and $c=even)]$, we only consider the case $d=odd$, and $c=even$. By symmetry, the result follows, if $d=even$ and $c=odd$. Consider the following cases. We will show that there is an acceptable separation for $(C(m,n,k,l),s,t)$ and it has a Hamiltonian path.
 \par Case 1. $(n=odd$ and $m=odd)$ or $(n=even)$.
\par Subcase 1.1. $n=odd$, $l=even$, $n-l>1$, and $[(s_x,t_x\leq d+k$ and $c>1)$ or $(d>1$ and $[(s_x,t_x>d+k)$ or $(d+1\leq s_x\leq d+k$ and $t_x>d+k)])]$. Let $s_x,t_x\leq d+k$.
By symmetry, the result follows, if $(s_x,t_x>d+k)$ or $(d+1\leq s_x\leq d+k$ and $t_x>d+k)$.
 This case is similar to Case 2 of Lemma \ref{Lemma:c11}. Since $l$ is even, thus $G_2$ is even-sized and $n-l=odd$. Moreover, since $C(m,n,k,l)$ is odd-sized, we conclude that $G_1$ is odd-sized. Hence by Lemma \ref{Lemma:c0}, $(G_1,s,t)$ is color-compatible. Since $G_1$ is odd-sized and $n-l\geq 3$, it suffices to prove that $(G_1,s,t)$ is not in conditions (F1), (F3), and (F5). The condition (F1) holds, if (i) $d=1$, $s_y$ (or $t_y)\leq l+1$ and $s$ (or $t)\neq (1,1)$; (ii) $d=2$, $s_y,t_y\leq l+1$, and $|t_y-s_y|=1$; (iii) $n-l=2$, $s_x,t_x\geq d$, $t_x-s_x=1$. Clearly, if these cases hold, then $(C(m,n,k,l),s,t)$ satisfies condition (F1), a contradiction. Therefore, $(G_1,s,t)$ is not in condition (F1). $(G_1,s,t)$ is not in condition (F3), the proof is similar to Case 1 of Lemma \ref{Lemma:c11}. The condition (F5) holds, if $m-k^{'}=2$ and $[(s_y,t_y\leq l)$ or $(s_y\ ($or$\ t_y)\leq l$ and $t\ ($or$\ s)=(1,l+1))]$. If this case holds, then $(C(m,n,k,l),s,t)$ satisfies condition (F14) or (F17), a contradiction. Therefore, it follows that $(G_1,s,t)$ does not satisfy condition (F5), and hence it is acceptable. The Hamiltonian path in $(C(m, n, k, l), s, t)$ is
obtained similar to Case 1 of Lemma \ref{Lemma:c11}.
 \par   Subcase 1.2. $n-l>1$ and $[(n=odd$, $l=odd$, $d=odd$, $c=even$, $s_x,t_x\leq d+k$, and $[(n-l=2)$ or $(n-l\geq 4$ and $s\neq (d+k-1,l+1)$ or $t\neq (d+k,l+2))])$ or ($n=even$ and $[(c>1$ and $s_x,t_x\leq d+k)$ or $(d>1$ and $[(s_x,t_x>d+k)$ or $(d+1\leq s_x\leq d+k$ and $t_x>d+k)])])]$. Notice that, in this case, $n-l=odd$ if $n=even$; otherwise $n-l=even$. Let $s_x,t_x\leq d+k$.
By symmetry, the result follows, if $(s_x,t_x>d+k)$ or $(d+1\leq s_x\leq d+k$ and $t_x>d+k)$. This case is similar to Case 1 of Lemma \ref{Lemma:c11}. Since $n=even$ or $c=even$ implies $G_2$ is even-sized. Moreover, since $C(m,n,k,l)$ is odd-sized, we conclude that $G_1$ is odd-sized. By Lemma \ref{Lemma:c0}, $(G_1,s,t)$ is color-compatible. In the following, we show that $(G_1,s,t)$ is not in conditions (F1), (F3), and (F5). $ (G_1,s,t)$
 is not in condition (F3), the proof is the same as Case 1 of Lemma \ref{Lemma:c11}. The condition (F1) holds, if $n-l\geq 4$, $s=(d+k-1,l+1)$, and $t=(d+k,l+2)$. This is impossible, because we assume that $s\neq (d+k-1,l+1)$ or $t\neq (d+k,l+2)$. $(G_1,s,t)$ is not in condition (F5), the proof is the same as Subcase 1.1. So, $(G_1,s,t)$ is acceptable. The Hamiltonian path in $(C(m, n, k, l), s, t)$ is
obtained similar to Case 1 of Lemma \ref{Lemma:c11}.  Now, let $n=odd$, $n-l\geq 4$, $s=(d+k-1,l+1)$, and $t=(d+k,l+2)$. This case is similar to Subcase 4.1.2 of Lemma \ref{Lemma:c11}, where $m^{'}=s_x$ and $p=(m^{'},n-1)$; see Fig. \ref{fig:s13e8}(a). Clearly, $(G_1,s,p)$ and $(G_2,q,t)$ are acceptable. The Hamiltonian path in $(C(m, n, k, l), s, t)$ is
obtained similar to Subcase 4.1.2 of Lemma \ref{Lemma:c11}.
\begin{figure}[tb]
  \centering
  \includegraphics[scale=.96]{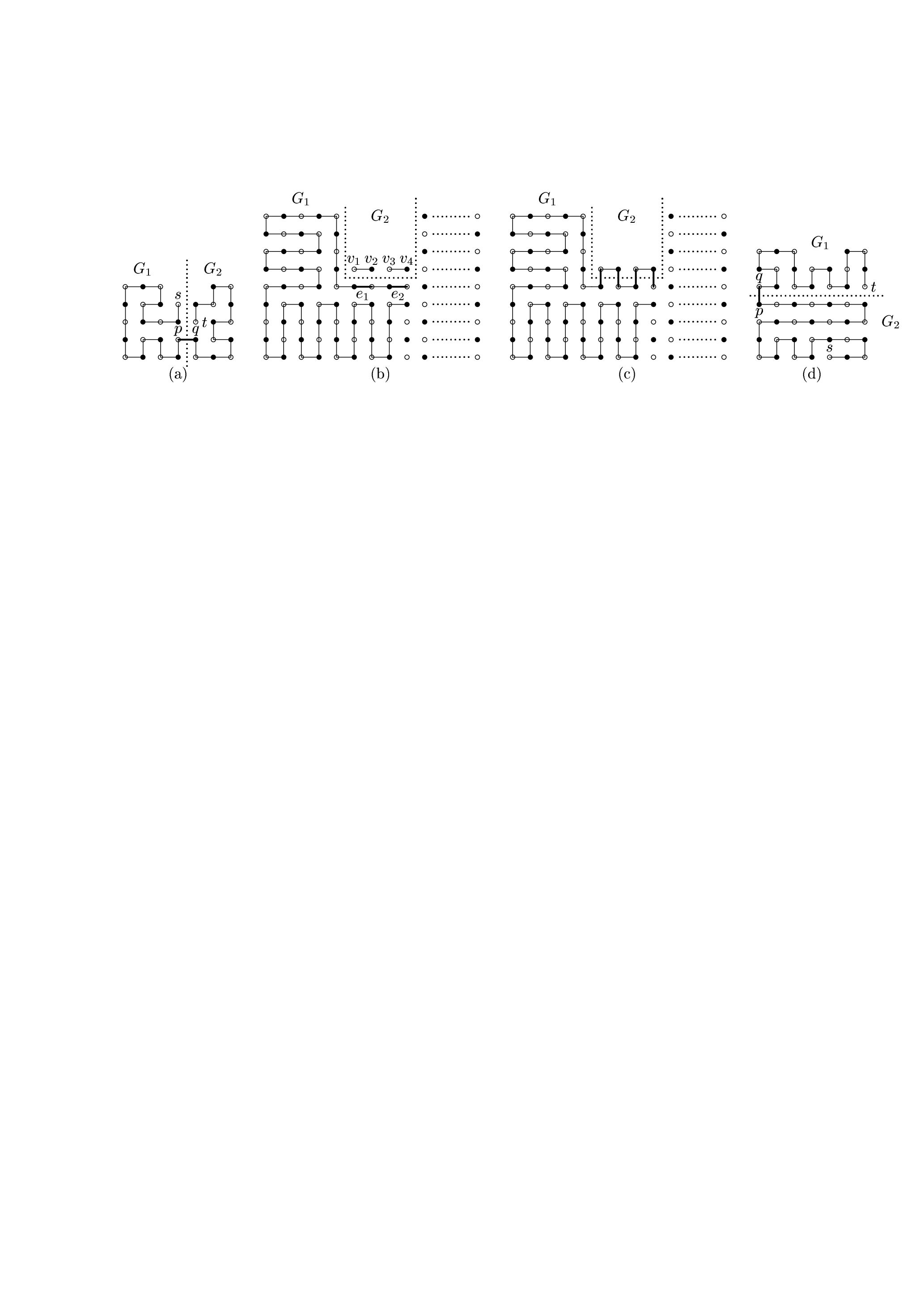}
  \caption[]%
 {\small A Hamiltonian $(s,t)-$path in $C(m,n,k,l)$.}
\label{fig:s13e8}
\end{figure}
\par Subcase 1.3. $n=odd$, $l=odd$, $n-l>1$, $d=odd>1$, $c=even$, and $s_x,t_x>d+k$. Notice that, in this case, $n-l=even\geq 4$. This case is the same as Case 1 of Lemma \ref{Lemma:c15}. By the same argument as in proof Case 1 of Lemma \ref{Lemma:c15}, $G_2$ is even-sized and $(G_1,s,t)$ is color-compatible. Since $l^{'}=l+1$, $l=odd$, and $n-l=even\geq 4$, we have $n-l^{'}=odd\geq 3$. Moreover, since $d>1$, $n-l=odd\geq 3$, and $c=even$, it is enough to show that $(G_1,s,t)$ is not in conditions (F1) and (F14). The condition (F1) or (F14) holds, if $c=2$ and $s_y,t_y\leq l+2$. Clearly, if this case occurs then $(C(m,n,k,l),s,t)$ satisfies condition (F1) or (F14), a contradiction. Thus, $(G_1,s,t)$ is not in condition (F1) or (F14). Hence, $(G_1,s,t)$ is acceptable. In this case, $(G_1,s,t)$ is in Subcase 1.1. The Hamiltonian path in $(C(m, n, k, l), s, t)$ is
obtained similar to Case 1 of Lemma \ref{Lemma:c15}. Notice that, in this case, we can always construct a Hamiltonian path $P$ in $G_1$ that contains a subpath $P_1$, as shown in Fig. \ref{fig:s13e8}(b). Let four vertices $v_1,v_2,v_3$ and $v_4$ be in $G_2$. Consider Fig. \ref{fig:s13e8}(b). Clearly, there exist two edges $e_1$ and $e_2$ such that $e_1,e_2\in P_1$ are on boundary of $G_1$ facing $G_2$. By merging $(v_1,v_2)$ and $(v_3,v_4)$ to these edges, we obtain a Hamiltonian path for $(C(m,n,k,l),s,t)$, as illustrated in Fig \ref{fig:s13e5a}(c). When $k=2$ or $k>4$, a similar to the case $k=4$, the result follows. 
\par Subcase 1.4.
$(n=even,$ $s_x\leq d$, and $t_x>d+k)$ or $(n=odd$ and $[(l=even$, $s_x\leq d$, and $t_x>d+k)$ or $(l=odd$, $s_x\leq d+k$, and $t_x>d+k$)]).
This case is similar to Case 5 of Lemma \ref{Lemma:c11}, where
$$p=
  \begin{cases}
  (d+k,n);   &  if\ (n=even)\ or\ (n=odd\ and\ [(l=even)\ or\\
    &(l=odd\ and\ [(n-l=2)\ or\ (n-l>2\ and\ s\neq (d+k,n))])]) \\
   (d+k,l+2);   &  if\ n=odd,\ l=odd,\ n-l>2\,\ s=(d+k,n),\ and\ [(c>2)\ or \ (c=2\ and\  t\neq (m,l+2)]\\
   \end{cases}$$ Consider the following Subcases.
\par Subcase 1.4.1. $G_1$ is odd-sized and $G _2$ is even-sized.  We can easily check that $(G_1,s,p)$ and $(G_2,q,t)$ are color-compatible. Consider $(G_2,q,t)$. $(G_2,q,t)$ is not in condition (F1), the proof is the same as Subcase 5.1 of Lemma \ref{Lemma:c11}. The condition (F2) holds, if (i) $n=3$ and $t$ is black; this is impossible because $t$ and $s$ are white, or (ii) $c=3$, $q_y<t_y-1$, this case does not occur because of $q=(d+k+1,n)$. Thus $(G_2,q,t)$ is not in condition (F2). Hence, $(G_2,q,t)$ is acceptable. Now, consider $(G_1,s,p)$. The condition (F1) holds, if (i) $d=1$, $s_y\leq l+1$, and $s\neq (1,1)$, clearly this is impossible; (ii) $n-l=1$ and $s_x\geq d$; (iii) $n-l=2$ and $s=(d+k-1,n-1)$; if these cases occur, then $(C(m,n,k,l),s,t)$ satisfies condition (F1) or (F13), a contradiction; or (iv) $n-l>2$, $s=(d+k-1,l+1)$, and $p=(d+k,l+2)$, this is impossible because of in this case $p=(d+k,n)$. Hence, $(G_1,s,p)$ is not in condition (F1). The condition (F3) holds, if $n-l=1$ and $s_x,p_x\leq d$. By the assumption, this is impossible. The condition (F5) occurs, when $(d=2$ and $s_y,p_y\leq l+1)$ or $(n-l=2, s_y,p_y\geq n-1$, and $s_x,p_x>d)$. It is obvious that if this case holds, then $(C(m,n,k,l),s,t)$ satisfies condition (F17), a contradiction. Therefore, $(G_1,s,p)$ is not in condition (F5). Hence, $(G_1,s,p)$ is acceptable. The Hamiltonian path in $(C(m, n, k, l), s, t)$ is
obtained similar to Case 5 of Lemma \ref{Lemma:c11}.\par Now, let $n=odd$, $n-l>2$, $l=odd$, $c=2$, $s=(d+k,n)$, and $t=(m,l+2)$. This case is similar to Subcase 3.1.3 of Lemma \ref{Lemma:c15}, where $n^{'}=l+2$, $s,p\in G_2$, $q,t\in G_1$, and $p=(1,n^{'}+1)$. Consider Fig. \ref{fig:s13e8}(d). Clearly $(G_1,q,t)$ and $(G_2,s,p)$ are acceptable. In this case, $(G_1,q,t)$ is in Subcase 1.4.1. The Hamiltonian path in $(C(m, n, k, l), s, t)$ is
obtained similar to 3.1.3 of Lemma \ref{Lemma:c15}.
 \par Subcase 1.4.2. $G_1$ is even-sized and $G_2$ is odd$\times$odd. A simple check shows that $(G_1,s,p)$ and $(G_2,q,t)$ are color-compatible. $(G_2,q,t)$ is not in conditions (F1) and (F2), the proof is the same as Subcase 5.2 of Lemma \ref{Lemma:c11}. Now, consider $(G_1,s,p)$. $(G_1,s,p)$ is not in conditions (F1) and (F3), the proof is similar to Subcase 1.4.1. The condition (F4) holds, if $k\times l=1$. Since $k\times l>1$, thus $(G_1,s,p)$ is not in condition (F4). The conditions (F6), (F8), and (F9) hold, if $t$ is black. This is impossible, because $s$ and $t$ are white. The condition (F7) holds, if $p_x=d$. Since $p_x=d+k$, thus $(G_1,s,p)$ does not satisfy condition (F7). Therefore, $(G_1,s,p)$ and $(G_2,q,t)$ are acceptable. The Hamiltonian path in $(C(m, n, k, l), s, t)$ is
obtained similar to Case 5 of Lemma \ref{Lemma:c11}.
\par Case 2. $n=odd$ and $m=even$. In this case, $n-l=even$.
\par Subcase 2.1. $[(s_x,t_x\leq d+k)$, $(s_x,t_x>d+k$, $s_y,t_y>l$, and $d>1)$, or $(s_x\leq d+k$, $t_x>d+k$, and $t_y>l)]$ and $[(c>2$ and $l>1)$ or $(c=2$ and $t\neq (m,l+1))].$ This case is the same as Case 2 of Lemma \ref{Lemma:c11}. Since $c=even$, thus $G_2$ is even-sized. Moreover, since $C(m,n,k,l)$ is odd-sized, then $G_1$ is odd-sized. By Lemma \ref{Lemma:c0}, $(G_1,s,t)$ is color-compatible. Now, we show that $(G_1,s,t)$ is not in conditions (F1), (F3), and (F5). $(G_1,s,t)$ is not in condition (F1), the proof is the same as Subcase 1.1. $(G_1,s,t)$ is not in condition (F3), the proof is the same as Case 1 of Lemma \ref{Lemma:c11}. The condition (F5) holds, if $n-l=2$, $s_y,t_y\geq n-1$, and $s_x,t_x>d$. If this case holds, then $(C(m,n,k,l),s,t)$ satisfies condition (F13) or (F17), a contradiction. Therefore, $(G_1,s,t)$ is not in condition (F5). Hence, $(G_1,s,t)$ is acceptable. The Hamiltonian path in $(C(m, n, k, l), s, t)$ is
obtained similar to Case 2 of Lemma \ref{Lemma:c11}.
\par Subcase 2.2. $c=2$ and $t$ (or $s)=(m,l+1)$. Let $t=(m,l+1)$. Notice that, here, $n-l>2$. Consider the following subcases.
\begin{figure}[tb]
  \centering
  \includegraphics[scale=1]{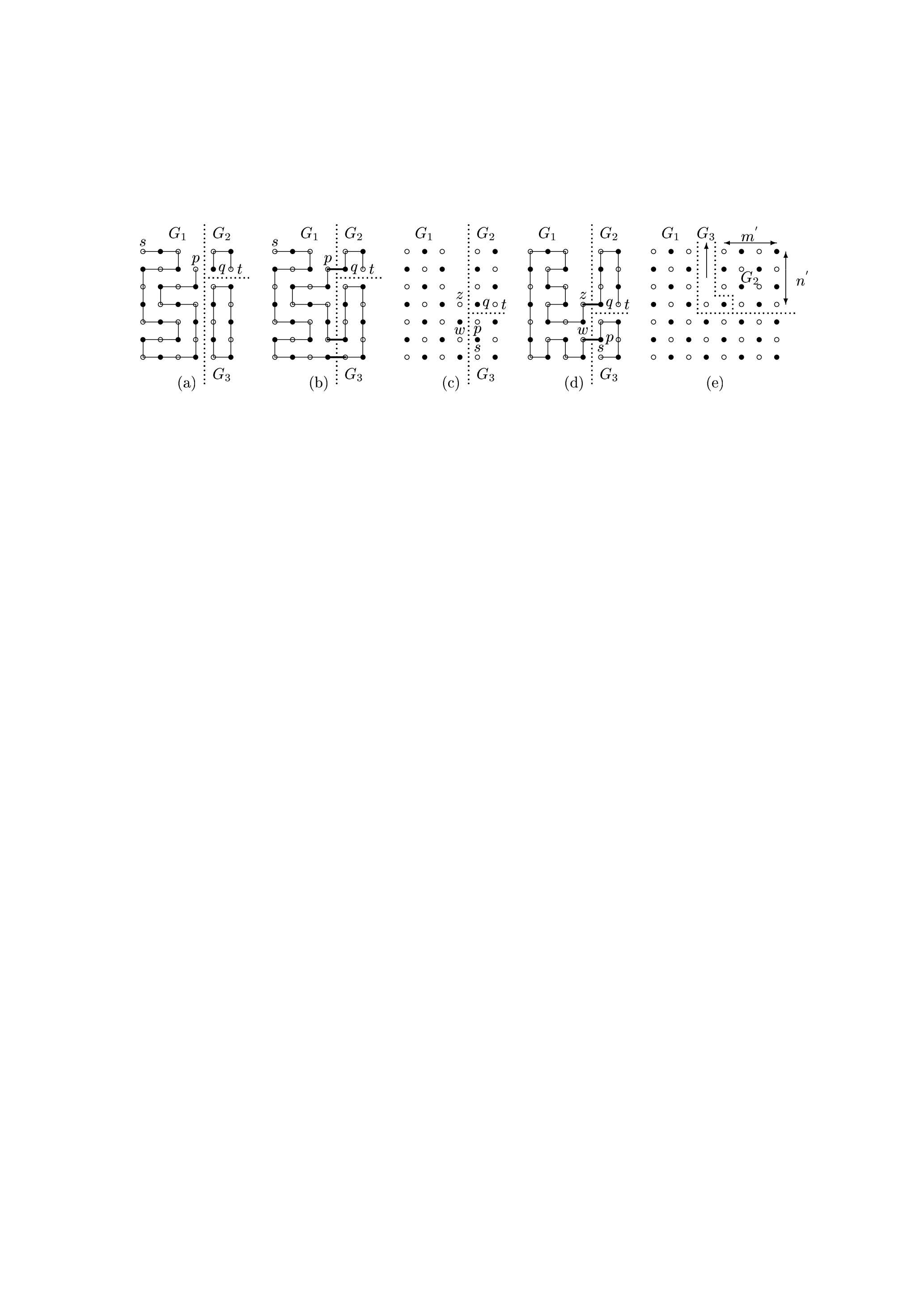}
  \caption[]%
 {\small (a) A $L-$shaped separation of $C(m,n,k,l)$, (b) a Hamiltonian $(s,t)-$path in $C(m,n,k,l)$, (c) a $L-$shaped separation of $C(m,n,k,l)$, (d) (b) a Hamiltonian $(s,t)-$path in $C(m,n,k,l)$, and (e) a $C-$shaped separation (type V) of $C(m,n,k,l)$ .}
\label{fig:s13e9}
\end{figure}
\par Subcase 2.2.1. $s_x\leq d$ and $s\neq (d+k,l+1)$. Let $\{G_1,G_2,G_3\}$ be a $L-$shaped separation (type II) of $C(m,n,k,l)$ such that $G_1=L(m^{'},n,k,l)$, $G_2=R(m-m^{'},n^{'})$, $G_3=R(m-m^{'},n-n^{'})$, $m^{'}=d+k$, and $n^{'}=l+1$. Let $s,p\in G_1$, $q,t\in G_2$, $q$ and $p$ are adjacent, and $p=(d+k,l+1)$. Consider Fig. \ref{fig:s13e9}(a). A simple check shows that $(G_1,s,p)$ and $(G_2,q,t)$ are acceptable. The Hamiltonian path in $(C(m, n, k, l), s, t)$ is
obtained similar to Subcase 3.2.3.1 of Lemma \ref{Lemma:c15}, as shown in Fig. \ref{fig:s13e9}(b). Notice that, here, $G_1$ is a $L-$shaped grid graph, thus we construct a Hamiltonian path in $(G_1,s,p)$ by the algorithm in \cite{991}. 
Obviously, since $n-n^{'}\geq 3$ there is at least one edge for combining Hamiltonian
cycle and path.
\par Subcase 2.2.2. $s=(d+k,l+1)$. This case is similar to Case 5 of Lemma \ref{Lemma:c11}, where $p=(d+k,n-1)$. A simple check shows that $(G_1,s,p)$ and $(G_2,q,t)$ are acceptable. The Hamiltonian path in $(C(m, n, k, l), s, t)$ is
obtained similar to Case 5 of Lemma \ref{Lemma:c11}.
\par Subcase 2.2.3. $s_x,t_x>d+k$. This case is similar to Subcase 2.2.1, where $s,p\in G_3$, $q,t\in G_2$, $w,z\in G_1$, $q=(m-1,l+1)$, and $p=(m-1,n-1)$ if $s\neq (m,n-1)$; otherwise $(m-1,n-3)$. Assume that $z$ and $q$, and $w$ and $p$ are adjacent. Consider Fig. \ref{fig:s13e9}(c). It is clear that $(G_1,z,w)$ and $(G_2,q,t)$ are acceptable. Consider $(G_3,s,p)$. Obviously, $(G_3,s,p)$ is color-compatible. The condition (F1) holds, if $p_y=s_y=n-1$. Since $s\neq (m,n-1)$, thus $(G_3,s,p)$ is not in condition (F1). The condition (F2) occurs, when $n-n^{'}=3$ and $s_y=p_y=n-1$. If this case occurs, then, $(C(m,n,k,l),s,t)$ satisfies condition (F15), a contradiction. Therefore, $(G_3,s,p)$ is not in condition (F2). Hence, $(G_1,s,p)$ is acceptable. The Hamiltonian path in $(C(m, n, k, l), s, t)$ is
obtained similar to Subcase 3.2.3.1 of Lemma \ref{Lemma:c15}, as shown in Fig. \ref{fig:s13e9}(d). Notice that, here, $G_1$ is a $L-$shaped grid graph, thus we construct a Hamiltonian path in $(G_1,s,p)$ by the algorithm in \cite{991}.
\par Subcase 2.3. $s_x,t_x>d+k$, $s_y,t_y\leq l+1$, and $l>1$. Notice that, in this case, $c\geq 4$. Let $\{G_1,G_2, G_3\}$ be a $L-$shaped separation (type V) of $C(m,n,k,l)$ such that $V(G_2)=\{d+k+1\leq x\leq m,\ 1\leq y\leq l$ and $d+k+2\leq x\leq m, y=l+1\}$, $V(G_3)=\{d+1\leq x\leq d+k+1,y=l+1\}$, $G_1=L(m,n,k^{'},l^{'})$, $k^{'}=m-d$, $l^{'}=l+1$, and $s,t\in G_2$. Consider Fig. \ref{fig:s13e9}(e). Clearly, $G_2$ is odd-sized and $G_1$ and $G_3$ are even-sized. Here, $G_2$ is a $L-$shaped grid subgraph $L(m^{'},n^{'},k^{'},l^{'})$, where $m^{'}=m-d-k$, $n^{'}=l+1$, $k^{'}=1$, and $l^{'}=1$. By Lemma \ref{Lemma:c0}, $(G_2,s,t)$
 is color-compatible. Since $l>1$ and $n^{'}=l+1$, we have that $n^{'}\geq 4$. Also, since $m^{'}-k^{'}\geq 3$ and $n^{'}-l^{'}\geq 3$, it is clear that $(G_1,s,t)$ is not in conditions (F1), (F3), and (F5), and hence $(G_2,s,t)$ is acceptable. The Hamiltonian path in $(C(m, n, k, l), s, t)$ is
obtained similar to Subcase 3.1.4 of Lemma \ref{Lemma:c15}; see Fig. \ref{fig:s13e10}(a). Note that, here, since $G_1$ is an even-sized $L-$shaped grid subgraph, and by Lemma \ref{Lemma:6t1} has a Hamiltonian cycle. The pattern for constructing a Hamiltonian cycle in $G_1$ is shown in Fig. \ref{fig:s13e10}(a). Moreover since $m^{'}-k^{'}\geq 3$, there is at least one edge for combining Hamiltonian cycle and path.
\par Subcase 2.4. $t_y\leq l$, $t_x>d+k$, and $[(s_x\leq d+k)$ or $(s_x>d+k$ and $s_y>l+1)]$. This case is similar to Subcase 2.1, where $s,p\in G_1$, $q,t\in G_2$, $q$ and $p$ are adjacent, and $p=(m,l+1)$.  From Subcase 2.1, we know that $G_1$ is even-sized and $G_2$ is odd-sized. Since $m=even$ and $l+1=even$, it follows that $p$ is white. Therefore, $(G_1,s,p)$ and $(G_2,q,t)$ are color-compatible. Consider $(G_2,q,t)$. The condition (F1) holds, if $c=2$ and $t_y=q_y<l$. This is impossible, because of $q_y=l$. Thus, $(G_2,q,t)$ is not in condition (F1). The condition (F2) occurs, when $l=3$ and $q_x<t_x-1$. Since $q_x=m$, thus $(G_2,q,t)$ is not in condition (F2). So, $(G_2,q,t)$ is acceptable. Consider $(G_1,s,p)$. The condition (F1) holds, if (i) $d=1$ and $2\leq s_y\leq l+1$; (ii) $n-l=2$ and $s=(m-1, n)$, clearly if these condition hold, then $(C(m,n,k,l),s,t)$ satisfies condition (F1), a contradiction. Therefore, $(G_1,s,p)$ is not in condition (F1). A simple check shows that $(G_1,s,p)$ is not in condition (F3). $(G_1,s,p)$ is not in condition (F5), the proof is the same as Subcase 2.1. Hence, $(G_1,s,p)$ is acceptable.
 The Hamiltonian path in $(C(m, n, k, l), s, t)$ is
obtained similar to Case 5 of Lemma \ref{Lemma:c11}. Here if $s_y\leq l$, $s_x,t_x>d+k$, and $t_y>l+1$, then the role of $p$ and $q$ can be swapped.
\begin{figure}[tb]
  \centering
  \includegraphics[scale=1]{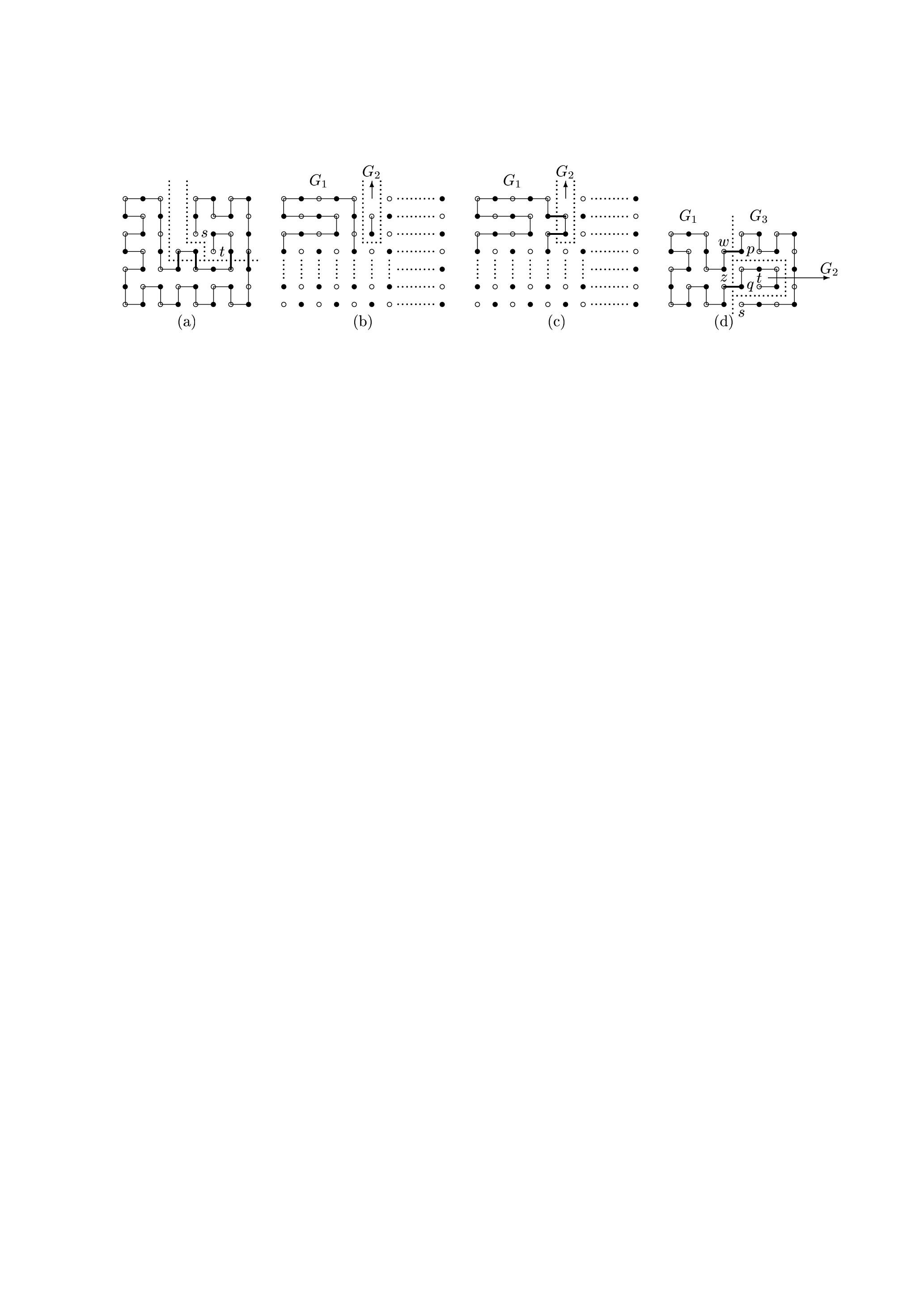}
  \caption[]%
 {\small A Hamiltonian $(s,t)-$path in $C(m,n,k,l)$.}
\label{fig:s13e10}
\end{figure}
\par Subcase 2.5. $l=1$ and $c=even\geq 4$.
\par Subcase 2.5.1. $s_x,t_x\leq d+k$ and $[(n-l=2)$ or $(n-l>2$ and $s\neq (d+k-1,n)$ or $t\neq (d+k,n-1))]$. This case is similar to Subcase 1.2. by the same argument as in proof Subcase 1.2, we obtain $(G_1,s,t)$ is color-compatible and $G_2$ is even-sized. $(G_1,s,t)$ is not in condition (F3)
 and (F5), the proof is similar to Subcase 1.2. The condition (F1) holds, if $n-l\geq 4$, $s=(d+k-1,n)$, and $t=(d+k,n-1)$. This is impossible, because we assume that $s\neq (d+k-1,n)$ or $t\neq (d+k,n-1)$. Therefore, $(G_1,s,t)$ is acceptable. The Hamiltonian path in $(C(m, n, k, l), s, t)$ is
obtained similar to Case 1 of Lemma \ref{Lemma:c11}.
Now, let $n-l\geq 4$, $s=(d+k-1,n)$, and $t=(d+k,n-1)$. This case is similar to Subcase 4.1.2 of Lemma \ref{Lemma:c11}, where $m^{'}=s_x$ and $p=(m^{'},n-2)$. It is easy to see that $(G_1,s,p)$ and $(G_2,q,t)$ are acceptable. The Hamiltonian path in $(C(m, n, k, l), s, t)$ is
obtained similar to Subcase 4.1.2 of Lemma \ref{Lemma:c11}.
\par Subcase 2.5.2. $s_x,t_x>d+k$ and $d>1$. Note that, in this case, $n-l>2$.
\par Subcase 2.5.2.1. $n-l>4$. This case is the same as Case 1 of Lemma \ref{Lemma:c15}, where $l^{'}=l+2$. Since $n^{'}=l^{'}-l=even$, it follows that $G_2$ is even-sized. Moreover, since $C(m,n,k,l)$ is odd-sized, we conclude that $G_1$ is odd-sized. By Lemma \ref{Lemma:c0}, $(G_1,s,t)$ is color-compatible. Since $s_x,t_x>d+k$, $c\geq 4$, and $n-l\geq 4$, it is clear that $(G_1,s,t)$ is not in conditions (F1), (F3), (F10)-(F15), and (F17). Therefore, $(G_1,s,t)$ is acceptable. In this case, $(G_1,s,t)$ is in Subcase 2.1 or 2.3. Now, we show that $(C(m,n,k,l),s,t)$ has a Hamiltonian path. Let $k>1$, then the Hamiltonian path in $(C(m, n, k, l), s, t)$ is
obtained similar to Case 1 of Lemma \ref{Lemma:c11}. Note that since $k=odd\geq 3$, there is at least one edge for combining Hamiltonian cycle and path. Now, Let $k=1$, then the Hamiltonian path in $(C(m, n, k, l), s, t)$ is
obtained similar to Case 2 of Lemma \ref{Lemma:c11}. Notice that, in this case we can always construct a Hamiltonian path $P$ in $G_1$ that contains a subpath $P_1$, as shown Fig. \ref{fig:s13e10}(b). The pattern for constructing a Hamiltonian path in $G_1$ is shown in Fig. \ref{fig:s13e10}(c).
\par Subcase 2.5.2.2. $n-l=4$. In this case, $s_x=d+k+1$ or $t_x=d+k+1$.
\par Subcase 2.5.2.2.1. $s_x,t_x\leq d+k+2$ and $[(s\neq (d+k+1,1)$ or $t\neq (d+k+2,l+1))$ or $(s\neq (d+k+1,n)$ or $t\neq (d+k+2,n-1))]$. This case is the same as Case 3 of Lemma \ref{Lemma:c11}. A simple check shows that $(G_1,s,t)$ is acceptable. In this case $(G_1,s,t)$ is in Subcase 1.1 or 2.2.3. The Hamiltonian path in $(C(m, n, k, l), s, t)$ is
obtained similar to Case 3 of Lemma \ref{Lemma:c11}. Notice that, since $n-l=4$ and $l=1$, we have $n=5$.
\par Subcase 2.5.2.2.1.1. $s= (d+k+1,1)$ and $t= (d+k+2,l+1)$. This case is the same as Subcase 2.4.
\par Subcase 2.5.2.2.1.2. $s=(d+k+1,n)$ and $t= (d+k+2,n-1)$. Let $\{G_1,G_2,G_3\}$  be a $C-$shaped separation (type IV) of $C(m,n,k,l)$ such that $G_1=L(m^{'},n,k,l)$, where $m^{'}=d+k$, $G_2=R_1(m^{''},n^{'})$, where $V(G_2)=\{d+k+1\leq x\leq m-1$ and $l+2\leq y\leq l+3\}$, and $G_3=C(m,n,k,l)\backslash (G_1+G_2)$. Let $s,p\in G_3$, $q,t\in G_2$, $w,z\in G_1$, $w$ and $p$ and $q$ and $z$ are adjacent, $p=(d+k+1,l+1)$, and $q=(d+k+1,n-1)$ (see Fig. \ref{fig:s13e10}(e)). It is known that $(G_3,s,p)$, $(G_2,q,t)$, and $(G_1,w,z)$ are acceptable. In this case, $(G_3,s,p)$ is in Subcase 5.1 of Lemma \ref{Lemma:c11}. The Hamiltonian path in $(C(m, n, k, l), s, t)$ is
obtained similar to Subcase 2.2.3. 
\par Subcase 2.5.2.2.2. $s_x=d+k+1$ and $t_x>d+k+2$. This case is the same as Subcase 2.5.2.2.1., where $s,p\in G_1$, $q,t\in G_2$, $q$ and $p$ are adjacent, and
$$p=
  \begin{cases}
   (d+k+2,n-1);   &  if\ [(m-m^{'}>2)\ or\ (m-m^{'}=2\ and\ t\neq (m,n-1))]\\
    (d+k+2,l+1);   &  if\ m-m^{'}=2,\ t=(m,n-1),\ and \ s\neq (d+k+1,1)\\
   \end{cases}$$
Since $d+k+2=even$ and $l+1=even$ or $n-1=even$, clearly $p$ is white. Thus, $(G_1,s,p)$ and $(G_2,q,t)$ are color-compatible. In this case, $G_2$ is even$\times$odd. Since $n=5$, $(G_2,q,t)$ is not in condition (F2). A simple check shows that $(G_2,q,t)$ is not in condition (F1). Therefore, $(G_2,q,t)$ is acceptable. Now, consider $(G_1,s,p)$. Since $d>1$, $n-l=4$, $c^{'}=2$, and $s_x,p_x>d+k$, it suffices to prove that $(G_1,s,p)$ is not in condition (F1). The condition (F1) holds, if $s=(d+k+1,1)$ and $p=(d+k+2,l+1)$. By the assumption, this is impossible, and hence $(G_1,s,p)$ is not condition (F1). So, $(G_1,s,p)$ is acceptable. In this case, $(G_1,s,p)$ is in Subcase 2.1 or 2.2.3. The Hamiltonian path in $(C(m, n, k, l), s, t)$ is
obtained similar to Case 5 of Lemma \ref{Lemma:c11}. Now let $m-m^{'}=2$, $s=(d+k+1,1)$, and $t=(m,n-1)$. This case is similar to Subcase 2.4.
\par Subcase 2.5.3. $s_x\leq d+k$ and $t_x>d+k$. This case is similar to Subcase 1.4.1, where $p=(d+k,l+1)$ if $s\neq (d+k,l+1)$; otherwise $p=(d+k,l+3)$.
 \end{proof}
\begin{thm} \label{Theorem:6x}
The cases that are mentioned in Lemmas \ref{Lemma:c11}$ -$\ref{Lemma:c13} include all possible cases that may occur in $(C(m,n,k,l),s,t)$.
\end{thm}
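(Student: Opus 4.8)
The plan is to prove the claim by a complete case analysis whose branching parameters are chosen so that every acceptable triple $(C(m,n,k,l),s,t)$ is forced into the hypothesis of at least one case of Lemma~\ref{Lemma:c11}, \ref{Lemma:c15}, or \ref{Lemma:c13}. The top-level split is the parity of the size $m\times n-k\times l$: a $C$-shaped grid graph is either even-sized or odd-sized, and these two possibilities are mutually exclusive and exhaustive. The odd-sized graphs are the domain of Lemma~\ref{Lemma:c13}, while the even-sized graphs must be shown to be covered jointly by Lemma~\ref{Lemma:c11} and Lemma~\ref{Lemma:c15}. Within each regime I would branch further on the parities of $m$ and $n$ (and, through $m=d+k+c$, on the parities of $d$ and $c$), on the bridge height $n-l$ measured against the thresholds $1,2,4,6$ that appear in the lemmas, and finally on the zone occupied by each of $s$ and $t$ --- the left arm ($x\le d$), the connecting bridge ($d<x\le d+k$), or the right arm ($x>d+k$) --- using the standing assumption $s_x\le t_x$ to cut the six ordered zone-pairs down to a short list.

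For the odd-sized branch I would first observe that the two cases of Lemma~\ref{Lemma:c13} are indexed precisely by $(n=\mathrm{odd},\,m=\mathrm{odd})$ or $(n=\mathrm{even})$ versus $(n=\mathrm{odd},\,m=\mathrm{even})$, and that these tile all parities of $(m,n)$. The colour-compatibility forced by oddness (both $s$ and $t$ carry the majority colour) then pins down the admissible parities of $d,c,k,l$ in each branch, exactly as recorded in the opening paragraph of that lemma's proof. It then remains to check that, inside each parity class, Subcases~1.1--1.4 and 2.1--2.5 run through every zone-pair for $s,t$ and every relevant value of $n-l$; the acceptability hypothesis is what licenses discarding the leftover configurations, since any zone-pair not appearing would have to trigger one of (F1), (F3), (F10)--(F18) and hence be non-acceptable by Theorem~\ref{Theorem:1h}.

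The even-sized branch is the delicate one, and the main point is to locate the exact gap that Lemma~\ref{Lemma:c15} is designed to fill. Cases~1--5 of Lemma~\ref{Lemma:c11} already dispose of every even-sized configuration except those in which $m\times n=\,$even$\times$odd with both $d$ and $c$ odd and $n-l$ an even number $\ge 4$: Case~1(b) only admits $n=\mathrm{odd}$ together with $d=\mathrm{even},\,c=\mathrm{even}$, Case~2 requires $k=1$, Case~3 requires $m=\mathrm{odd}$, and Case~4 reaches $n=\mathrm{odd}$ only through bridge heights $n-l=\mathrm{odd}$, so the even-bridge, odd-arm, even$\times$odd situation (including its straddling subcase $s_x\le d,\ t_x>d+k$, which Case~5(b) admits only when $n-l\le 2$) escapes all five. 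This residual family is verbatim the hypothesis of Lemma~\ref{Lemma:c15}, so the union is complete. I would make this airtight by tabulating, for each parity pattern of $(m,n,k)$ compatible with even size, which of Cases~1--5 of Lemma~\ref{Lemma:c11} applies, and flagging the single surviving pattern as the one routed to Lemma~\ref{Lemma:c15}.

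The hard part will be the bookkeeping at the seams: confirming that the inequality thresholds on $n-l$ (the boundaries $n-l=1,2$ and the ranges $\le 6$ versus $>6$ used inside Lemma~\ref{Lemma:c15}) leave no uncovered interval, and that the many position-and-parity side conditions attached to each case do not jointly exclude some acceptable triple. The clean way to discharge this is to argue contrapositively in tandem with Theorem~\ref{Theorem:1h}: I would show that any configuration escaping all the enumerated hypotheses must satisfy one of the forbidden conditions (F1), (F3), (F10)--(F18), so it fails to be acceptable and need not be covered. This reduces the exhaustiveness claim to verifying, seam by seam, that each boundary configuration is either caught by an adjacent case or is exactly one of the non-acceptable patterns --- routine but lengthy.
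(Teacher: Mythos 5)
Your proposal follows essentially the same route as the paper: a nested case analysis on the parity of $m\times n-k\times l$, then on the parities of $m$, $n$, $d$, $c$, the thresholds on $n-l$, and the zones occupied by $s$ and $t$, routing each configuration to a case of Lemma \ref{Lemma:c11}, \ref{Lemma:c15}, or \ref{Lemma:c13} and invoking Theorem \ref{Theorem:1h} to discard the configurations (e.g.\ even-sized, $d,c$ odd, $n-l=2$, both endpoints in the same zone) that no lemma covers --- exactly the paper's mechanism, including your correct identification of the residual even$\times$odd, $d,c$ odd, $n-l=\mathrm{even}\geq 4$ family as precisely the hypothesis of Lemma \ref{Lemma:c15}. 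The tabulation you defer as ``routine but lengthy'' is the entire body of the paper's proof, but your plan's structure and key observations match it.
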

\begin{proof}
Consider the following cases. 
\par
Case 1. $C(m,n,k,l)$ is even-sized.
\\
\indent \indent
Subcase 1.1. $(n=even)$ or $(n=odd$ and $[(m=odd)$ or $(m=even$ and $c$ and $d$ are $even)])$. \\
\indent \indent \indent Subcase 1.1.1. $(s_x,t_x\leq d+k)$, $(s_x,t_x>d+k)$, or $(d+1\leq s_x\leq d+k$ and $t_x>d+k)$. \\
\indent \indent \indent \indent Subcase 1.1.1.1. $n=even$. $(C(m,n,k,l),s,t)$ is in Case 1 or 2 of Lemma \ref{Lemma:c11}.\\
\indent \indent \indent \indent Subcase 1.1.1.2. $n=odd$. \\
\indent \indent \indent \indent \indent Subcase 1.1.1.2.1. $m=even$. $(C(m,n,k,l),s,t)$ is in Case 1 of Lemma \ref{Lemma:c11}.
\\
\indent \indent \indent \indent \indent Subcase 1.1.1.2.2. $m=odd$. $(C(m,n,k,l),s,t)$ is in Case 1, 2, or 3 of Lemma \ref{Lemma:c11}.\\
\indent \indent \indent Subcase 1.1.2. $s_x\leq d$ and $t_x>d+k$. $(C(m,n,k,l),s,t)$ is in Subcase 5.1 of Lemma \ref{Lemma:c11}.\\
\indent \indent Subcase 1.2. $n=odd$, $m=even$, and $[c=odd$ or $d=odd]$.
\\ \indent \indent \indent Subcase 1.2.1. $c,d$, and $n-l$ are $odd$.\\
\indent \indent \indent \indent Subcase 1.2.1.1. $s_x,t_x\leq d+1$ or $s_x,t_x>d+1$. $(C(m,n,k,l),s,t)$ is in Subcase 4.1.1 of Lemma \ref{Lemma:c11}.\\
\indent \indent \indent \indent Subcase 1.2.1.2. $s_x\leq d+1$ or $t_x>d+1$. $(C(m,n,k,l),s,t)$ is in Subcase 4.1.2 or 5.2 of Lemma \indent \indent \indent \indent \ref{Lemma:c11}.
\\ \indent \indent \indent Subcase 1.2.2. $n-l=odd$ and $[(d=odd$ and $c=even)$ or $(d=even$ and $c=odd)]$.
\\
\indent \indent \indent \indent Subcase 1.2.2.1. $(s_x,t_x\leq d+k)$, $(s_x,t_x>d+k)$, or $(d+1\leq s_x\leq d+k$ and $t_x>d+k)$. $(C(m,n,k,l),s,t)$ \indent \indent \indent \indent is in Subcase 4.2 of Lemma \ref{Lemma:c11}.\\
\indent \indent \indent \indent Subcase 1.2.2.2. $s_x\leq d$ and $t_x>d+k$. $(C(m,n,k,l),s,t)$ is in Case 5 of Lemma \ref{Lemma:c11}.
\\ \indent \indent \indent Subcase 1.2.3. $c=odd$, $d=odd$, and $n-l=even$.
\\ \indent \indent \indent \indent Subcase 1.2.3.1. $n-l=2$.\\
\indent \indent \indent \indent \indent Subcase 1.2.3.1.1. $(s_x,t_x\leq d+k)$, $(s_x,t_x>d+k)$, or $(d+1\leq s_x\leq d+k$ and $t_x>d+k)$. By \indent \indent \indent \indent \indent Theorem \ref{Theorem:1h}, this case does not occur.\\
\indent \indent \indent \indent \indent Subcase 1.2.3.1.2. $s_x\leq d$ or $t_x>d+k$. $(C(m,n,k,l),s,t)$ is in Subcase 5.2 of Lemma \ref{Lemma:c11}.
\\ \indent \indent \indent \indent Subcase 1.2.3.2. $n-l=4$ or $n-l=6$.\\
\indent \indent \indent \indent \indent Subcase 1.2.3.2.1. $s_x,t_x\leq d$ or $s_x,t_x>d+k$. $(C(m,n,k,l),s,t)$ is in Case 1 of  Lemma \ref{Lemma:c15}.\\
\indent \indent \indent \indent \indent Subcase 1.2.3.2.2. $(s_x\leq d$ and $t_x>d)$, $(d+1\leq s_x\leq d+k$ and $t_x>d+k)$, or $(d+1\leq s_x,t_x\leq d+k)$.\\
\indent \indent \indent \indent \indent \indent Subcase 1.2.3.2.2.1. $n-l=4$. $(C(m,n,k,l),s,t)$ is in Case 2 of Lemma \ref{Lemma:c15}.
\\
\indent \indent \indent \indent \indent \indent Subcase 1.2.3.2.2.2. $n-l=6$. $(C(m,n,k,l),s,t)$ is in Case  3 of Lemma \ref{Lemma:c15}.
\\ \indent \indent \indent \indent Subcase 1.2.3.3. $n-l>6$.
\\ \indent \indent \indent \indent \indent Subcase 1.2.3.3.1. $s_y,t_y>l+5$. $(C(m,n,k,l),s,t)$ is in Subcase 4.1 of Lemma \ref{Lemma:c15}.
\\ \indent \indent \indent \indent \indent Subcase 1.2.3.3.2. $s_y,t_y\leq l+5$. $(C(m,n,k,l),s,t)$ is in Subcase 4.2 of Lemma \ref{Lemma:c15}.
\\ \indent \indent \indent \indent \indent Subcase 1.2.3.3.3. $s_y\leq l+5$ and $t_y>l+5$ (or $t_y\leq l+5$ and $s_y>l+5)$. $(C(m,n,k,l),s,t)$ is in  \indent \indent \indent \indent \indent  Subcase 4.3 of Lemma \ref{Lemma:c15}.\par Case 2. $C(m,n,k,l)$ is odd-sized. \\
\indent \indent Subcase 2.1. $n=even$. \\
\indent \indent \indent Subcase 2.1.1. $(s_x,t_x\leq d+k)$, $(s_x,t_x>d+k)$, or $(d+1\leq s_x\leq d+k$ and $t_x>d+k)$. $(C(m,n,k,l),s,t)$ is \indent \indent \indent in Subcase 1.2 of Lemma \ref{Lemma:c13}.\\
\indent \indent \indent Subcase 2.1.2. $s_x\leq d$ and $t_x>d+k$. $(C(m,n,k,l),s,t)$ is in Subcase 1.4 of Lemma \ref{Lemma:c13}.\\
\indent \indent Subcase 2.2. $n=odd$. \\
\indent \indent \indent Subcase 2.2.1. $m=odd$. \\
\indent \indent \indent \indent Subcase 2.2.1.1. $l=even$.
 \\
 \indent \indent \indent \indent \indent Subcase 2.2.1.1.1. $(s_x,t_x\leq d+k)$, $(s_x,t_x>d+k)$, or $(d+1\leq s_x\leq d+k$ and $t_x>d+k)$. \indent \indent \indent \indent \indent $(C(m,n,k,l),s,t)$ is in Subcase 1.1 of Lemma \ref{Lemma:c13}.\\
 \indent \indent \indent \indent \indent Subcase 2.2.1.1.2. $s_x\leq d$ and $t_x>d+k$. $(C(m,n,k,l),s,t)$ is in Subcase 1.4 of Lemma \ref{Lemma:c13}.\\
 \indent \indent \indent \indent Subcase 2.2.1.2. $l=odd$. Notice that, in this case, $d=odd$ and $c=even$. By symmetry, the case \indent \indent \indent \indent $d=even$ and $c=odd$ has been removed.
 \\
  \indent \indent \indent \indent \indent Subcase 2.2.1.2.1. $s_x,t_x\leq d+k$. $(C(m,n,k,l),s,t)$ is in Subcase 1.2 of Lemma \ref{Lemma:c13}.\\
   \indent \indent \indent \indent \indent Subcase 2.2.1.2.2. $s_x,t_x>d+k$. $(C(m,n,k,l),s,t)$ is in Subcase 1.3 of Lemma \ref{Lemma:c13}.\\
    \indent \indent \indent \indent \indent Subcase 2.2.1.2.3. $s_x\leq d+k$ and $t_x>d+k$. $(C(m,n,k,l),s,t)$ is in Subcase 1.4 of Lemma \ref{Lemma:c13}.\\
 \indent \indent \indent Subcase 2.2.2. $m=even$. Notice that, in this case, $d=odd$ and $c=even$. By symmetry, the case $d=even$ \indent \indent \indent and $c=odd$ has been removed.\\
  \indent \indent \indent \indent Subcase 2.2.2.1. $(l>1$ and $c>2)$ or $(c\geq 2)$. \\
 \indent \indent \indent \indent \indent Subcase 2.2.2.1.1. $(s_x,t_x\leq d+k)$, $(s_x,t_x>d+k$ and $s_y,t_y>l)$ or $(s_x\leq d+k$, $t_x>d+k$, and $t_y>l)$. \indent \indent \indent \indent \indent $(C(m,n,k,l),s,t)$ is in Subcase 2.1 or 2.2 of Lemma \ref{Lemma:c13}.\\
 \indent \indent \indent \indent \indent Subcase 2.2.2.1.2. $s_x,t_x>d+k$ and $s_y,t_y\leq l+1$. $(C(m,n,k,l),s,t)$ is in Subcase 2.3 of Lemma \indent \indent \indent \indent \indent  \ref{Lemma:c13}.  \\
 \indent \indent \indent \indent \indent Subcase 2.2.2.1.3. $(t_x>d+k$, $t_y\leq l$, and $[(s_x\leq d+k)$ or $(s_x>d+k$ and $s_y>l+1)])$ or \indent \indent \indent \indent \indent  $(s_x,t_x>d+k$, $s_y\leq l$, and $t_y>l)$. $(C(m,n,k,l),s,t)$ is in Subcase 2.4 of Lemma \ref{Lemma:c13}. \\
 \indent \indent \indent \indent Subcase 2.2.2.2. $l=1$ and $c>2$. \\
  \indent \indent \indent \indent \indent Subcase 2.2.2.2.1. $s_x,t_x\leq d+k$. $(C(m,n,k,l),s,t)$ is in Subcase 2.5.1 of Lemma \ref{Lemma:c13}. \\
    \indent \indent \indent \indent \indent Subcase 2.2.2.2.2. $s_x,t_x>d+k$. $(C(m,n,k,l),s,t)$ is in Subcase 2.5.2 of Lemma \ref{Lemma:c13}.\\
      \indent \indent \indent \indent \indent Subcase 2.2.2.2.3. $s_x\leq d+k$ and $t_x>d+$.$(C(m,n,k,l),s,t)$ is in Subcase 2.5.3 of Lemma \ref{Lemma:c13}.
\par
 All possible cases are exhausted, and the proof of Theorem \ref{Theorem:6x} is completed.
\end{proof}
\par
By Theorem \ref{Theorem:1h} and Lemmas \ref{Lemma:c11}$-$ \ref{Lemma:c13}, we have the following result:
\begin{thm}
$C(m,n,k,l)$ has a Hamiltonian $(s,t)-$path if and only if $(C(m,n,k,l),s,t)$ is acceptable.
\end{thm}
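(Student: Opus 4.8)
The plan is to prove the biconditional by handling each implication separately, since both directions have essentially already been assembled in the preceding results; the final theorem is a synthesis rather than a fresh argument. First I would dispatch the \emph{necessity} direction (``only if''): if $C(m,n,k,l)$ has a Hamiltonian $(s,t)$-path, then $(C(m,n,k,l),s,t)$ is acceptable. This is precisely the content of Theorem~\ref{Theorem:1h}, which was proved by contrapositive (showing that violating color-compatibility or any of conditions (F1), (F3), (F10)--(F18) obstructs a Hamiltonian path). So for this half I would simply invoke Theorem~\ref{Theorem:1h} directly.

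Next I would establish \emph{sufficiency} (``if''): assume $(C(m,n,k,l),s,t)$ is acceptable and produce a Hamiltonian $(s,t)$-path. The key organizing tool is Theorem~\ref{Theorem:6x}, which guarantees that every acceptable instance falls into at least one of the cases enumerated in Lemmas~\ref{Lemma:c11}, \ref{Lemma:c15}, and \ref{Lemma:c13}. Concretely, I would split on parity: if $C(m,n,k,l)$ is even-sized, then either the generic even case of Lemma~\ref{Lemma:c11} applies, or the special configuration $m\times n=\text{even}\times\text{odd}$ with $c,d$ odd and $n-l$ even $\geq 4$ forces us into Lemma~\ref{Lemma:c15}; if $C(m,n,k,l)$ is odd-sized, Lemma~\ref{Lemma:c13} applies. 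In each of these lemmas, under the acceptability hypothesis, an \emph{acceptable separation} is constructed and shown to carry a Hamiltonian path (typically by decomposing into an $L$-shaped or rectangular piece with a Hamiltonian $(s,t)$-path via Theorems~\ref{Theorem:6t} and \ref{Theorem:1a}, combined with a Hamiltonian cycle in a complementary even-sized piece via Lemmas~\ref{Lemma:1m} and \ref{Lemma:6t1}, merged along a pair of parallel boundary edges). Thus acceptability yields a Hamiltonian $(s,t)$-path, completing this direction.

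The main obstacle is not the logical assembly of the final theorem, which is routine once the pieces are in place, but rather the \emph{case-exhaustiveness} supplied by Theorem~\ref{Theorem:6x}: one must be certain that the partition into even-sized, odd-sized, and the distinguished even$\times$odd subcase covers every acceptable instance with no gaps and that the parity bookkeeping (the interplay of $m,n,k,l,c,d$ parities together with the positions of $s$ and $t$ relative to $d$ and $d+k$) is complete. I would therefore treat Theorem~\ref{Theorem:6x} as the load-bearing step and, in writing the final proof, explicitly map each parity/position branch to the lemma and case that resolves it, so that the reduction is transparent. The remaining care is to confirm that in each invoked lemma the acceptable separation's Hamiltonian subpaths actually recombine into a single $(s,t)$-path---a point the lemmas already verify by exhibiting at least one admissible pair of parallel edges for the cycle-path merge, so no new construction is required here.

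\begin{pf}
The necessity is exactly Theorem~\ref{Theorem:1h}: if $(C(m,n,k,l),s,t)$ is Hamiltonian, then it is acceptable. For sufficiency, suppose $(C(m,n,k,l),s,t)$ is acceptable. By Theorem~\ref{Theorem:6x}, the instance $(C(m,n,k,l),s,t)$ falls into one of the cases treated in Lemmas~\ref{Lemma:c11}, \ref{Lemma:c15}, or \ref{Lemma:c13}. In each such case the corresponding lemma furnishes an acceptable separation of $(C(m,n,k,l),s,t)$ that admits a Hamiltonian $(s,t)$-path. Hence $C(m,n,k,l)$ has a Hamiltonian $(s,t)$-path. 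Combining the two directions establishes the equivalence.
\end{pf}
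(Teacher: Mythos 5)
Your proof is correct and follows essentially the same route as the paper: necessity is exactly Theorem~\ref{Theorem:1h}, and sufficiency comes from Lemmas~\ref{Lemma:c11}--\ref{Lemma:c13} together with the case-exhaustiveness of Theorem~\ref{Theorem:6x}. If anything, you are slightly more explicit than the paper, which cites only Theorem~\ref{Theorem:1h} and the three lemmas and leaves the appeal to Theorem~\ref{Theorem:6x} implicit.
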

\par In the following theorem, we state the main result of this paper:
\begin{thm}\label{Lemma:6tr}
In an acceptable $(L(m,n,k,l),s,t)$, a Hamiltonian $(s,t)-$path can be found in linear time.
\end{thm}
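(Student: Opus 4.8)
The plan is to recast the constructive content behind the existence characterization as an explicit algorithm and then certify that every step runs in linear time; since the statement concerns $L(m,n,k,l)$, the correctness of whatever the algorithm outputs is already guaranteed by Theorem~\ref{Theorem:6t}, so the entire task reduces to an implementation-and-timing argument. (Indeed, for $L$-shaped graphs this is exactly the assertion of Theorem~\ref{Theorem:6t2}; the point of the proof is to spell out why the separation-based construction is linear.) First I would read the parameters $m,n,k,l$ and the coordinates of $s$ and $t$, and test color-compatibility together with each of the forbidden patterns (F1) and (F3)--(F9). Each test inspects only a constant number of arithmetic relations among these quantities, so the acceptability check costs $O(1)$ beyond reading the $O(mn)$-size input.

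Next I would describe the recursive construction. If $(L(m,n,k,l),s,t)$ is acceptable, the same parameters determine in $O(1)$ a separation of Definition~\ref{Definition:d2} --- vertical, horizontal, or $L$-shaped --- into a bounded number of pieces, each of which is either a rectangular grid graph, a smaller $L$-shaped grid graph, or an even-sized component slated to receive a Hamiltonian cycle. Rectangular pieces with their designated endpoints are solved by the linear-time routine underlying Theorem~\ref{Theorem:1a}; even-sized rectangular or $L$-shaped pieces receive Hamiltonian cycles via Lemma~\ref{Lemma:1m} or Lemma~\ref{Lemma:6t1}; and any residual $L$-shaped subproblem is handed to the same procedure recursively. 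The pieces are then stitched together along the prescribed pairs of parallel boundary edges, each merge rewiring only $O(1)$ edges.

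For the time bound I would argue that every separation partitions the vertex set into a bounded number of blocks and strictly reduces the structural complexity of any recursive $L$-shaped call, so the recursion fans out only a constant number of times before bottoming out in rectangular pieces. Summing the per-piece costs gives $\sum_i O(|G_i|) = O(mn)$, and the constant number of $O(1)$ merges contributes nothing asymptotically; hence the whole algorithm is linear in $|V(L(m,n,k,l))|$.

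The hard part will be the bookkeeping, not the arithmetic: I must verify, branch by branch, that each separation chosen actually yields \emph{acceptable} subproblems --- so that the recursive and rectangular calls are guaranteed to return a path --- and that at least one pair of parallel boundary edges is always present for each merge. These are precisely the facts established case-by-case in the sufficiency analysis behind Theorem~\ref{Theorem:6t}, so once I confirm that branch selection, separation, and each merge are $O(1)$ operations and that no branch issues more than a constant number of subcalls, linearity follows immediately by summing the per-piece work.
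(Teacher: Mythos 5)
There is a genuine gap, and it starts with the target of the proof. The ``$L(m,n,k,l)$'' in the statement is evidently a typo for ``$C(m,n,k,l)$'': the theorem is announced as the main result of this paper, and the paper's proof explicitly constructs a Hamiltonian $(s,t)$-path in $C(m,n,k,l)$ in three steps --- divide $C(m,n,k,l)$ into subgraphs according to Lemmas \ref{Lemma:c11}--\ref{Lemma:c13} in $O(1)$ time, solve each rectangular or $L$-shaped piece by the known linear-time algorithms of \cite{CST:AFAFCHPIM} and \cite{991}, and merge the resulting paths and cycles along parallel edges in $O(1)$ time. Read literally, the statement you proved is word-for-word Theorem \ref{Theorem:6t2}, already quoted from \cite{991}; your proposal even notes this parenthetically and then invokes Theorem \ref{Theorem:6t2} in support of its own conclusion, which makes the argument circular as a proof of the literal statement and empty as a contribution. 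More importantly, nothing in your proposal engages the actual content of this theorem: you work only with the $L$-shaped machinery (conditions (F1), (F3)--(F9) and the separations of Definition \ref{Definition:d2}) and never touch the $C$-shaped conditions (F10)--(F18), the up-to-five-piece $C$-shaped separations of Definition \ref{Definition:d3}, the acceptability certifications carried out case by case inside Lemmas \ref{Lemma:c11}, \ref{Lemma:c15}, and \ref{Lemma:c13}, or the special merge operations those lemmas require (e.g., absorbing the degenerate $1$-rectangle pieces $G_2$ or $G_4$ by replacing a boundary edge of an already-built path with a two- or four-vertex detour). Those lemmas are precisely what make Step 1 and Step 3 of the paper's algorithm well defined and $O(1)$; a proof of this theorem that never cites them has skipped the theorem.

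Even granting the literal $L$-shaped reading, the timing argument has a soft spot: you assert that each separation ``strictly reduces the structural complexity'' of the recursive $L$-shaped call so that the recursion ``fans out only a constant number of times,'' but you give no measure and no proof of this, and it is not automatic --- a separation of an $L$-shaped graph can produce another $L$-shaped graph of nearly the same size. The paper avoids any such recursion analysis because its division is essentially single-level: the case analysis of Lemmas \ref{Lemma:c11}--\ref{Lemma:c13} hands every piece directly to a rectangular or $L$-shaped solver, and in the few subcases where a piece is itself $C$-shaped (e.g., Case 1 of Lemma \ref{Lemma:c15}, which reduces to Subcase 4.1.1 of Lemma \ref{Lemma:c11}) the further reduction is explicit and of bounded depth. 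A correct write-up should either reproduce that explicit bounded-depth structure or supply the monotone complexity measure your recursion claim presupposes; as written, linearity does not follow. (A minor additional point: locating a path edge with a parallel edge in the adjacent cycle is not an $O(1)$ operation without extra bookkeeping, though a boundary scan keeps the total cost linear, so this affects only the constants claimed, not the theorem.)
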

\begin{proof}
 The algorithm construct a Hamiltonian $(s,t)-$path in $C(m,n,k,l)$ via the following three steps.
 \par $Step \ 1:$ It divides $C(m,n,k,l)$ into some grid subgraphs, by Lemmas \ref{Lemma:c11}$-$\ref{Lemma:c13}, in $O(1)$ time.\par $Step\ 2:$ It finds a Hamiltonian path or cycle in these grid subgraphs by algorithm \cite{CST:AFAFCHPIM} or \cite{991}. This  step takes \\ \indent \indent \indent linear time.\par $Step\ 3:$ It combines Hamiltonian paths and cycles for constructing a Hamiltonian $(s,t)-$path, by Lemmas \ref{Lemma:c11}$-$\ref{Lemma:c13}, \\ \indent \indent \indent in $O(1)$ time. \par Thus, the algorithm has a linear-time complexity.
\end{proof}
\section{Conclusion} \label{ConclusionSect}
We gave necessary and sufficient conditions for the existence of a Hamiltonian
path in $C-$shaped grid graphs between two given vertices, which are a special type of solid grid graphs.
The Hamiltonian path problem is NP-complete in general grid graphs \cite{IPS:HPIGG}, it
remains open if the problem is polynomially solvable in solid grid graphs. Further study can be done on the Hamiltonian path problem in other
special classes of graphs, in order to find way to solve the problem for solid grid graphs.









\end{document}